\definecolor{LinkColor}{HTML}{697B8C}
\definecolor{CiteColor}{HTML}{8C6B69}
\definecolor{URLColor}{HTML}{898C69}
\newcommand{\dummylabel}[2]{\def\@currentlabel{#2}\label{#1}}
\theoremstyle{plain}
\newtheorem{theorem}{Theorem}[section]   
\newtheorem*{theorem*}{Theorem}
\newtheorem{proposition}[theorem]{Proposition}
\newtheorem{corollary}[theorem]{Corollary}
\newtheorem{lemma}[theorem]{Lemma}
\theoremstyle{definition}
\newtheorem{definition}[theorem]{Definition}
\newtheorem{example}[theorem]{Example}
\newtheorem{construction}[theorem]{Construction}
\theoremstyle{remark}
\newtheorem{remarknumbered}[theorem]{Remark}
\newtheorem*{remark}{Remark}
\crefname{construction}{Construction}{Constructions}
\newcommand{\Opens}{\mathop{\mathcal{O}\mspace{-4mu}}}
\newcommand{\Powerset}{\mathop{\mathcal{P}\mspace{-1mu}}}
\DeclareMathOperator{\im}{im}
\DeclareMathOperator{\id}{id}
\newcommand{\sqleq}{\sqsubseteq} 
\newcommand{\sqgeq}{\sqsupseteq}
\newcommand{\op}{{\mathrm{op}}}
\newcommand{\refines}{\Subset}
\renewcommand{\leq}{\leqslant}
\renewcommand{\geq}{\geqslant}
\renewcommand{\epsilon}{\varepsilon}
\newcommand{\Leq}{\mathrel{\trianglelefteqslant}}
\newcommand{\LeqU}{\mathrel{\trianglelefteqslant_\mathrm{U}}}
\newcommand{\up}{\mathop{\uparrow}\!}
\newcommand{\down}{\mathop{\downarrow}\!}
\newcommand{\tridown}{\text{\rotatebox[origin=c]{90}{$\triangleleft$}}}
\newcommand{\tworightarrow}{\mathrel{\text{\two@rightarrow}}}
\newcommand{\two@rightarrow}{%
	\sbox0{$\m@th\rightarrow$}%
	\smash{\rlap{\kern0.1\wd0\clipbox{{.3\width} {-\height} 0pt {-\height}}{$\m@th\rightarrow$}}}%
	$\m@th\rightarrow$%
}
\newcommand{\twoleftarrow}{\mathrel{\text{\two@leftarrow}}}
\newcommand{\two@leftarrow}{%
	\sbox0{$\m@th\leftarrow$}%
	\smash{\rlap{\kern0.1\wd0\clipbox{{-.1\width} {-\height} 3pt {-\height}}{$\m@th\leftarrow$}}}%
	$\m@th\leftarrow$%
}
\newcommand{\preUp}{\scalebox{1.2}[1.05]{\rotatebox[origin=c]{90}{$\rightarrowtriangle$}}}
\newcommand{\preDown}{\scalebox{1.2}[1.05]{\rotatebox[origin=c]{-90}{$\rightarrowtriangle$}}}
\newcommand{\Up}{\text{\preUp}}
\newcommand{\Down}{\text{\preDown}}
\newcommand{\Upsub}[1]{\Up_{\mspace{-3.5mu}#1}}
\newcommand{\Downsub}[1]{\Down_{\mspace{-1.5mu}#1}}
\newcommand{\Cov}{\mathrm{Cov}}
\newcommand{\CovLeq}{\Cov_{\Leq}}
\newcommand{\Covcaus}{\mathrm{Caus}}
\newcommand{\Covchron}{\mathrm{Chron}}
\newcommand{\cov}{\triangleleft}
\newcommand{\sqleqdown}{\mathrel{\sqsubset_\tridown}}
\newcommand{\Dcaus}{D_{\caus}}
\newcommand{\Dchron}{D_{\chron}}
\newcommand{\DCaus}{D_\Covcaus}
\newcommand{\DChron}{D_\Covchron}
\newcommand{\DLeq}{D_{\Leq}}
\newcommand{\cat}[1]{\mathbf{#1}}
\newcommand{\Set}{\cat{Set}}
\newcommand{\Top}{\cat{Top}}
\newcommand{\Loc}{\cat{Loc}}
\DeclareMathOperator{\loc}{Loc}
\DeclareMathOperator{\Kleisli}{Kl}
\newcommand{\caus}{\mathrel{\preccurlyeq}}
\DeclareFontFamily{U}{matha}{\hyphenchar\font45}
\DeclareFontShape{U}{matha}{m}{n}{
	<5> <6> <7> <8> <9> <10> gen * matha
	<10.95> matha10 <12> <14.4> <17.28> <20.74> <24.88> matha12
}{}
\DeclareSymbolFont{matha}{U}{matha}{m}{n}
\DeclareMathSymbol{\prechron}{3}{matha}{'316} 
\newcommand{\chron}{\mathrel{\prechron}}
\newcommand{\newparallel}{\mathrel{\mathpalette\new@parallel\relax}}
\newcommand{\new@parallel}[2]{%
	\begingroup
	\sbox\z@{$#1T$}
	\resizebox{!}{\ht\z@}{\raisebox{\depth}{$\m@th#1/\mkern-5mu/$}}%
	\endgroup
}
\definecolor{HighlightColor}{HTML}{DDFFCD}
\newcommand{\highlighttikzcd}[1]{\tcbhighmath{#1}}
\title[Causal Coverage]{Causal Coverage in\\Ordered Locales and Spacetimes}
\author{Chris Heunen}
\email{chris.heunen@ed.ac.uk}
\address{School of Informatics, University of Edinburgh, Edinburgh EH8 9AB, United Kingdom}
\author{Nesta van der Schaaf}
\email{nesta.van-der-schaaf@inria.fr}
\address{QuaCS, INRIA de Saclay, Laboratoire Méthodes Formelles,
	91190 Gif-sur-Yvette, France}
\thanks{This paper extends Section~4.5 and Chapter~8 of the second-named author's PhD thesis~\cite{schaaf2024TowardsPointFreeSpacetimes}.\\
This work has been partially funded by the French National Research Agency (ANR) within the framework of ``Plan France 2030'', under the research projects EPIQ ANR-22-PETQ-0007, HQI-Acquisition ANR-22-PNCQ-0001 and HQI-R\&D ANR-22-PNCQ-0002.}
\begin{document}
	\begin{abstract}
		We develop relativistic causality theory in the setting of point-free topology by introducing a notion of \emph{causal coverage} in ordered locales, generalising their canonical coverage relation to incorporate causal structure. 
		This improves Christensen and Crane's construction of `causal sites'~\cite{christensen2005CausalSitesQuantum}. We connect to sheaf theory by showing that causal coverages can be interpreted as a generalised  Grothendieck topology, and the sheaf condition as a type of deterministic time evolution. To develop these notions, we introduce and study \emph{parallel ordered locales}.
		Causal coverage naturally induces a notion of \emph{domain of dependence}. Comparing the localic and curve-wise definitions in spacetimes, the localic domains strictly contain the classical ones.
	\end{abstract}
	\maketitle


\section{Introduction}
\label{section:introduction}
This paper develops notions from relativistic causality theory in the point-free setting of \emph{locales}: a notion of \emph{causal coverage} (in \cref{section:causal coverages ordered locales}), which gives rise to a generalised notion of \emph{domain of dependence} \cite{geroch1970DomainofDependence} (in \cref{section:domains of dependence}). We compare the point-free and classical notions in the setting of smooth spacetimes (in \cref{section:domains of dependence in spactime}). Before describing these notions intuitively, let us motivate them.

\subsubsection*{Point-free physics}

There is a philosophical case for a point-free foundation of the theory of spacetimes~\cite{forrest1996OntologyTopologyTheory,arntzenius2003QuantumMechanicsPointless,schaaf2024TowardsPointFreeSpacetimes}. If spacetime is modeled locally by the Euclidean space $\mathbb{R}^4$, then determining the precise coordinates of some point ${x\in\mathbb{R}^4}$ would require infinite measurement precision, which is unfeasible. As the mathematical foundations of  physical theories ideally only describe entities that are part of the theory's empirical content, a point-free foundation of spacetimes follows naturally. Quantum gravity research endorses this position. Isham~\cite[p59]{isham1990IntroductionGeneralTopology}:
\begin{quote}\small
	``the most important feature of space is not the points which it contains but rather the open subsets and the lattice relations between them. But then, since, physically speaking, a `point' is a most peculiar concept anyway, why not drop it altogether and deal directly with frames/locales?''
\end{quote}
Sorkin, originator of causal set theory~\cite{bombelli1987SpacetimeCausalSet}, writes similarly~\cite[p927]{sorkin1991FinitarySubstituteContinuous}:
\begin{quote}\small
	``From an `operational' perspective, an individual point of~[a topological space]~$S$ is a very ideal limit of what we can directly measure. A much better correlate of a single `position-determination' would probably be an open subset of $S$. Moreover, even for continuum physics, the individual points (or `events') of $S$ exist only as carriers for the topology, and thereby also for higher-level constructs such as the differentiable structure and the metric and `matter' fields: not the points per se, but only this kind of relation involving them has physical meaning.''
\end{quote}

In a quantum mechanical description of a particle in spacetime, the points are similarly immaterial: as individual points in spacetime have Lebesgue measure zero, the probability of finding a particle there will always be zero, irrespective of the wave function. It only makes sense to assign probabilities to \emph{regions}~\cite{forrest1996OntologyTopologyTheory}. On a technical level, this corresponds to the fact that the complete Boolean algebra of measurable subsets of the spacetime, modulo null sets, has no~atoms~\cite[\S 2]{arntzenius2003QuantumMechanicsPointless}. The problem persists in quantum field theory~\cite[\S I.5.2]{haag1996LocalQuantumPhysics}:
\begin{quotation}\small
	``There is another problem which has to be faced. The quantum field at a point cannot be an honest observable. Physically this appears evident because a measurement at a point would necessitate infinite energy.''
\end{quotation}
In fact, algebraic quantum field theory has several no-go theorems restricting spacetime points~\cite[\S 6]{halvorson2007AlgebraicQuantumField}. For instance, under translation covariance, the canonical position and momentum fields commute at any spacetime point.

\subsubsection*{Point-free mathematics}

Treating spacetimes as point-free spaces also opens up interesting connections to other fields of mathematics and computer science not typically associated with differential geometry or causality theory, such as lattice theory, logic, concurrency theory, and, as this article will show, sheaf theory.

\emph{Locales} are a point-free generalisation of topological spaces~\cite{johnstone1982StoneSpacesa}. Whereas a topological space consists of a topology $\Opens S$ on a specified set of points $S$, a locale $X$ is solely defined by a lattice $\Opens X$ of abstract `open regions'. To be precise,~$\Opens X$ is a \emph{frame:} it has finite meets ($\wedge$, intersections) and arbitrary joins ($\bigvee$, unions), making it a complete lattice, and finite meets distribute over joins. 
We denote the inclusion order on these abstract regions by $\sqleq$. Of course, any topological space $S$ becomes a locale by letting the abstract open regions to be the actual open subsets, meets be intersections, joins be unions, and $\sqleq$ be inclusion. But there are locales that do not arise this way, even locales with no points at all. The two notions are related by a well-known adjunction \begin{tikzcd}[cramped]
	\Top & \Loc
	\arrow[""{name=0, anchor=center, inner sep=0}, shift left=2, from=1-1, to=1-2]
	\arrow[""{name=1, anchor=center, inner sep=0}, shift left=2, from=1-2, to=1-1]
	\arrow["\dashv"{anchor=center, rotate=-90}, draw=none, from=0, to=1]
\end{tikzcd}
between the category of topological spaces and continuous functions and the category of locales and their morphisms~\cite[Chapter~IX]{maclane1994SheavesGeometryLogic}. Not just from a physical point of view, but also mathematically, and especially constructively, locales `behave better' than topological spaces~\cite{johnstone1983PointPointlessTopology}. For more technical background on locales, see~\cite{picado2012FramesLocalesTopology}.

\subsubsection*{Point-free causality}

Like topological spaces generalise to locales, spaces equipped with a preorder---thought of as modelling causality---generalise to \emph{ordered locales} (recalled in \cref{section:ordered locales}). 
An \emph{ordered topological space} $(S,\leq)$ is simply a topological space $S$ equipped with a preorder $\leq$ on the underlying set of points, that is, a transitive ($x\leq y\leq z$ implies $x\leq z$) and reflexive ($x\leq x$) relation. Motivating examples are the causality relations $\caus$ in smooth spacetimes $M$, where $x \caus y$ holds exactly when $x$ and $y$ are connected by a smooth future directed causal curve. The chronology relation $\chron$ is defined similarly but using timelike curves instead. For more detail, see~\cite{penrose1972TechniquesDifferentialTopology,minguzzi2019LorentzianCausalityTheory,landsman2021FoundationsGeneralRelativity}. 

\begin{figure}
	\centering
	\definecolor{cffffdc}{RGB}{255,255,220}
\definecolor{cefffff}{RGB}{239,255,255}
\definecolor{cededed}{RGB}{237,237,237}
\definecolor{c002297}{RGB}{0,34,151}
\definecolor{c139700}{RGB}{19,151,0}
\definecolor{cffebdc}{RGB}{255,235,220}
\definecolor{ce0ffdc}{RGB}{224,255,220}
\definecolor{c001764}{RGB}{0,23,100}
\definecolor{c640000}{RGB}{100,0,0}
\definecolor{c644700}{RGB}{100,71,0}
\definecolor{c0d6400}{RGB}{13,100,0}
\definecolor{grey}{HTML}{F6F6F6}
\definecolor{yellow}{HTML}{FFFFEE}
\definecolor{blue}{HTML}{F7FFFF}
\definecolor{CiteColor}{HTML}{8C6B69}

\def \globalscale {.9000000}
\begin{tikzpicture}[y=1pt, x=1pt, yscale=\globalscale,xscale=\globalscale, every node/.append style={scale=\globalscale}, inner sep=0pt, outer sep=0pt]
	\begin{scope}[blend group=multiply]
		\path[fill=yellow,line cap=butt,line join=miter,line 
		width=1.0pt,miter limit=4.0] (188.824, 42.656) -- (223.264, 102.006) -- 
		(123.982, 102.006) -- (156.461, 45.751) -- cycle;

		\path[fill=blue,line cap=butt,line join=miter,line 
		width=1.0pt,miter limit=4.0] (156.461, 91.458) -- (123.982, 35.203) -- 
		(223.264, 35.203) -- (190.953, 91.4) -- cycle;
	\end{scope}
	
	\begin{scope}[blend group=multiply]
		\path[fill=grey,line cap=butt,line join=miter,line 
		width=1.0pt,miter limit=4.0] (21.548, 42.235) -- (56.057, 102.006) -- 
		(-12.961, 102.006) -- cycle;

		\path[fill=grey,line cap=butt,line join=miter,line 
		width=1.0pt,miter limit=4.0] (25.064, 94.974) -- (-9.445, 35.203) -- (59.573, 
		35.203) -- cycle;
	\end{scope}
	
	\begin{scope}[decoration={
			markings,
			mark=between positions .5 and .95 step 30pt with {\arrow[line width = .75pt]{>}}}
		]
		\path[postaction={decorate},draw=c002297,line cap=butt,line join=miter,line width=1.0pt,miter 
		limit=4.0] (21.548, 42.235).. controls (21.548, 42.235) and (19.66, 51.593) ..
		(23.176, 62.141).. controls (26.692, 72.688) and (26.603, 78.293) .. (26.603,
		83.469).. controls (26.603, 88.644) and (25.064, 94.974) .. (25.064, 94.974);
	\end{scope}

	\path[fill=black,line width=1.0pt,dash pattern=on 2.0pt off 1.0pt] (21.548, 
	42.235) circle (3.516pt);

	\path[fill=black,line width=1.0pt,dash pattern=on 2.0pt off 1.0pt] (25.064, 
	94.974) circle (3.516pt);

	\path[draw=black,fill=cefffff,line cap=butt,line join=miter,line 
	width=1.0pt,miter limit=4.0] (155.099, 88.743).. controls (157.407, 94.133) 
	and (165.07, 94.32) .. (170.524, 94.974).. controls (177.25, 95.781) and 
	(187.042, 98.191) .. (190.637, 91.984).. controls (192.266, 89.172) and 
	(190.396, 84.268) .. (187.793, 82.512).. controls (182.036, 78.628) and 
	(174.413, 85.107) .. (167.646, 85.628).. controls (163.819, 85.922) and 
	(159.273, 83.239) .. (156.134, 85.628).. controls (155.3, 86.263) and (154.67,
	87.742) .. (155.099, 88.743) -- cycle;

	\path[draw=black,fill=cffffdc,line cap=butt,line join=miter,line 
	width=1.0pt,miter limit=4.0] (169.726, 55.563).. controls (174.772, 56.359) 
	and (180.859, 60.479) .. (184.939, 57.406).. controls (189.384, 54.06) and 
	(191.578, 44.805) .. (187.589, 40.929).. controls (182.431, 35.918) and 
	(173.555, 44.29) .. (166.396, 44.966).. controls (163.45, 45.244) and 
	(159.784, 43.062) .. (157.519, 44.966).. controls (155.461, 46.697) and 
	(154.608, 50.698) .. (156.133, 52.914).. controls (158.75, 56.716) and 
	(165.166, 54.843) .. (169.726, 55.563) -- cycle;

	\path[draw=black,line cap=butt,line join=miter,line width=0.5pt,miter 
	limit=4.0,dash pattern=on 0.5pt off 2.0pt] (21.548, 42.235) -- (56.057, 
	102.006);

	\path[draw=black,line cap=butt,line join=miter,line width=0.5pt,miter 
	limit=4.0,dash pattern=on 0.5pt off 2.0pt] (21.548, 42.235) -- (-12.961, 
	102.006);

	\path[draw=black,line cap=butt,line join=miter,line width=0.5pt,miter 
	limit=4.0,dash pattern=on 0.5pt off 2.0pt] (-9.445, 35.203) -- (25.064, 
	94.974) -- (59.573, 35.203);

	\path[draw=black,line cap=butt,line join=miter,line width=0.5pt,miter 
	limit=4.0,dash pattern=on 0.5pt off 2.0pt] (156.461, 45.751) -- (123.982, 
	102.006);

	\path[draw=black,line cap=butt,line join=miter,line width=0.5pt,miter 
	limit=4.0,dash pattern=on 0.5pt off 2.0pt] (188.824, 42.656) -- (223.264, 
	102.308);

	\path[draw=black,line cap=butt,line join=miter,line width=0.5pt,miter 
	limit=4.0,dash pattern=on 0.5pt off 2.0pt] (156.461, 91.458) -- (123.982, 
	35.203);

	\path[draw=black,line cap=butt,line join=miter,line width=0.5pt,miter 
	limit=4.0,dash pattern=on 0.5pt off 2.0pt] (190.637, 91.984) -- (223.264, 
	35.472);

	\path[draw=CiteColor,line cap=round,line join=miter,line width=1.001pt,miter 
	limit=4.0] (67.255, 68.162).. controls (67.255, 68.162) and (71.953, 68.523) 
	.. (74.287, 68.307).. controls (76.671, 68.087) and (78.945, 66.538) .. 
	(81.319, 66.847).. controls (83.918, 67.185) and (85.73, 70.363) .. (88.351, 
	70.363).. controls (90.971, 70.363) and (92.762, 66.847) .. (95.383, 66.847)..
	controls (98.003, 66.847) and (99.794, 70.363) .. (102.415, 70.363).. 
	controls (105.035, 70.363) and (106.896, 67.449) .. (109.447, 66.847).. 
	controls (110.587, 66.578) and (112.963, 66.847) .. (112.963, 66.847);

	\path[draw=CiteColor,line cap=round,line join=miter,line width=1.001pt,miter 
	limit=4.0] (67.255, 64.646).. controls (67.255, 64.646) and (71.953, 65.007) 
	.. (74.287, 64.791).. controls (76.671, 64.571) and (78.945, 63.022) .. 
	(81.319, 63.331).. controls (83.918, 63.669) and (85.73, 66.847) .. (88.351, 
	66.847).. controls (90.971, 66.847) and (92.762, 63.331) .. (95.383, 63.331)..
	controls (98.003, 63.331) and (99.794, 66.847) .. (102.415, 66.847).. 
	controls (105.035, 66.847) and (106.896, 63.933) .. (109.447, 63.331).. 
	controls (110.587, 63.062) and (112.963, 63.331) .. (112.963, 63.331);

	\path[draw=CiteColor,line cap=round,line join=round,line width=1.001pt,miter 
	limit=4.0] (111.841, 69.6).. controls (111.841, 69.6) and (111.841, 65.942) ..
	(117.075, 64.996).. controls (111.841, 64.113) and (111.841, 60.455) .. 
	(111.841, 60.455);

		%
		%
		%
		%
		%
		%
		%
		%
		%
		%
		%
		%
		%
		%
		%
		%
		%
		%
		%
		%
		%
		%
		%
		%
		%
		%
		%
	
\end{tikzpicture}
	\caption{Transition from order on points to order on regions.}
	\label{figure:intuition}
\end{figure}
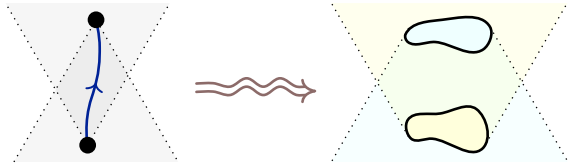

Locales have no primitive points, so a preorder needs to be based elsewhere. An \emph{ordered locale} $(X,\Leq)$ is a locale $X$ equipped with a preorder $\Leq$ on its frame of opens~$\Opens X$ that satisfies one additional axiom (\cref{definition:ordered locales}). The intuition is portrayed in \cref{figure:intuition}. The adjunction between topological spaces and locales extends to categories of ordered topological spaces and ordered locales~\cite{heunenSchaaf2024OrderedLocales}, justifying regarding ordered locales as a point-free generalisation of ordered topological spaces. Ordered locales in turn form a mathematical setting to study point-free generalisations of spacetimes by Malament's theorem, stating roughly that a smooth spacetime $M$ is determined by the ordered space~$(M,\caus)$~\cite{malament1977ClassContinuousTimelike}. 
For more details, see~\cite{schaaf2024TowardsPointFreeSpacetimes}.

\subsubsection*{Causal cover}

This paper studies how a natural notion of causal cover arises in an ordered locale. 
A similar idea was first proposed in \cite[Definition~2.12]{christensen2005CausalSitesQuantum} in the setting of so-called \emph{causal sites}, but has not been developed further. Our considerations here are deeply inspired by their definition.

The intuition is this. Think of an ordered space $(S,\leq)$ as an arena in which information can `flow', subject to the restraints of the order~$\leq$. For example, in spacetime physical information can only propagate along causal curves. Here, we might imagine information flowing along continuous monotone curves $\gamma\colon [0,1]\to S$. Note that we are completely agnostic about what information the space `contains', and merely study the possible ways of propagation.

To what extent is the information of a given region $U$ determined by some other region $A$ in its past? We say $A$ \emph{covers $U$ (from below)} if every continuous monotone path landing in $U$ has an extension into the past that intersects $A$.
Intuitively: all information that flows into $U$ has to come from~$A$. \Cref{section:causal coverages ordered locales} formulates a localic version by replacing curves by sequences in the localic causal order~$\Leq$, regarded as approximating ideal, infinitesimally thin curves.

\subsubsection*{Causal coverage}

A \emph{coverage} is a relation $\cov$ capturing when a region is `covered' by a collection of regions. The usual notion of open cover gives a canonical cover on the open sets of a topological space or locale: $U\cov (U_i)_{i\in I}$ exactly when $U\subseteq \bigcup_{i\in I}U_i$.
Coverages are fundamental in lattice theory and topology, for example on meet-semilattices \cite{johnstone1982StoneSpacesa,ball2014ExtendingSemilatticesFrames}, in formal topology \cite{sambin1989IntuitionisticFormalSpaces,sambin2003PointsFormalTopology}, or as Grothendieck topologies in category theory and sheaf theory \cite{maclane1994SheavesGeometryLogic}.
A causal notion of coverage can therefore transfer insights and tools from these fields of mathematics to study causality theory, see \Cref{section:conclusions}.

Coverages satisfy three fundamental axioms (see \cref{section:causal coverages as grothendieck topologies} for more detail):
any region covers itself;
covering is preserved under intersection;
a family is covering if all its members are covered.
One of the main results of this paper is that the definition of causal coverage in ordered locales satisfies analogous properties (see \cref{lemma:properties of coverage from locale}). In particular, our definition thus improves on causal coverages in causal sites, which need not satisfy the second fundamental axiom~\cite[Remark~2.13]{christensen2005CausalSitesQuantum}. \Cref{theorem:causal grothendieck topology} shows that the notion of causal coverage canonically defines a generalised Grothendieck topology, connecting to sheaf theory. 

\subsubsection*{Domain of dependence}

Moreover, causal coverages induces a notion of \emph{domain of dependence} (see \cref{section:domains of dependence}). Domains of dependence were introduced to characterise globally hyperbolic spacetimes~\cite{geroch1970DomainofDependence}, and form a fundamental tool in modern mathematical relativity theory. In the point-free setting, we define the domain of dependence of a region $U$ as the largest region causally covered by~$U$. We compare this localic definition with the traditional curve-wise definition (in \cref{section:domains of dependence in spactime}). While the localic domain of dependence always contains the curve-wise version, they are generally distinct. We end by speculating on possible implications to the problem of hole-freeness~\cite{krasnikov2009EvenMinkowskiSpace,manchak2009SpacetimeHolefree} in \Cref{section:conclusions}.


\section{Ordered locales}
\label{section:ordered locales}
Ordered locales were introduced in \cite{heunenSchaaf2024OrderedLocales} to serve as a point-free analogue of preordered topological spaces. As mentioned, by \emph{ordered space} we simply mean a pair $(S,\leq)$, where $S$ is a topological space, and $\leq$ is a preorder defined on the underlying set of~$S$. In an ordered space we introduce the familiar \emph{upset} and \emph{downset} of any subset~$A\subseteq S$:
	\[
	\up A:=\{x\in P:\exists a\in A: a\leq x\},
	\quad\text{and}\quad
	\down A:=\{y\in P:\exists a\in A: y\leq a \}.
	\]
In the current context of relativistic causality, we shall also call these the \emph{future (cone)} and \emph{past (cone)} of $A$, respectively. It is elementary to show that the future and past cones define so-called join-preserving monads on the powerset $\Powerset(S)$ of $S$.

\begin{lemma}\label{lemma:properties of cones}
	If $(S,\leq)$ is a preordered set, then for any subsets $A,B\subseteq S$:
	\begin{enumerate}[label={(\alph*)},font=\normalfont]
		\item $\up A\subseteq \up B$ and $\down A\subseteq \down B$ whenever $A\subseteq B$;
		\item $A\subseteq \up A$ and $A\subseteq \down A$;
		\item $\up\up A\subseteq \up A$ and $\down\down A\subseteq \down A$.
	\end{enumerate}
	Further, for any family $(A_i)_{i\in I}$ of subsets of $S$:
	\begin{enumerate}[label={(\alph*)},font=\normalfont]
		\setcounter{enumi}{3}
		\item $\bigcup_{i\in I}\up A_i = \up\left(\bigcup_{i\in I}A_i\right)$ and $\bigcup_{i\in I}\down A_i = \down\left(\bigcup_{i\in I}A_i\right)$.
	\end{enumerate}
\end{lemma}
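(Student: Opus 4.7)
The plan is to prove each item by unfolding the definitions of $\up A$ and $\down A$ and appealing to reflexivity and transitivity of $\leq$; the arguments for past cones are strictly dual to those for future cones, so I will only spell out the future case in each item.

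For (a), given $A \subseteq B$, I take $x \in \up A$, so there is some $a \in A$ with $a \leq x$. Since $A \subseteq B$, we have $a \in B$ as well, which gives $x \in \up B$. For (b), I use reflexivity: for any $a \in A$ the relation $a \leq a$ holds, so $a \in \up A$. For (c), I take $x \in \up\up A$ and unpack: there exists $y \in \up A$ with $y \leq x$, and in turn some $a \in A$ with $a \leq y$; transitivity then yields $a \leq x$, so $x \in \up A$. (Note that combined with (b) this gives equality $\up\up A = \up A$, though only the inclusion is stated.)

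For (d), I will show the two inclusions separately. The direction $\bigcup_i \up A_i \subseteq \up(\bigcup_i A_i)$ follows from (a) applied to the inclusions $A_i \subseteq \bigcup_j A_j$. For the reverse inclusion, I take $x \in \up(\bigcup_i A_i)$, so there is $a \in \bigcup_i A_i$ with $a \leq x$; picking an index $j$ with $a \in A_j$, I conclude $x \in \up A_j \subseteq \bigcup_i \up A_i$.

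There is no real obstacle here: the lemma is a bookkeeping statement recording that $\up(-)$ and $\down(-)$ are (inflationary, idempotent, order-preserving, join-preserving) closure operators on $\Powerset(S)$, set up to be cited later when the analogous localic statements are needed. The only minor care is to notice that (a) in the unions direction of (d) does all the work for one inclusion, so the argument does not need to be repeated.
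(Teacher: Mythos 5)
Your proof is correct; the paper does not spell out an argument at all (it records the lemma as elementary, the cones being join-preserving closure operators on $\Powerset(S)$), and your unfolding of the definitions via reflexivity, transitivity, and choice of index for the unions is exactly the routine argument intended.
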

An operation with properties (a)--(c) is called a \emph{monad} or \emph{closure operator}. Note that~(b) and~(c) together imply the equalities $\up\up A = \up A$ and $\down\down A= \down A$.

Generally the topological and order structure of an ordered space are not required to be compatible in any way. However, for the purposes of our discussion we need the following condition.
\begin{definition}\label{definition:open cones}
	We say an ordered space $(S,\leq)$ has \emph{open cones} if $\up U$ and $\down U$ are open whenever $U\subseteq S$ is open.
\end{definition}
The main result \cite[Theorem~6.3]{heunenSchaaf2024OrderedLocales} shows there is an adjunction between certain categories of ordered spaces with open cones and ordered locales. Thus, ordered locales can be viewed as point-free analogues of ordered spaces with open cones. Here is the main definition.

\begin{definition}
	\label{definition:ordered locales}
	An \emph{ordered locale} $(X,\Leq)$ is a locale $X$ equipped with a preorder~$\Leq$ on its frame of opens $\Opens X$, satisfying the following axiom:
		\[\tag{$\vee$}\label{axiom:V}
		\forall i\in I: U_i\Leq V_i 
		\qquad \text{implies} \qquad 
		\bigvee_{i\in I}U_i\Leq \bigvee_{i\in I}V_i.
		\]
\end{definition}

The intuition is that $\Leq$ describes a causal order on arbitrary open regions of the space, and we want to develop causality theory in this setting without having to resort to causal relations between points. The first elementary components of this theory are the localic analogues of the future and past cones in an ordered space. 

\begin{definition}\label{definition:localic cones}
	The \emph{localic cones} of an ordered locale $(X,\Leq)$ are defined as
	\[
	\Up U:=\bigvee \left\{ V\in\Opens X: U\Leq V\right\}
	\qquad\text{and}\qquad
	\Down U:=\bigvee \left\{ W\in\Opens X: W\Leq U\right\}.
	\]
\end{definition}

They enjoy properties similar to those of \cref{lemma:properties of cones}.

\begin{lemma}\label{lemma:properties of localic cones}
	For a locale $X$ and a preorder $\Leq$ on its frame of opens:
	\begin{enumerate}[label={(\alph*)},font=\normalfont]
		\item if $U\Leq V$ then $U\sqleq \Down V$ and $V \sqleq \Up U$;
		\item $U\sqleq \Up U$ and $U \sqleq \Down U$.
	\end{enumerate}
	If axiom~\eqref{axiom:V} is satisfied, then furthermore:
	\begin{enumerate}[label={(\alph*)},font=\normalfont]\setcounter{enumi}{2}
		\item $U \Leq \Up U$ and $\Down U \Leq U$; 
		\item $\Up \Up U = \Up U$ and $\Down \Down U = \Down U$;
		\item if $U\sqleq V$ then $\Up U\sqleq \Up V$ and $\Down U\sqleq \Down V$. 
	\end{enumerate}
\end{lemma}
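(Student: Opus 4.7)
The plan is to treat items (a)–(e) in turn, with axiom~\eqref{axiom:V} becoming essential only from (c) onward. For (a), if $U \Leq V$ then $U$ is a member of the defining family of $\Down V$, so $U \sqleq \Down V$; symmetrically $V$ sits in the family of $\Up U$, so $V \sqleq \Up U$. Item (b) is then the special case of (a) obtained from $U \Leq U$, which holds by reflexivity of the preorder $\Leq$.

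For (c), I would feed~\eqref{axiom:V} a constant family on the left. Set $I := \{V \in \Opens X : U \Leq V\}$, which is non-empty since $U \in I$ by reflexivity. The family indexed by $I$ with $U_V := U$ and $V_V := V$ satisfies $U_V \Leq V_V$ pointwise, so~\eqref{axiom:V} yields $\bigvee_{V \in I} U \Leq \bigvee_{V \in I} V = \Up U$; because $I$ is non-empty the left hand side collapses to $U$, giving $U \Leq \Up U$. The statement $\Down U \Leq U$ is dual. Item (d) then follows from (b) and (c): the inequality $\Up U \sqleq \Up \Up U$ is an immediate instance of (b), and for the reverse, any $V$ with $\Up U \Leq V$ combines with $U \Leq \Up U$ via transitivity of $\Leq$ to give $U \Leq V$, so $V \sqleq \Up U$; joining over such $V$ yields $\Up \Up U \sqleq \Up U$.

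The main obstacle sits in (e), which asks for compatibility between the subsumption order $\sqleq$ and the localic cones, whereas~\eqref{axiom:V} only speaks about $\Leq$. The trick I would employ is to couple any witness $U \Leq W$ of the defining family of $\Up U$ with the reflexivity instance $V \Leq V$ and apply~\eqref{axiom:V} to conclude $U \vee V \Leq W \vee V$. Under the hypothesis $U \sqleq V$ the left side equals $V$, so $V \Leq W \vee V$, and therefore $W \sqleq W \vee V \sqleq \Up V$. Joining over $W$ gives $\Up U \sqleq \Up V$; the symmetric coupling of $W \Leq U$ with $V \Leq V$ produces $W \vee V \Leq V$ and delivers $\Down U \sqleq \Down V$.
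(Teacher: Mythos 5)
Your proof is correct: (a)–(b) follow from membership of the relevant opens in the defining joins, (c)–(d) from feeding~\eqref{axiom:V} constant families together with reflexivity and transitivity of $\Leq$, and (e) from coupling a witness with the reflexivity instance $V \Leq V$ so that~\eqref{axiom:V} gives $V \Leq W \vee V$ (resp.\ $W \vee V \Leq V$), which is the key trick needed since~\eqref{axiom:V} speaks only about $\Leq$. The paper itself defers the proof to the cited thesis, and your argument is the standard one that proof takes, so there is nothing to add beyond noting that the dual $\Down\Down U = \Down U$ half of (d) should be spelled out (it is exactly dual, using $\Down U \Leq U$ and transitivity).
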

\begin{proof}
	See \cite[Lemma~4.4]{schaaf2024TowardsPointFreeSpacetimes}.
\end{proof}

In particular, $\Up$ and $\Down$ are monads on $\Opens X$.

\begin{example}\label{example:egli-milner order}
	If $(S,\leq)$ is an ordered space, then the locale $\loc(S)$, defined by $\Opens\loc(S):= \Opens S$, becomes an ordered locale when equipped with the \emph{Egli-Milner order}:
		\[
			U\Leq V
			\qquad
			\text{if and only if}\qquad
			U\subseteq \down V\text{~and~}V\subseteq \up U.
		\]
	From now, we always understand $\loc(S)$ to be an ordered locale in this sense.
\end{example}

\begin{lemma}\label{lemma:localic cones in a space}
	The localic cones in ordered space are calculated as:
	\[
	\Up U = (\up U)^\circ
	\qquad\text{and}\qquad
	\Down U = (\down U)^\circ.
	\]
\end{lemma}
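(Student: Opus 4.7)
The plan is to unfold the definitions and check both inclusions directly. I will treat $\Up U = (\up U)^\circ$; the statement for $\Down U$ is entirely dual.

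\emph{Unfolding.} By \cref{definition:localic cones} and \cref{example:egli-milner order},
\[
\Up U \;=\; \bigcup\bigl\{ V\in\Opens S : U\subseteq \down V \text{ and } V\subseteq \up U\bigr\}.
\]
So $\Up U$ is the largest open subset of $\up U$ that additionally witnesses $U\subseteq\down V$.

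\emph{Inclusion $\Up U \subseteq (\up U)^\circ$.} Any open $V$ appearing in the union above satisfies $V\subseteq \up U$. Being open, it therefore lies in the interior $(\up U)^\circ$. Taking the union over all such $V$ yields $\Up U \subseteq (\up U)^\circ$.

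\emph{Inclusion $(\up U)^\circ \subseteq \Up U$.} I would exhibit $V=(\up U)^\circ$ itself as a witness in the union, i.e., check that $U\Leq (\up U)^\circ$. The condition $(\up U)^\circ \subseteq \up U$ is immediate. For $U\subseteq \down(\up U)^\circ$, observe that $U\subseteq \up U$ by reflexivity of $\leq$ (\cref{lemma:properties of cones}(b)), and $U$ is open, so $U\subseteq (\up U)^\circ$; then by reflexivity again $(\up U)^\circ \subseteq \down(\up U)^\circ$, whence $U\subseteq \down(\up U)^\circ$. Thus $U\Leq (\up U)^\circ$, and so $(\up U)^\circ\sqleq \Up U$.

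Combining the two inclusions gives $\Up U=(\up U)^\circ$. The same argument, swapping $\up$ with $\down$ and the two Egli--Milner conditions, gives $\Down U = (\down U)^\circ$. There is no real obstacle here: the only non-automatic observation is that openness of $U$ forces $U$ to sit inside the interior of its own future/past cone, which is precisely what makes $(\up U)^\circ$ (resp.\ $(\down U)^\circ$) eligible to appear in the defining union.
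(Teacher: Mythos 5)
Your proof is correct: both inclusions are verified exactly as one should, with the key point being that openness of $U$ gives $U\subseteq(\up U)^\circ\subseteq\down(\up U)^\circ$, so that $(\up U)^\circ$ itself witnesses $U\Leq(\up U)^\circ$. The paper simply defers to \cite[Lemma~4.13]{schaaf2024TowardsPointFreeSpacetimes}, and your argument is the same standard one that proof uses, so nothing further is needed.
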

\begin{proof}
	See \cite[Lemma~4.13]{schaaf2024TowardsPointFreeSpacetimes}.
\end{proof}

\begin{corollary}\label{corollary:localic cones in space with open cones}
	In an ordered space with open cones: $\Up U =\up U$ and $\Down U = \down U$.
\end{corollary}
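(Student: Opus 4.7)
The plan is to invoke the two results immediately preceding the corollary and observe that they combine trivially. By \cref{lemma:localic cones in a space}, in any ordered space we have the identifications $\Up U = (\up U)^\circ$ and $\Down U = (\down U)^\circ$, where the interior is taken in the topology of $S$. The hypothesis of the corollary is that $(S,\leq)$ has open cones, which by \cref{definition:open cones} means precisely that $\up U$ and $\down U$ are already open whenever $U$ is open. Since the interior of an open set is the set itself, $(\up U)^\circ = \up U$ and $(\down U)^\circ = \down U$, and the two equalities of the corollary follow at once.

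There is essentially no obstacle here: the work has been done in \cref{lemma:localic cones in a space}, and the open cones hypothesis is designed precisely to eliminate the interior operation. If one wished to be fully explicit, one could note that $U$ ranges over $\Opens X = \Opens S$, so the assumption that $U$ is open (required to apply \cref{definition:open cones}) is automatic. No further calculation is needed.
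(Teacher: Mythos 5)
Your proof is correct and is exactly the argument the paper intends: the corollary is stated as an immediate consequence of \cref{lemma:localic cones in a space}, with the open cones hypothesis making the interior operator act as the identity on $\up U$ and $\down U$. Nothing to add.
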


An important motivating class of examples is that of smooth spacetimes. A \emph{smooth spacetime} is a smooth connected Lorentzian manifold $M$ with time orientation. For more details we refer to \cite{penrose1972TechniquesDifferentialTopology,minguzzi2019LorentzianCausalityTheory,landsman2021FoundationsGeneralRelativity}. The Lorentzian structure induces the \emph{chronology relation} $\chron$ and the \emph{causality relation} $\caus$ on the set of points~\cite[\S 1.11]{minguzzi2019LorentzianCausalityTheory}. The chronology describes information flow going strictly slower than the speed of light, while the causality relation describes information flow going at most at the speed of light. Generally, $\chron$ is transitive, and $\caus$ is a preorder. Hence~$(M,\caus)$ is an ordered space. The future and past cones induced by~$\chron$ are denoted~$I^\pm$, respectively, and those induced by $\caus$ are denoted by $J^\pm$. It turns out that the topology, chronology, and causality of a spacetime are intertwined. A collection of results from the literature relevant to the discussion on ordered spaces is presented in \cite[Chapter~3]{schaaf2024TowardsPointFreeSpacetimes}, the conclusion of which is the following.

\begin{theorem}\label{theorem:spacetimes have OC}
	$(M,\caus)$ has open cones for any smooth spacetime~$M$.
\end{theorem}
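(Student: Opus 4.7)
The plan is to reduce the claim to two standard facts from Lorentzian causality theory: \emph{(i)} for any subset $A \subseteq M$, the chronological cones $I^\pm(A)$ are open subsets of $M$, and \emph{(ii)} the \emph{push-up lemma} $\chron \comp \caus \subseteq \chron$ and $\caus \comp \chron \subseteq \chron$, asserting that a timelike step composed with a causal step (in either order) yields a timelike relation. Fact \emph{(i)} follows from the openness of the relation $\chron$ in $M\times M$, which in turn comes from the openness of the positive timelike cones in each tangent space. Fact \emph{(ii)} is a classical result in smooth Lorentzian geometry. Both are catalogued in Chapter~3 of the second-named author's thesis.

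Given these ingredients, it suffices to prove $J^\pm(U) = I^\pm(U)$ for every open $U \subseteq M$, since the right-hand side is open by \emph{(i)}. The inclusion $I^+(U) \subseteq J^+(U)$ is immediate from $\chron \subseteq \caus$. For the reverse, fix $y \in J^+(U)$ and choose $x \in U$ with $x \caus y$. Time orientation guarantees a past-directed timelike curve $\gamma$ with $\gamma(0) = x$; since $U$ is open and $\gamma$ is continuous, $\gamma(\varepsilon) \in U$ for all sufficiently small $\varepsilon > 0$, while $\gamma(\varepsilon) \chron x$ by construction. Applying push-up to the chain $\gamma(\varepsilon) \chron x \caus y$ yields $\gamma(\varepsilon) \chron y$, so $y \in I^+(\gamma(\varepsilon)) \subseteq I^+(U)$. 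The symmetric argument, using a future-directed timelike curve through $x$ together with $\caus \comp \chron \subseteq \chron$, shows $J^-(U) = I^-(U)$. Hence $\up U = J^+(U)$ and $\down U = J^-(U)$ are open, as required.

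There is no substantive new mathematics here; the only real obstacle is marshalling the classical facts correctly. The one subtle point is to invoke \emph{time orientation}, not merely the Lorentzian structure, to produce past- and future-directed timelike curves through an arbitrary point, and to use the appropriate direction of push-up in each of the two symmetric cases. Since smoothness is assumed throughout, no low-regularity pathologies of push-up need to be addressed.
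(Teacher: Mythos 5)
Your proof is correct and follows essentially the route the paper intends: the theorem is cited from standard causality theory (push-up plus openness of the chronological cones), and your identity $J^\pm(U) = I^\pm(U)$ for open $U$ is exactly the content recorded in \cref{example:localic cones in spacetime} and the remark relating open cones to the push-up principle. Nothing further is needed.
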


\begin{corollary}\label{example:localic cones in spacetime}
	For any open region $U$ in a smooth spacetime $M$ we have
	\[
	\Up U = I^+(U) = J^+(U)
	\qquad\text{and}\qquad
	\Down U = I^-(U)=J^-(U).
	\]
\end{corollary}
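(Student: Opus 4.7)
The plan is to chain the preceding results together and then invoke a classical identity from Lorentzian causality theory. By \cref{theorem:spacetimes have OC}, the ordered space $(M,\caus)$ has open cones, so \cref{corollary:localic cones in space with open cones} applies and yields $\Up U = \up U$ and $\Down U = \down U$, where the arrows on the right are taken with respect to $\caus$. Since $\caus$ is by definition the causality relation, the causal upset $\up U$ coincides with $J^+(U)$ (and dually $\down U = J^-(U)$), which settles the outer two equalities in the statement. It remains to prove the equalities $J^+(U) = I^+(U)$ and $J^-(U) = I^-(U)$ for an open region $U$.

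These are standard but worth recording. The inclusion $I^+(U) \subseteq J^+(U)$ is immediate, since every timelike curve is causal. For the reverse inclusion, I would take $y \in J^+(U)$, so that $u \caus y$ for some $u \in U$. The first step is to produce a point $u' \in U$ with $u' \chron u$: since every point of a spacetime has a non-empty open chronological past, $I^-(u) \cap U$ is a non-empty open subset of $U$, from which one picks $u'$. The second step is to combine $u' \chron u \caus y$ into $u' \chron y$. This is precisely the content of the so-called \emph{push-up lemma} in Lorentzian causality (see e.g.~\cite[\S 2.4]{minguzzi2019LorentzianCausalityTheory}), which guarantees that a chronological segment concatenated with a causal segment is chronological. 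Therefore $y \in I^+(u') \subseteq I^+(U)$, proving $J^+(U) \subseteq I^+(U)$. The past cone identity $J^-(U) = I^-(U)$ for open $U$ is shown symmetrically by applying the same argument to the time-reversed spacetime.

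The only potential obstacle is the invocation of the push-up lemma, but this is a well-known classical tool valid in any Lorentzian manifold, so the argument is essentially a short combination of previously established results in this paper with one standard fact from spacetime geometry. No additional hypotheses on $M$ (such as causality conditions or global hyperbolicity) are needed, so the corollary holds for arbitrary smooth spacetimes.
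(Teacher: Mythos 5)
Your proposal is correct and follows essentially the route the paper intends: the corollary is just \cref{theorem:spacetimes have OC} combined with \cref{corollary:localic cones in space with open cones} (giving $\Up U=\up U=J^+(U)$ and dually), plus the classical identity $I^\pm(U)=J^\pm(U)$ for open $U$, which the paper takes from the causality-theory literature and you prove directly via push-up. One small point of wording: the non-emptiness of $I^-(u)\cap U$ does not follow merely from $I^-(u)\neq\varnothing$ but from the standard fact that $u\in\overline{I^{-}(u)}$ (a past-directed timelike curve from $u$ stays in the open set $U$ for small parameter values), after which the push-up argument goes through exactly as you state.
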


\begin{remark}
	It can be seen that the open cone condition for spacetimes is roughly equivalent to the so-called \emph{push-up principle} in causality theory~\cite[\S 3.1.2]{schaaf2024TowardsPointFreeSpacetimes}. There exist lower regularity, non-smooth spacetimes that do not satisfy the push-up principle~\cite{grant2020FutureNotAlways}, and therefore will not satisfy the open cone condition. We will not consider such spacetimes here.
\end{remark}

\section{Parallel ordered locales}
\label{section:parallel ordered locales}
Just as the chronological and causal cones $I^\pm$ and $J^\pm$ are fundamental tools in mathematical relativity theory, the localic cones provide one of the main tools to study notions of causality inside an ordered locale. In fact, if $X$ is a locale and $u,d\colon \Opens X\to \Opens X$ are monads on its frame of opens, then we obtain the structure of an ordered locale $\Leq$ via
\[
U\Leq V
\qquad\text{if and only if}\qquad
U\sqleq d(V)\text{~and~}V\sqleq u(U).
\]
Here we think of $u$ and $d$ as abstract future and past localic cone operations. It can be shown that every ordered locale essentially arises in this way \cite[\S 4.2.2]{schaaf2024TowardsPointFreeSpacetimes}. It is therefore not much of a restriction to assume the following axiom:
\[\tag{\textsc{c}-$\Leq$}\label{axiom:cones give order}
U\Leq V \qquad \text{if and only if} \qquad U\sqleq \Down V \text{ and } V\sqleq \Up U.
\]

However, the cones $u$ and $d$ share \emph{a priori} no relation with one another. This seems to miss some of the intuition of the future and past cones in ordered spaces and spacetimes, which are related at least in the following sense:
	\[
		x\leq y
		\quad\text{if and only if}\quad
		x\in \down\{y\}
		\quad\text{if and only if}\quad
		y\in\up\{x\}.
	\]
So, to be able to develop a well-behaved theory of causality in ordered locales, we introduce some additional axioms. First: we ask the localic cones $\Up$ and~$\Down$ to be \emph{join-preserving}, which is elementary in the case of $\up$ and~$\down$ (\cref{lemma:properties of cones}). Second: we introduce a compatibility condition between the localic cones that we call \emph{parallel orderedness}, introduced in~\cite[\S 4.5]{schaaf2024TowardsPointFreeSpacetimes}. Intuitively, this condition ensures that the localic cones run in parallel, or, that the speed of propagation into the future is the same as the speed of propagation into the past (see \cref{example:minkowski different speeds of light}).

\begin{definition}\label{definition:localic cones preserve joins}
	We say the localic cones of an ordered locale $(X,\Leq)$ \emph{preserve joins} if the following law holds:
	\[\tag{\textsc{c}-$\vee$}\label{axiom:LV}
	\Up\bigvee_{i\in I} U_i = \bigvee_{i\in I}\Up U_i
	\qquad\text{and}\qquad
	\Down \bigvee_{i\in I} U_i = \bigvee_{i\in I}\Down U_i.
	\]
\end{definition}

\begin{proposition}\label{proposition:space with OC has join-preserving cones}
	If $(S,\leq)$ has open cones, then $\loc(S)$ satisfies~\eqref{axiom:LV}.
\end{proposition}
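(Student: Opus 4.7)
The plan is to reduce the localic statement to the point-set statement via the open cone hypothesis. By \cref{corollary:localic cones in space with open cones}, the open cone hypothesis on $(S,\leq)$ identifies the localic cones in $\loc(S)$ with the ordinary set-theoretic cones: for any open $U \subseteq S$, we have $\Up U = \up U$ and $\Down U = \down U$, and these are open by assumption. Moreover, joins in $\Opens \loc(S) = \Opens S$ are just unions.

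Given a family $(U_i)_{i \in I}$ of open subsets of $S$, I would then simply chain equalities, invoking \cref{lemma:properties of cones}(d) in the middle:
\[
\Up \bigvee_{i \in I} U_i \;=\; \up \Bigl(\bigcup_{i \in I} U_i \Bigr) \;=\; \bigcup_{i \in I} \up U_i \;=\; \bigvee_{i \in I} \Up U_i,
\]
where the first and last equalities use \cref{corollary:localic cones in space with open cones} (together with the fact that each $\up U_i$ is open, so that the set-theoretic union coincides with the localic join), and the middle equality is \cref{lemma:properties of cones}(d). The argument for $\Down$ is symmetric.

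There is no real obstacle: the content of the proposition is entirely carried by the open cone hypothesis, which ensures that the interior operation in \cref{lemma:localic cones in a space} is redundant, so that the join-preservation of the point-set cones transfers verbatim to the localic cones. I would write this up as a two-line proof citing the corollary and the lemma.
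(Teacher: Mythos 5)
Your proposal is correct and is essentially identical to the paper's own argument: both reduce to the point-wise cones via \cref{corollary:localic cones in space with open cones} and then invoke \cref{lemma:properties of cones}(d) to transfer join-preservation. Nothing is missing.
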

\begin{proof}
	This follows simply from the fact that the localic cones equal the point-wise cones (\cref{corollary:localic cones in space with open cones}), and the latter preserve joins by \cref{lemma:properties of cones}(d).
\end{proof}

\begin{example}\label{example:upper order}
	However, even for ordered locales based on spaces $S$, axiom~\eqref{axiom:LV} can fail for quite trivial reasons. Consider for instance the \emph{upper order}, which is defined by $U\LeqU V$ if and only if $V\subseteq\up U$. This defines an ordered locale whose past localic cone is constant: $\Down U = S$, and so fails to preserve empty joins whenever $S$ is non-empty. But~\eqref{axiom:LV} can fail in less trivial ways, even for finite joins, as the next example shows.
\end{example}

\begin{example}\label{example:LV can fail in spaces}
	Consider the ordered space whose underlying space is the disjoint union ${(\{0\}\times [0,\infty))\cup (\{1\}\times \mathbb{R})}$, and whose order is generated by setting ${(0,x)\leq (1,y)}$ if and only if $x=y$. This space does not have open cones, since the future of $\{0\}\times [0,\infty)$ is not open in $\{1\}\times\mathbb{R}$. We construct two open subsets of $S$ such that $\Up(U\cup V)\neq \Up U \cup \Up V$. Namely, take:
	\[
	U:= \{0\}\times [0,\infty)
	\qquad\text{and}\qquad
	V:= \{1\}\times (-\infty,0).
	\]
	Then $\up V = V$, and $\up U = U \cup \left(\{1\}\times [0,\infty)\right)$. We therefore find that
	\[
	\Up(U\cup V) = \left(\up U \cup \up V\right)^\circ 
	=
	S,
	\]
	while
	\[
	\Up U \cup \Up V = \left(\up U\right)^\circ \cup V
	=
	\left(\{0\}\times [0,\infty)\right)\cup\left(\{1\}\times \left(\mathbb{R}\setminus \{0\}\right)\right).
	\]
	In other words, $\Up U \cup \Up V$ misses precisely the point $(1,0)$. See \cref{figure:LV can fail in spaces}. We can see that~\eqref{axiom:LV} fails essentially due to the interior operator not preserving joins.
\end{example}
\begin{figure}[t]\centering
	\definecolor{cffebdc}{RGB}{255,235,220}
\definecolor{cefffff}{RGB}{239,255,255}
\definecolor{ce0ffdc}{RGB}{224,255,220}
\definecolor{c640000}{RGB}{100,0,0}
\definecolor{c001764}{RGB}{0,23,100}
\definecolor{c0d6400}{RGB}{13,100,0}
\definecolor{cffffdc}{RGB}{255,255,220}
\definecolor{c644700}{RGB}{100,71,0}

\def \globalscale {1.000000}
\begin{tikzpicture}[y=1pt, x=1pt, yscale=\globalscale,xscale=\globalscale, every node/.append style={scale=\globalscale}, inner sep=0pt, outer sep=0pt]
  \path[fill=cffebdc,line cap=butt,line join=miter,line width=1.0pt,miter 
  limit=4.0] (59.585, 87.892) -- (128.247, 87.883) -- (128.247, 76.553) -- 
  (59.585, 76.561) -- (57.894, 82.217) -- cycle;

  \path[fill=cefffff,line cap=butt,line join=miter,line width=1.0pt,miter 
  limit=4.0] (55.176, 87.887) -- (-13.486, 87.883) -- (-13.486, 76.553) -- 
  (55.176, 76.557) -- (57.38, 82.222) -- cycle;

  \path[fill=ce0ffdc,line width=1.0pt] (57.38, 59.545) -- (128.247, 59.537) -- 
  (128.247, 48.206) -- (57.38, 48.215) -- cycle;

  \path[draw=black,line cap=butt,line join=miter,line width=0.5pt,miter 
  limit=4.0,dash pattern=on 0.5pt off 2.0pt] (57.38, 82.222) -- (57.38, 53.876);

  \path[draw=c640000,line cap=butt,line join=miter,line width=2.0pt,miter 
  limit=4.0] (57.894, 82.217) -- (114.073, 82.222);

  \path[draw=c001764,line cap=butt,line join=miter,line width=2.0pt,miter 
  limit=4.0,dash pattern=on 4.0pt off 2.0pt] (0.892, 82.217) -- (-13.486, 
  82.217);

  \path[draw=c640000,line cap=butt,line join=miter,line width=2.0pt,miter 
  limit=4.0,dash pattern=on 4.0pt off 2.0pt] (114.073, 82.222) -- (128.247, 
  82.222);

  \path[draw=c0d6400,line cap=butt,line join=miter,line width=2.0pt,miter 
  limit=4.0,dash pattern=on 4.0pt off 2.0pt] (114.073, 53.876) -- (128.247, 
  53.876);

  \path[draw=c0d6400,line cap=butt,line join=miter,line width=2.0pt,miter 
  limit=4.0] (57.38, 53.876) -- (114.073, 53.876);

  \path[draw=c0d6400,line cap=butt,line join=miter,line width=2.0pt,miter 
  limit=4.0] (57.38, 59.545) -- (57.38, 48.206);

  \path[draw=c001764,line cap=butt,line join=miter,line width=2.0pt,miter 
  limit=4.0] (56.915, 82.217) -- (0.892, 82.222);

  \node[text=c0d6400,line cap=butt,line join=miter,line width=1.0pt,miter 
  limit=4.0,anchor=south west] (text6) at (81.61, 60.744){$U$};

  \node[text=black,line cap=butt,line join=miter,line width=1.0pt,miter 
  limit=4.0,anchor=south west] (text7) at (131.613, 48.916){$[0,\infty)$};

  \node[text=black,line cap=butt,line join=miter,line width=1.0pt,miter 
  limit=4.0,anchor=south west] (text7-5) at (135.835, 78.067){$\mathbb{R}$};

  \path[draw=c640000,line width=2.0pt] (60.163, 87.887).. controls (58.721, 
  86.349) and (57.91, 84.325) .. (57.894, 82.217).. controls (57.911, 80.113) 
  and (58.717, 78.093) .. (60.154, 76.557);

  \path[draw=c001764,line cap=butt,line join=miter,line width=2.0pt,miter 
  limit=4.0] (54.646, 87.887).. controls (56.088, 86.349) and (56.898, 84.325) 
  .. (56.915, 82.217).. controls (56.898, 80.113) and (56.091, 78.093) .. 
  (54.655, 76.557);

  \node[text=c640000,line cap=butt,line join=miter,line width=1.0pt,miter 
  limit=4.0,anchor=south west] (text11) at (74.641, 89.149){$(\up U)^\circ$};

  \node[text=c001764,line cap=butt,line join=miter,line width=1.0pt,miter 
  limit=4.0,anchor=south west] (text12) at (23.435, 89.392){$V$};

%
%
%
%
%
%
%
%
%
%
%
%
%
%
%
%
%
%
%
%
%
%
%
%
%
%
%

\end{tikzpicture}
	\caption{Illustration of the space in \cref{example:LV can fail in spaces}.}
	\label{figure:LV can fail in spaces}
\end{figure}
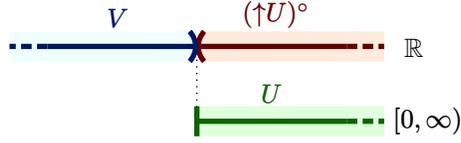

Next, we describe the class of parallel ordered locales.

\begin{definition}\label{definition:parallel ordered locales}
	If $(X,\Leq)$ is an ordered locale, we say $\Leq$ \emph{respects meets} if the following two laws hold:
	\begin{center}
		\dummylabel{axiom:wedge}{$\wedge^\pm$}%
		\begin{minipage}{.3\linewidth}
			\[\tag{$\wedge^+$}\label{axiom:wedge+}
			\begin{tikzcd}[every label/.append style = {font = \normalsize},column sep=0.25cm, row sep=0.2cm]
				{U} & \highlighttikzcd{\exists U'} \\
				{V} & {V'}
				\arrow["\sqleq"{anchor=center, rotate=-90}, draw=none, from=2-1, to=1-1]
				\arrow["\sqleq"{anchor=center, rotate=-90}, draw=none, from=2-2, to=1-2]
				\arrow["{}"{description}, "\Leq"{anchor=center}, draw=none, from=1-1, to=1-2]
				\arrow["{}"{description}, "\Leq"{anchor=center}, draw=none, from=2-1, to=2-2]
			\end{tikzcd}
			\]
		\end{minipage}%
		\hfil
		\begin{minipage}{.3\linewidth}
			\[\tag{$\wedge^-$}\label{axiom:wedge-}
			\begin{tikzcd}[every label/.append style = {font = \normalsize},column sep=0.25cm, row sep=0.2cm]
				\highlighttikzcd{\exists U} & {U'} \\
				{V} & {V'.}
				\arrow["\sqleq"{anchor=center, rotate=-90}, draw=none, from=2-1, to=1-1]
				\arrow["\sqleq"{anchor=center, rotate=-90}, draw=none, from=2-2, to=1-2]
				\arrow["{}"{description}, "\Leq"{anchor=center}, draw=none, from=1-1, to=1-2]
				\arrow["{}"{description}, "\Leq"{anchor=center}, draw=none, from=2-1, to=2-2]
			\end{tikzcd}
			\]
		\end{minipage}
	\end{center}
	We say $\Leq$ \emph{preserves bottom} if the bottom element of $\Opens X$ is only related to itself:
	\[\tag{$\varnothing$}\label{axiom:empty}
	U\Leq \varnothing \Leq V \quad\text{implies}\quad U=\varnothing=V.
	\]
	Lastly, we say $(X,\Leq)$ is \emph{parallel ordered} if all three axioms~\eqref{axiom:wedge} and~\eqref{axiom:empty} hold.
\end{definition}

\begin{remark}
	What is meant by `respects meets' is explained further in \cref{lemma:wedge iff frobenius,lemma:wedge iff strong wedge}. The intended reading of the diagram in axiom~\eqref{axiom:wedge+} is that: if $U\sqleq V$ and $V\Leq V'$, \emph{then} there exists $U'$ such that $U\Leq U'$ and $U'\sqleq V'$. Axiom~\eqref{axiom:wedge-} is parsed analogously, as are similar subsequent diagrams. The visual intuition behind these axioms is portrayed in \cref{figure:axiom wedge+}(a).
\end{remark}

\begin{figure}[b]\centering
	\begin{subfigure}[b]{0.4\textwidth}
		\definecolor{cededed}{RGB}{237,237,237}
\definecolor{cffffdc}{RGB}{255,255,220}
\definecolor{cffebdc}{RGB}{255,235,220}
\definecolor{c640000}{RGB}{100,0,0}
\definecolor{ccd0000}{RGB}{205,0,0}
\definecolor{cefffff}{RGB}{239,255,255}
\definecolor{ce0ffdc}{RGB}{224,255,220}
\definecolor{c001764}{RGB}{0,23,100}
\definecolor{c644700}{RGB}{100,71,0}
\definecolor{c0d6400}{RGB}{13,100,0}
\definecolor{c644700}{RGB}{100,71,0}

\def \globalscale {1.000000}
\begin{tikzpicture}[y=.75pt, x=.8pt, yscale=\globalscale,xscale=\globalscale, every node/.append style={scale=\globalscale}, inner sep=0pt, outer sep=0pt]
	\begin{scope}[blend group=multiply]
		
		\path[draw=black,fill=cededed,line cap=butt,line join=miter,line 
		width=1.0pt,miter limit=4.0] (29.034, 31.199).. controls (31.157, 20.98) and 
		(49.78, 28.563) .. (60.215, 28.364).. controls (78.189, 28.022) and (113.109, 
		18.462) .. (114.073, 31.199).. controls (115.254, 46.803) and (81.28, 51.255) 
		.. (63.285, 49.929).. controls (50.307, 48.973) and (26.386, 43.939) .. 
		(29.034, 31.199) -- cycle;

		\path[draw=black,fill=cededed,line cap=butt,line join=miter,line 
		width=1.0pt,miter limit=4.0] (23.365, 90.726).. controls (24.771, 75.249) and 
		(53.388, 81.021) .. (68.928, 80.846).. controls (83.753, 80.679) and (111.973,
		74.479) .. (112.598, 89.293).. controls (113.405, 108.403) and (76.473, 
		106.04) .. (57.38, 104.899).. controls (45.119, 104.167) and (22.253, 102.959)
		.. (23.365, 90.726) -- cycle;

		\path[fill=cffffdc,line cap=butt,line join=miter,line width=1.0pt,miter 
		limit=4.0] (88.562, 36.868) -- (114.355, 107.734) -- (37.257, 107.734) -- 
		(63.05, 36.868) -- cycle;

		\path[fill=cffebdc,line cap=butt,line join=miter,line width=1.0pt,miter 
		limit=4.0] (64.539, 94.611) -- (89.552, 94.286) -- (115.609, 22.695) -- 
		(38.364, 22.695) -- cycle;
		
	\end{scope}
	
	\node[text=black,line cap=butt,line join=miter,line width=1.0pt,miter 
	limit=4.0,anchor=south west] (text15) at (33.575, 29.898){$V$};

	\node[text=black,line cap=butt,line join=miter,line width=1.0pt,miter 
	limit=4.0,anchor=south west] (text15-2) at (29, 86.65){$V'$};

	\path[draw=black,fill=cffffdc,line cap=butt,line join=miter,line 
	width=1.0pt,miter limit=4.0] (65.875, 44.442).. controls (63.12, 41.755) and 
	(62.277, 35.997) .. (64.643, 32.963).. controls (67.247, 29.623) and (73.123, 
	30.142) .. (77.223, 31.199).. controls (81.798, 32.377) and (88.986, 34.997) 
	.. (88.562, 39.703).. controls (88.182, 43.911) and (81.372, 44.57) .. 
	(77.223, 45.372).. controls (73.497, 46.092) and (68.592, 47.092) .. (65.875, 
	44.442) -- cycle;

	\path[draw=c640000,fill=cffebdc,line cap=butt,line join=miter,line 
	width=1.001pt,miter limit=4.0,dash pattern=on 8.005pt off 1.001pt] (66.865, 
	99.025).. controls (64.11, 96.338) and (63.267, 90.581) .. (65.633, 87.546).. 
	controls (68.237, 84.207) and (74.113, 84.726) .. (78.213, 85.782).. controls 
	(82.788, 86.96) and (89.976, 89.581) .. (89.552, 94.286).. controls (89.172, 
	98.495) and (82.362, 99.153) .. (78.213, 99.955).. controls (74.487, 100.676) 
	and (69.582, 101.675) .. (66.865, 99.025) -- cycle;

	\path[draw=black,line cap=butt,line join=miter,line width=0.5pt,miter 
	limit=4.0,dash pattern=on 0.5pt off 2.0pt] (63.05, 36.868) -- (37.257, 
	107.734);

	\path[draw=black,line cap=butt,line join=miter,line width=0.5pt,miter 
	limit=4.0,dash pattern=on 0.5pt off 2.0pt] (88.562, 36.868) -- (114.355, 
	107.734);

	\path[draw=black,line cap=butt,line join=miter,line width=0.5pt,miter 
	limit=4.0,dash pattern=on 0.5pt off 2.0pt] (64.539, 94.611) -- (38.364, 
	22.695);

	\path[draw=black,line cap=butt,line join=miter,line width=0.5pt,miter 
	limit=4.0,dash pattern=on 0.5pt off 2.0pt] (89.552, 94.286) -- (115.609, 
	22.695);

	\node[text=c644700,line cap=butt,line join=miter,line width=1.0pt,miter 
	limit=4.0,anchor=south west] (text19) at (69.832, 33){$U$};

	\node[text=ccd0000,line cap=butt,line join=miter,line width=1.0pt,miter 
	limit=4.0,anchor=south west] (text20) at (66.84, 87){$\exists U'$};

	\path[draw=black,line cap=butt,line join=miter,line width=0.5pt,miter 
	limit=4.0,dash pattern=on 0.5pt off 2.0pt] (24.391, 95.564) -- (-2.132, 
	22.695);

	\path[draw=black,line cap=butt,line join=miter,line width=0.5pt,miter 
	limit=4.0,dash pattern=on 0.5pt off 2.0pt] (111.573, 94.841) -- (137.642, 
	22.695);

	\path[draw=black,line cap=butt,line join=miter,line width=0.5pt,miter 
	limit=4.0,dash pattern=on 0.5pt off 2.0pt] (29.034, 31.199) -- (1.177, 
	107.734);

	\path[draw=black,line cap=butt,line join=miter,line width=0.5pt,miter 
	limit=4.0,dash pattern=on 0.5pt off 2.0pt] (114.073, 31.199) -- (141.93, 
	107.734);

		%
		%
		%
		%
		%
		%
		%
		%
		%
		%
		%
		%
		%
		%
		%
		%
		%
		%
		%
		%
		%
		%
		%
		%
		%
		%
		%
		%
		%
		%
	
\end{tikzpicture}
		\caption{Axiom~\eqref{axiom:wedge+}.}
	\end{subfigure}\hfil
	\begin{subfigure}[b]{0.4\textwidth}
		\definecolor{cededed}{RGB}{237,237,237}
\definecolor{cffffdc}{RGB}{255,255,220}
\definecolor{c644700}{RGB}{100,71,0}
\definecolor{c640000}{RGB}{100,0,0}
\definecolor{cffebdc}{RGB}{255,235,220}
\definecolor{ccd0000}{RGB}{205,0,0}
\definecolor{cefffff}{RGB}{239,255,255}
\definecolor{ce0ffdc}{RGB}{224,255,220}
\definecolor{c001764}{RGB}{0,23,100}
\definecolor{c0d6400}{RGB}{13,100,0}

\def \globalscale {1.000000}
\begin{tikzpicture}[y=.8pt, x=.8pt, yscale=\globalscale,xscale=\globalscale, every node/.append style={scale=\globalscale}, inner sep=0pt, outer sep=0pt]
	\begin{scope}[blend group = multiply]
		\path[draw=black,fill=cededed,line cap=butt,line join=miter,line 
		width=1.0pt,miter limit=4.0] (111.974, 40.152).. controls (111.411, 22.201) 
		and (76.78, 30.242) .. (58.846, 31.199).. controls (47.814, 31.787) and 
		(26.511, 29.112) .. (26.935, 40.152).. controls (27.462, 53.878) and (53.836, 
		45.742) .. (67.35, 48.206).. controls (82.892, 51.041) and (112.858, 68.311) 
		.. (111.974, 40.152) -- cycle;

		\path[fill=cffffdc,line cap=butt,line join=miter,line width=1.0pt,miter 
		limit=4.0] (71.819, 37.478) -- (104.445, 39.703) -- (126.111, 99.23) -- 
		(49.343, 99.23) -- cycle;
	\end{scope}

	\path[draw=black,fill=cededed,line width=1.0pt] (58.581, 73.848) -- (52.27, 
	91.186).. controls (42.943, 92.482) and (33.855, 93.037) .. (27.722, 87.07).. 
	controls (24.909, 84.333) and (24.218, 78.314) .. (26.771, 75.334).. controls 
	(31.579, 69.724) and (40.826, 72.718) .. (48.876, 73.718).. controls (51.734, 
	74.073) and (55.049, 74.056) .. (58.581, 73.848) -- cycle;

	\node[text=black,line cap=butt,line join=miter,line width=1.0pt,miter 
	limit=4.0,anchor=south west] (text15-2) at (29.668, 76.66){$V$};

	\path[draw=black,fill=cffffdc,line cap=butt,line join=miter,line 
	width=1.0pt,miter limit=4.0] (80.178, 45.587).. controls (76.855, 43.581) and 
	(71.554, 43.888) .. (71.819, 37.478).. controls (72.057, 31.73) and (82.708, 
	32.466) .. (88.447, 32.857).. controls (94.248, 33.251) and (104.756, 33.954) 
	.. (104.467, 39.762).. controls (104.21, 44.915) and (99.093, 44.157) .. 
	(96.023, 45.719).. controls (93.646, 46.929) and (91.114, 48.274) .. (88.447, 
	48.289).. controls (85.547, 48.305) and (82.661, 47.085) .. (80.178, 45.587) 
	-- cycle;

	\node[text=black,line cap=butt,line join=miter,line width=1.0pt,miter 
	limit=4.0,anchor=south west] (text19) at (32.899, 36){$U$};

	\node[text=c644700,line cap=butt,line join=miter,line width=1.0pt,miter 
	limit=4.0,anchor=south west] (text15) at (82.011, 35){$W$};

	\path[draw=black,line cap=butt,line join=miter,line width=0.5pt,miter 
	limit=4.0,dash pattern=on 0.5pt off 2.0pt] (24.632, 80.927) -- (4.469, 25.529);

	\path[draw=black,line cap=butt,line join=miter,line width=0.5pt,miter 
	limit=4.0,dash pattern=on 0.5pt off 2.0pt] (102.735, 87.892) -- (125.412, 
	25.587);

	\path[draw=black,line cap=butt,line join=miter,line width=0.5pt,miter 
	limit=4.0,dash pattern=on 0.5pt off 2.0pt] (27.193, 37.89) -- (4.867, 99.23);

	\path[draw=black,line cap=butt,line join=miter,line width=0.5pt,miter 
	limit=4.0,dash pattern=on 0.5pt off 2.0pt] (110.768, 35.176) -- (134.041, 
	99.316);

	\path[draw=black,line cap=butt,line join=miter,line width=0.5pt,miter 
	limit=4.0,dash pattern=on 0.5pt off 2.0pt] (104.445, 39.703) -- (126.111, 
	99.23);

	\path[draw=black,line cap=butt,line join=miter,line width=0.5pt,miter 
	limit=4.0,dash pattern=on 0.5pt off 2.0pt] (71.819, 37.478) -- (49.343, 99.23);

	\path[draw=c640000,fill=cffebdc,line width=1.001pt,dash pattern=on 8.005pt off
	1.001pt] (58.581, 73.848).. controls (71.129, 73.11) and (86.57, 69.81) .. 
	(95.355, 70.866).. controls (107.51, 72.326) and (106.02, 82.651) .. (102.735,
	87.892).. controls (96.839, 97.302) and (80.555, 89.361) .. (69.451, 89.3).. 
	controls (64.078, 89.27) and (58.128, 90.373) .. (52.27, 91.186) -- cycle;

	\node[text=ccd0000,line cap=butt,line join=miter,line width=1.0pt,miter 
	limit=4.0,anchor=south west] (text20) at (58.078, 73.232){$V\wedge \Up W$};

		%
		%
		%
		%
		%
		%
		%
		%
		%
		%
		%
		%
		%
		%
		%
		%
		%
		%
		%
		%
		%
		%
		%
		%
		%
		%
		%
		%
		%
		%
	
\end{tikzpicture}
		\caption{Strong~\eqref{axiom:wedge+}.}
	\end{subfigure}
	\caption{Illustration of~\eqref{axiom:wedge+} and \cref{lemma:wedge iff strong wedge}.}
	\label{figure:axiom wedge+}
\end{figure}

\begin{lemma}\label{lemma:wedge implies cones determine order}
	In any ordered locale satisfying~\eqref{axiom:wedge}, axiom~\eqref{axiom:cones give order} holds.
\end{lemma}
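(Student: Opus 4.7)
The lemma is an ``if and only if'', and the forward implication --- $U \Leq V$ gives both $U \sqleq \Down V$ and $V \sqleq \Up U$ --- is immediate from \cref{lemma:properties of localic cones}(a) and uses none of the wedge axioms. The substantive direction is the converse: assuming $U \sqleq \Down V$ and $V \sqleq \Up U$, I must produce $U \Leq V$.

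My plan is to apply \eqref{axiom:wedge+} and \eqref{axiom:wedge-} symmetrically and then recombine their outputs via axiom~\eqref{axiom:V}. Since \eqref{axiom:V} is built into the definition of an ordered locale, \cref{lemma:properties of localic cones}(c) supplies the two ``unit'' relations $\Down V \Leq V$ and $U \Leq \Up U$, which chain perfectly with the hypotheses. Feeding $U \sqleq \Down V$ together with $\Down V \Leq V$ into~\eqref{axiom:wedge+} would produce an auxiliary open $U' \in \Opens X$ satisfying $U \Leq U'$ and $U' \sqleq V$. Dually, feeding $V \sqleq \Up U$ together with $U \Leq \Up U$ into~\eqref{axiom:wedge-} would produce $V' \in \Opens X$ satisfying $V' \sqleq U$ and $V' \Leq V$.

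Neither auxiliary on its own delivers $U \Leq V$: each is separated from the desired relation by a single $\sqleq$-gap on the ``wrong'' side of the $\Leq$. The closing observation is that axiom~\eqref{axiom:V}, applied to the two-element family $\{\,U \Leq U',\ V' \Leq V\,\}$, patches both gaps at once. It yields $U \vee V' \Leq U' \vee V$, and the $\sqleq$-bounds $V' \sqleq U$ and $U' \sqleq V$ collapse these joins to $U$ and $V$ respectively, giving $U \Leq V$.

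The only real obstacle is conceptual rather than computational: recognising that both wedge axioms must be invoked in tandem, and that~\eqref{axiom:V} over a two-element index set is the right device for merging their separately incomplete conclusions. Once this shape of argument is in mind, each step is a direct substitution into the axioms of \cref{definition:parallel ordered locales,definition:ordered locales} and into \cref{lemma:properties of localic cones}.
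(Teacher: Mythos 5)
Your proposal is correct and follows essentially the same route as the paper: both directions handled via \cref{lemma:properties of localic cones}, the converse by applying \eqref{axiom:wedge+} to $U\sqleq \Down V\Leq V$ and \eqref{axiom:wedge-} to $V\sqleq \Up U\sqgeq U$, then merging the two auxiliary relations with axiom~\eqref{axiom:V} and collapsing the joins. The only difference from the paper's proof is the naming of the auxiliary opens.
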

\begin{proof}
	The `only if' direction in~\eqref{axiom:cones give order} holds in any ordered locale by \cref{lemma:properties of localic cones}(a). For the other direction, suppose that $U\sqleq \Down V$ and $V\sqleq \Up U$. Using \cref{lemma:properties of localic cones}(c) we hence get $U\Leq \Up U\sqgeq V$ and $U\sqleq \Down V\Leq V$. Applying axioms~\eqref{axiom:wedge-} and~\eqref{axiom:wedge+}, respectively, we get:
	\[
	\begin{tikzcd}[every label/.append style = {font = \normalsize},column sep=0.25cm, row sep=0.2cm]
		\highlighttikzcd{\exists U'} & {V} \\
		{U} & {\Up U}
		\arrow["\sqleq"{anchor=center, rotate=-90}, draw=none, from=2-1, to=1-1]
		\arrow["\sqleq"{anchor=center, rotate=-90}, draw=none, from=2-2, to=1-2]
		\arrow["{}"{description}, "\Leq"{anchor=center}, draw=none, from=1-1, to=1-2]
		\arrow["{}"{description}, "\Leq"{anchor=center}, draw=none, from=2-1, to=2-2]
	\end{tikzcd}
	\qquad\text{and}\qquad
	\begin{tikzcd}[every label/.append style = {font = \normalsize},column sep=0.25cm, row sep=0.2cm]
		{U} & \highlighttikzcd{\exists V'} \\
		{\Down V} & {V.}
		\arrow["\sqleq"{anchor=center, rotate=-90}, draw=none, from=2-1, to=1-1]
		\arrow["\sqleq"{anchor=center, rotate=-90}, draw=none, from=2-2, to=1-2]
		\arrow["{}"{description}, "\Leq"{anchor=center}, draw=none, from=1-1, to=1-2]
		\arrow["{}"{description}, "\Leq"{anchor=center}, draw=none, from=2-1, to=2-2]
	\end{tikzcd}
	\]
	Applying~\eqref{axiom:V} to the top two rows then gives $U\vee U'\Leq V\vee V'$, which reduces to $U\Leq V$, as desired.
\end{proof}

To explain the terminology `parallel ordered', we need to show how the axioms~\eqref{axiom:wedge} can be reformulated in terms of statements about the localic cones.

\begin{lemma}\label{lemma:wedge iff frobenius}
	An ordered locale $(X,\Leq)$ satisfies~\eqref{axiom:wedge} if and only if~\eqref{axiom:cones give order} holds and the localic cones satisfy the following two conditions:
	\begin{center}
		\dummylabel{axiom:frobenius}{\textsc{f}$^\pm$}
		\begin{minipage}{.4\linewidth}
			\[\tag{\textsc{f}$^-$}\label{axiom:frobenius+}
			\Up U\wedge V\sqleq \Up\left(U\wedge \Down V\right)
			\]
		\end{minipage}%
		\hfil
		\begin{minipage}{.4\linewidth}
			\[\tag{\textsc{f}$^+$}\label{axiom:frobenius-}
			\Down U\wedge V\sqleq \Down\left(U\wedge \Up V\right).
			\]
		\end{minipage}
	\end{center}
\end{lemma}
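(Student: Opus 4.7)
The approach is to treat each direction separately, invoking \cref{lemma:wedge implies cones determine order} to supply \eqref{axiom:cones give order} in the forward direction, and constructing explicit witnesses via meets with the localic cones in the reverse direction.

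For the $(\Leftarrow)$ implication, assume \eqref{axiom:cones give order} and both Frobenius laws. To establish \eqref{axiom:wedge+}, suppose $U\sqleq V$ and $V\Leq V'$, and take $U':= V'\wedge \Up U$ as the witness. Then $U'\sqleq V'$ is immediate and $U'\sqleq \Up U$ is automatic. To deduce $U\sqleq \Down U'$, apply \eqref{axiom:frobenius-} (namely $\Down U\wedge V\sqleq \Down(U\wedge \Up V)$) with the axiom's $U$ replaced by $V'$ and $V$ replaced by $U$: this yields $\Down V'\wedge U\sqleq \Down(V'\wedge \Up U)=\Down U'$. Since $V\Leq V'$ gives $V\sqleq \Down V'$, and $U\sqleq V$, we get $U\sqleq \Down V'\wedge U\sqleq \Down U'$. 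Combined with $U'\sqleq \Up U$, \eqref{axiom:cones give order} then delivers $U\Leq U'$. The derivation of \eqref{axiom:wedge-} is dual: take $U:=V\wedge \Down U'$ and apply \eqref{axiom:frobenius+}.

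For the $(\Rightarrow)$ implication, \cref{lemma:wedge implies cones determine order} already supplies \eqref{axiom:cones give order}. To prove \eqref{axiom:frobenius-}, set $W:=\Down U\wedge V$; since $\Down U\Leq U$ by \cref{lemma:properties of localic cones}(c), applying \eqref{axiom:wedge+} to $W\sqleq \Down U$ and $\Down U\Leq U$ produces $W'$ with $W\Leq W'$ and $W'\sqleq U$. From $W\Leq W'$ we extract $W\sqleq \Down W'$ and $W'\sqleq \Up W$; combining $W\sqleq V$ with monotonicity of the cones (\cref{lemma:properties of localic cones}(e)) gives $\Up W\sqleq \Up V$, hence $W'\sqleq U\wedge \Up V$. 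Monotonicity of $\Down$ then concludes $W\sqleq \Down W'\sqleq \Down(U\wedge \Up V)$. The dual argument establishes \eqref{axiom:frobenius+}: set $W:=\Up U\wedge V$ and apply \eqref{axiom:wedge-} to $W\sqleq \Up U$ and $U\Leq \Up U$ to obtain a witness $W''$ with $W''\sqleq U$ and $W''\Leq W$, whence $W''\sqleq U\wedge \Down V$ and $W\sqleq \Up W''\sqleq \Up(U\wedge \Down V)$.

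The principal difficulty is guessing the correct witnesses $U'=V'\wedge \Up U$ and $W=\Down U\wedge V$; once they are chosen, all the verifications reduce to monotonicity and the absorption properties of the localic cones collected in \cref{lemma:properties of localic cones}, together with the characterisation of $\Leq$ by its cones from \eqref{axiom:cones give order}.
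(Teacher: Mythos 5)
Your proposal is correct and follows essentially the same route as the paper: the forward direction applies \eqref{axiom:wedge+} (resp.~\eqref{axiom:wedge-}) to $\Down U\wedge V\sqleq \Down U\Leq U$ (resp.\ its dual) and then uses monotonicity of the cones together with \cref{lemma:properties of localic cones}(a), while the converse direction uses exactly the paper's witness $U'=\Up U\wedge V'$ and concludes via \eqref{axiom:frobenius-} and \eqref{axiom:cones give order}. The only cosmetic difference is that you bound $W'\sqleq \Up W\sqleq \Up V$ directly instead of passing through $\Up(\Down U\wedge V)\sqleq \Up\Down U\wedge \Up V$ as the paper does, which changes nothing of substance.
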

\begin{proof}
	We already know by \cref{lemma:wedge implies cones determine order} that~\eqref{axiom:wedge} imply~\eqref{axiom:cones give order}. To show that~\eqref{axiom:wedge+} implies~\eqref{axiom:frobenius-}, note that by \cref{lemma:properties of localic cones}(c) we get $\Down U\wedge V\sqleq \Down U \Leq U$, so~\eqref{axiom:wedge+} says there exists an open region $W$ such that $\Down U\wedge V\Leq W\sqleq U$. In turn, \cref{lemma:properties of localic cones}(a) and (e) give $W\sqleq \Up(\Down U \wedge V)\sqleq \Up\Down U\wedge \Up V$, so we find $W\sqleq U\wedge \Up V$. Applying \cref{lemma:properties of localic cones}(a) to $\Down U\wedge V\Leq W$ then gives the desired inclusion:
	\[
	\Down U\wedge V\sqleq \Down W \sqleq \Down \left(U\wedge \Up V\right).
	\]

	Conversely, we need to prove that~\eqref{axiom:frobenius-} implies~\eqref{axiom:wedge+} under~\eqref{axiom:cones give order}. Given $U\sqleq V\Leq V'$, define $U':= \Up U \wedge V'\sqleq V'$. Since $U\sqleq V\sqleq \Down V'$ by \cref{lemma:properties of localic cones}(a), we find using~\eqref{axiom:frobenius-} that
	\[
	U=U\wedge \Down V' \sqleq \Down \left(V'\wedge \Up U\right) = \Down U'.
	\]
	That $U'\sqleq \Up U$ follows by construction, so~\eqref{axiom:cones give order} gives $U\Leq U'$, as desired, hence showing that~\eqref{axiom:wedge+} holds. That~\eqref{axiom:wedge-} follows from~\eqref{axiom:frobenius+} is proved similarly.
\end{proof}

\begin{remark}
	These new conditions are visualised in \cref{figure:axiom:frobenius-}. Note the resemblance between these laws to the so-called `Frobenius reciprocity condition' of open locale maps \cite[\S IX.7]{maclane1994SheavesGeometryLogic}.
\end{remark}

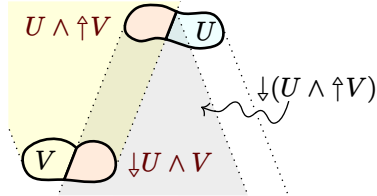
\begin{figure}[b]\centering
	\definecolor{cffffdc}{RGB}{255,255,220}
\definecolor{cededed}{RGB}{237,237,237}
\definecolor{cefffff}{RGB}{239,255,255}
\definecolor{c640000}{RGB}{100,0,0}
\definecolor{cffebdc}{RGB}{255,235,220}
\definecolor{ce0ffdc}{RGB}{224,255,220}
\definecolor{c001764}{RGB}{0,23,100}
\definecolor{c644700}{RGB}{100,71,0}
\definecolor{c0d6400}{RGB}{13,100,0}

\def \globalscale {1.000}
\begin{tikzpicture}[y=.7pt, x=.75pt, yscale=\globalscale,xscale=\globalscale, every node/.append style={scale=\globalscale}, inner sep=0pt, outer sep=0pt]
	\begin{scope}[blend group=multiply]
		
		\path[fill=cffffdc,line cap=butt,line join=miter,line width=1.0pt,miter 
		limit=4.0] (1.386, 32.097) -- (-6.541, 53.876) -- (-6.715, 121.907) -- (80.24,
		121.907) -- (47.479, 31.897) -- cycle;

		\path[fill=cededed,line cap=butt,line join=miter,line width=1.0pt,miter 
		limit=4.0] (53.33, 112.595) -- (78.176, 116.238) -- (114.29, 16.94) -- 
		(18.545, 17.025) -- cycle;

		\path[draw=black,fill=cefffff,line width=1.0pt] (78.242, 116.42) -- (72.784, 
		101.423).. controls (78.955, 99.856) and (84.855, 96.058) .. (91.163, 96.57)..
		controls (94.736, 96.86) and (99.451, 97.575) .. (101.061, 100.778).. 
		controls (103.1, 104.833) and (101.305, 111.053) .. (97.863, 114.013).. 
		controls (93.566, 117.709) and (86.446, 114.213) .. (80.924, 115.488).. 
		controls (80.003, 115.701) and (79.115, 116.034) .. (78.242, 116.42) -- cycle;
	\end{scope}

	\path[draw=black,fill=cffffdc,line width=1.0pt] (21.62, 25.475) -- (27.831, 
	42.537).. controls (20.81, 50.41) and (8.166, 48.216) .. (2.493, 41.956).. 
	controls (-0.558, 38.589) and (0.456, 31.717) .. (3.521, 28.364).. controls 
	(7.399, 24.122) and (14.784, 25.686) .. (20.53, 25.529).. controls (20.885, 
	25.52) and (21.254, 25.496) .. (21.62, 25.475) -- cycle;

	\path[draw=black,line cap=butt,line join=miter,line width=0.5pt,miter 
	limit=4.0,dash pattern=on 0.5pt off 2.0pt] (53.33, 112.595) -- (18.545, 
	17.025);

	\path[draw=black,line cap=butt,line join=miter,line width=0.5pt,miter 
	limit=4.0,dash pattern=on 0.5pt off 2.0pt] (101.253, 108.706) -- (134.622, 
	17.025);

	\node[text=black,line cap=butt,line join=miter,line width=1.0pt,miter 
	limit=4.0,anchor=south west] (text8) at (88, 101){$U$};

	\node[text=black,line cap=butt,line join=miter,line width=1.0pt,miter 
	limit=4.0,anchor=south west] (text9) at (7, 31.272){$V$};

	\node[text=c640000,line cap=butt,line join=miter,line width=1.0pt,miter 
	limit=4.0,anchor=south west] (text10) at (2, 100){$U\wedge \Up V$};

	\node[text=c640000,line cap=butt,line join=miter,line width=1.0pt,miter 
	limit=4.0,anchor=south west] (text12) at (52.917, 26){$\Down U\wedge V$};

	\node[text=black,line cap=butt,line join=miter,line width=1.0pt,miter 
	limit=4.0,anchor=south west] (text14) at (118, 67){$\Down (U\wedge \Up 
		V)$};

	\path[draw=black,fill=cffebdc,line cap=butt,line join=miter,line 
	width=1.0pt,miter limit=4.0] (21.62, 25.475).. controls (30.063, 24.992) and 
	(41.144, 21.785) .. (46.042, 28.364).. controls (49.187, 32.59) and (47.399, 
	40.722) .. (43.207, 43.913).. controls (39.112, 47.03) and (32.148, 47.576) ..
	(27.831, 42.537) -- cycle;

	\path[draw=black,fill=cffebdc,line width=1.0pt] (78.242, 116.42).. controls 
	(75.52, 117.623) and (72.951, 119.355) .. (70.021, 119.679).. controls 
	(65.817, 120.144) and (60.82, 120.407) .. (57.469, 117.826).. controls 
	(54.075, 115.212) and (51.396, 110.177) .. (52.406, 106.014).. controls 
	(53.075, 103.257) and (56.236, 101.355) .. (58.969, 100.598).. controls 
	(62.136, 99.72) and (65.435, 102.213) .. (68.719, 102.064).. controls (70.091,
	102.002) and (71.444, 101.763) .. (72.784, 101.423) -- cycle;

	\path[draw=black,line cap=butt,line join=miter,line width=0.5pt,miter 
	limit=4.0,dash pattern=on 0.5pt off 2.0pt] (78.176, 116.238) -- (114.287, 
	17.025);

	\path[draw=black,line cap=butt,line join=miter,line width=0.5pt,miter 
	limit=4.0,dash pattern=on 0.5pt off 2.0pt] (47.479, 31.897) -- (80.24, 
	121.907);

	\path[->,draw=black,line cap=butt,line join=miter,line width=0.5pt,miter 
	limit=4.0] (134.857, 67.994).. controls (134.857, 67.994) and (132.198, 
	56.551) .. (127.559, 56.048).. controls (124.428, 55.708) and (124.117, 
	62.302) .. (121.018, 62.866).. controls (118.494, 63.325) and (116.559, 
	58.907) .. (114.073, 59.545).. controls (110.931, 60.352) and (111.58, 66.793)
	.. (108.404, 67.455).. controls (105.172, 68.129) and (103.169, 62.837) .. 
	(99.9, 62.38).. controls (96.728, 61.936) and (90.615, 64.853) .. (90.615, 
	64.853);

	\path[draw=black,line cap=butt,line join=miter,line width=0.5pt,miter 
	limit=4.0,dash pattern=on 0.5pt off 2.0pt] (1.386, 32.097) -- (-6.541, 53.876);

		%
		%
		%
		%
		%
		%
		%
		%
		%
		%
		%
		%
		%
		%
		%
		%
		%
		%
		%
		%
		%
		%
		%
		%
		%
		%
		%
	
\end{tikzpicture}
	\caption{Illustration of~\eqref{axiom:frobenius-} in a Minkowski-like space.}
	\label{figure:axiom:frobenius-}
\end{figure}

The following is a very useful reformulation of axioms~\eqref{axiom:wedge}, and is the one we most often use in computations.

\begin{lemma}\label{lemma:wedge iff strong wedge}
	An ordered locale $(X,\Leq)$ satisfies~\eqref{axiom:wedge} if and only if the following two laws are satisfied:
	\[
\begin{tikzcd}[every label/.append style = {font = \normalsize},column sep=0.25cm, row sep=0.2cm]
	{W} & \highlighttikzcd{\Up W\wedge V} \\
	{U} & {V}
	\arrow["\sqleq"{anchor=center, rotate=-90}, draw=none, from=2-1, to=1-1]
	\arrow["\sqleq"{anchor=center, rotate=-90}, draw=none, from=2-2, to=1-2]
	\arrow["{}"{description}, "\Leq"{anchor=center}, draw=none, from=1-1, to=1-2]
	\arrow["{}"{description}, "\Leq"{anchor=center}, draw=none, from=2-1, to=2-2]
\end{tikzcd}
\qquad\text{and}\qquad
\begin{tikzcd}[every label/.append style = {font = \normalsize},column sep=0.25cm, row sep=0.2cm]
	\highlighttikzcd{\Down W\wedge U} & {W} \\
	{U} & {V.}
	\arrow["\sqleq"{anchor=center, rotate=-90}, draw=none, from=2-1, to=1-1]
	\arrow["\sqleq"{anchor=center, rotate=-90}, draw=none, from=2-2, to=1-2]
	\arrow["{}"{description}, "\Leq"{anchor=center}, draw=none, from=1-1, to=1-2]
	\arrow["{}"{description}, "\Leq"{anchor=center}, draw=none, from=2-1, to=2-2]
\end{tikzcd}
\]
\end{lemma}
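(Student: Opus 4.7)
The plan is to prove both implications, with one being essentially immediate and the other a short calculation using the Frobenius reformulation. First, the direction from the strong laws to~\eqref{axiom:wedge} is tautological: given $U \sqleq V$ and $V \Leq V'$, the left strong diagram produces $U' := \Up U \wedge V'$ as an explicit witness for~\eqref{axiom:wedge+}, and dually the right strong diagram supplies a witness $\Down W \wedge U$ for~\eqref{axiom:wedge-}. So the strong laws trivially subsume~\eqref{axiom:wedge}.

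For the converse, assume~\eqref{axiom:wedge} and invoke \cref{lemma:wedge iff frobenius} to additionally assume~\eqref{axiom:cones give order} and the Frobenius inequalities~\eqref{axiom:frobenius+} and~\eqref{axiom:frobenius-}. I will derive the left strong law; the right follows by a dual argument. Given $W \sqleq U$ and $U \Leq V$, the inequality $\Up W \wedge V \sqleq V$ is immediate, so the task reduces to proving $W \Leq \Up W \wedge V$. Using~\eqref{axiom:cones give order}, this amounts to the two inequalities $\Up W \wedge V \sqleq \Up W$ (trivial) and $W \sqleq \Down(\Up W \wedge V)$. For the latter, first note that \cref{lemma:properties of localic cones}(a) gives $U \sqleq \Down V$, so $W \sqleq U \sqleq \Down V$, and combining with $W \sqleq \Up W$ (property (b)) yields $W \sqleq \Up W \wedge \Down V$. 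Now apply~\eqref{axiom:frobenius-} with the substitution $U \mapsto V,\ V \mapsto \Up W$—using $\Up \Up W = \Up W$ from \cref{lemma:properties of localic cones}(d)—to obtain $\Down V \wedge \Up W \sqleq \Down(V \wedge \Up W) = \Down(\Up W \wedge V)$, closing the argument. The right strong law is established symmetrically, starting from $W \sqleq V \sqleq \Up U$ and $W \sqleq \Down W$ and invoking~\eqref{axiom:frobenius+} in place of~\eqref{axiom:frobenius-}.

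The only delicate point is keeping the variable substitution in the Frobenius inequality straight and noticing that the canonical witnesses $\Up W \wedge V$ and $\Down W \wedge U$ are forced once one asks what candidate sits in the right position of the diagram above $V$ (respectively below $U$) and is below $\Up W$ (respectively above $\Down W$); once the correct substitution is identified, the rest is routine bookkeeping with the monad properties of the localic cones from \cref{lemma:properties of localic cones}.
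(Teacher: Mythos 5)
Your proof is correct and takes essentially the same approach as the paper: the strong squares trivially imply~\eqref{axiom:wedge}, and the converse is reduced via \cref{lemma:wedge iff frobenius} to a short Frobenius computation closed off by~\eqref{axiom:cones give order}. The only (harmless) difference is that the paper first extracts an auxiliary witness $W'$ from the existential axiom and uses monotonicity of $\Up$ to reach $W \sqleq \Up U \wedge W$, whereas you obtain the needed inclusion directly from \cref{lemma:properties of localic cones}(a); both arguments then finish identically.
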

\begin{proof}
	Clearly these conditions imply~\eqref{axiom:wedge}. To show the converse, note that in the second square we get from~\eqref{axiom:wedge-} that there exists $W'\in\Opens X$ such that ${U\sqgeq W'\Leq W}$. Hence, using \cref{lemma:properties of localic cones} and~\eqref{axiom:frobenius+} we get
	\[
	W =\Up W'\wedge W\sqleq \Up U \wedge W \sqleq \Up \left(U \wedge \Down W\right),
	\]
	so that by~\eqref{axiom:cones give order} we get $\Down W\wedge U\Leq W$. The proof for the first square is similar.
\end{proof}

\begin{remark}
	From this point on we use~\eqref{axiom:wedge} to refer interchangeably to the squares in either \cref{lemma:wedge iff strong wedge} or \cref{definition:parallel ordered locales}. The visual intuition of these stronger axioms is in \cref{figure:axiom wedge+}(b).
\end{remark}

\begin{lemma}\label{lemma:parallel order gives order between arbitrary opens}
	In a parallel ordered locale $(X,\Leq)$, for arbitrary $U,V\in\Opens X$:
	\[
	U\wedge \Down V \Leq V\wedge \Up\left( U \wedge \Down V \right)
	\qquad\text{and}\qquad
	U\wedge \Down \left(\Up U \wedge V\right) \Leq \Up U \wedge V.
	\]
\end{lemma}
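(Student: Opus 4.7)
The plan is to recognise that both inequalities match exactly the shape of the two ``strong wedge'' squares supplied by \cref{lemma:wedge iff strong wedge}, once the right substitutions are made. That is the most convenient reformulation of the parallel ordered axioms~\eqref{axiom:wedge}, so the entire proof should boil down to choosing $W$, $U$, $V$ in each square and verifying two immediate hypotheses. No further axioms (in particular, not~\eqref{axiom:empty}) should be needed, and the key ingredient from the ordered locale structure is that the localic cones are \emph{non-decreasing}, i.e., $\Down V \Leq V$ and $U \Leq \Up U$ from \cref{lemma:properties of localic cones}(c).

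For the first inequality, I would apply the \emph{first} square of \cref{lemma:wedge iff strong wedge} with $W := U \wedge \Down V$ along the top-left, and $\Down V \Leq V$ along the bottom (the latter being \cref{lemma:properties of localic cones}(c)). The left side of the square requires $U \wedge \Down V \sqleq \Down V$, which is automatic. The conclusion of the square is $W \Leq \Up W \wedge V$, which unpacks directly to
\[
U \wedge \Down V \Leq V \wedge \Up(U \wedge \Down V).
\]

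For the second inequality, the same approach dualises: apply the \emph{second} square with $W := \Up U \wedge V$ along the top-right, and $U \Leq \Up U$ along the bottom (again \cref{lemma:properties of localic cones}(c)). The right side demands $\Up U \wedge V \sqleq \Up U$, which is automatic. The square then concludes $\Down W \wedge U \Leq W$, which is exactly
\[
U \wedge \Down(\Up U \wedge V) \Leq \Up U \wedge V.
\]

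The only real obstacle is careful bookkeeping: identifying which role each open in the statement plays in the square, and confirming that the resulting ``side'' inclusions are trivially meets with the cones. Once this matching is done, the parallel ordered axioms via \cref{lemma:wedge iff strong wedge} do all the work.
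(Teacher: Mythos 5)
Your proof is correct, and it takes a mildly different route from the paper's. You instantiate the two strong-wedge squares of \cref{lemma:wedge iff strong wedge} directly, feeding in $\Down V \Leq V$ (resp.\ $U \Leq \Up U$) from \cref{lemma:properties of localic cones}(c) as the bottom row; both substitutions check out, the side inclusions $U\wedge\Down V\sqleq \Down V$ and $\Up U\wedge V\sqleq \Up U$ are indeed automatic, and (c) is available since axiom~\eqref{axiom:V} is part of the definition of an ordered locale. The paper instead passes to the cone-level reformulation via \cref{lemma:wedge implies cones determine order,lemma:wedge iff frobenius} and verifies the two inclusions demanded by~\eqref{axiom:cones give order}: one is trivial, and the other, e.g.\ $U\wedge\Down V = \Down V\wedge(U\wedge\Down V)\sqleq \Down\bigl(V\wedge\Up(U\wedge\Down V)\bigr)$, is a single application of a Frobenius law. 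The two arguments are re-packagings of one another, since \cref{lemma:wedge iff strong wedge} is itself derived from~\eqref{axiom:frobenius} and~\eqref{axiom:cones give order}; yours buys a pure substitution with nothing left to verify beyond trivial meets, while the paper's makes the Frobenius mechanism explicit, which is the form it reuses in later computations. You are also right that axiom~\eqref{axiom:empty} plays no role in either version.
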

\begin{proof}
	By \cref{lemma:wedge implies cones determine order,lemma:wedge iff frobenius} it suffices to give an argument using the localic cones utilising~\eqref{axiom:frobenius}. We prove $U\wedge \Down V \Leq V\wedge \Up\left( U \wedge \Down V \right)$. Trivially, the right hand term is contained in the future of the left hand term. For the other inclusion, use~\eqref{axiom:frobenius-} to calculate:
	\[
	U\wedge \Down V
	=
	\Down V \wedge (U\wedge \Down V)
	\sqleq
	\Down(V\wedge \Up(U\wedge\Down V)),
	\]
	which is the desired inclusion. The other case is proved dually using~\eqref{axiom:frobenius+}.
\end{proof}

Of course, generally, the intersection $U\wedge \Down V$ may well be empty, so that the lemma tells us simply that $\varnothing\Leq \varnothing$. While seemingly trivial, this is actually an important case, and captures the essence of what we mean by parallel orderedness. The visual idea is captured in \cref{figure:parallel vs nonparallel ordered}.

\begin{proposition}\label{proposition:cones are parallel}
	For any $U,V\in\Opens X$ in a parallel ordered locale $(X,\Leq)$ we have:
	\[
	U\wedge \Down V =\varnothing
	\qquad\text{if and only if}\qquad
	\Up U \wedge V =\varnothing.
	\]
\end{proposition}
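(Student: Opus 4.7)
The plan is to reduce the statement to the Frobenius-style conditions \eqref{axiom:frobenius} from \cref{lemma:wedge iff frobenius}, combined with the bottom-preservation axiom \eqref{axiom:empty}. The key preliminary observation I would establish first is that in any ordered locale satisfying \eqref{axiom:empty}, the localic cones send the bottom element to itself: $\Up\varnothing = \varnothing = \Down\varnothing$. Indeed, by \cref{definition:localic cones}, $\Up\varnothing = \bigvee\{W\in\Opens X : \varnothing \Leq W\}$, but axiom \eqref{axiom:empty} forces any such $W$ to equal $\varnothing$, so the join is trivial. Dually for $\Down\varnothing$.

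For the forward direction, I would invoke \cref{lemma:wedge iff frobenius}: since $(X,\Leq)$ is parallel ordered, \eqref{axiom:wedge} holds, hence so does \eqref{axiom:frobenius-}. Applying it gives
\[
\Up U \wedge V \sqleq \Up\bigl(U\wedge \Down V\bigr) = \Up\varnothing = \varnothing,
\]
so $\Up U \wedge V = \varnothing$. For the converse, I would apply the dual condition \eqref{axiom:frobenius+} (also guaranteed by \cref{lemma:wedge iff frobenius}), but with the roles of $U$ and $V$ swapped in its statement, yielding
\[
U\wedge \Down V = \Down V \wedge U \sqleq \Down\bigl(V\wedge \Up U\bigr) = \Down\varnothing = \varnothing.
\]

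There is not really a main obstacle here: once \cref{lemma:wedge iff frobenius} is in hand and the lemma on $\Up\varnothing$ is noted, the argument is essentially a one-line application of Frobenius reciprocity in each direction. The only subtlety worth flagging is that the \emph{join-preservation} axiom \eqref{axiom:LV} is \emph{not} being used; the fact that $\Up$ and $\Down$ vanish on $\varnothing$ comes solely from \eqref{axiom:empty}, which is a much weaker hypothesis and precisely the axiom added in \cref{definition:parallel ordered locales} to rule out pathological examples like \cref{example:upper order}. This is what makes the proposition a genuine characterisation of parallel orderedness rather than a consequence of join-preservation.
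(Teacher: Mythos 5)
Your proposal is correct and takes essentially the same route as the paper: both directions are one-line applications of the Frobenius conditions from \cref{lemma:wedge iff frobenius}, combined with the observation (forced by \eqref{axiom:empty}) that $\Up\varnothing=\varnothing=\Down\varnothing$, and your side remark that \eqref{axiom:LV} plays no role is accurate. One cosmetic caution: the paper's labels are crossed relative to the displayed tags (the label \texttt{axiom:frobenius+} carries the tag \eqref{axiom:frobenius+}, i.e.\ the inequality $\Up U\wedge V\sqleq\Up(U\wedge\Down V)$, while \texttt{axiom:frobenius-} carries \eqref{axiom:frobenius-}), so your citations point to the opposite condition from the inequality you actually write down, but the inequalities themselves are the correct ones in each direction.
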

\begin{proof}
	Assume $U\wedge\Down V =\varnothing$. Using~\eqref{axiom:empty} and~\eqref{axiom:frobenius+} immediately gives the inclusion $\Up U\wedge V \sqleq \Up (U\wedge\Down V)= \Up \varnothing =\varnothing$. The other implication follows using~\eqref{axiom:frobenius-}.
\end{proof}

\begin{figure}[t]\centering
	\begin{subfigure}[b]{0.3\textwidth}
		\definecolor{cefffff}{RGB}{239,255,255}
\definecolor{cffffdc}{RGB}{255,255,220}
\definecolor{c000208}{RGB}{0,2,8}
\definecolor{cffebdc}{RGB}{255,235,220}
\definecolor{ce0ffdc}{RGB}{224,255,220}
\definecolor{c001764}{RGB}{0,23,100}
\definecolor{c640000}{RGB}{100,0,0}
\definecolor{ccd0000}{RGB}{205,0,0}
\definecolor{c644700}{RGB}{100,71,0}
\definecolor{c0d6400}{RGB}{13,100,0}

\def \globalscale {1.000000}
\begin{tikzpicture}[y=.75pt, x=.75pt, yscale=\globalscale,xscale=\globalscale, every node/.append style={scale=\globalscale}, inner sep=0pt, outer sep=0pt]
	\begin{scope}[blend group = multiply]
	\path[fill=cefffff,line cap=butt,line join=miter,line width=1.0pt,miter 
	limit=4.0] (26.987, 22.695) -- (62.147, 83.594) -- (68.719, 68.049) -- 
	(92.819, 74.894) -- (108.404, 48.206) -- (108.404, 22.695) -- cycle;

	\path[fill=cffffdc,line cap=butt,line join=miter,line width=1.0pt,miter 
	limit=4.0] (21.742, 37.101) -- (35.134, 37.101) -- (49.337, 29.375) -- 
	(49.337, 42.537) -- (51.711, 42.537) -- (51.711, 93.561) -- (21.742, 93.561) 
	-- cycle;
	\end{scope}

	\path[draw=black,line cap=butt,line join=miter,line width=0.5pt,miter 
	limit=4.0,dash pattern=on 0.5pt off 2.0pt] (21.742, 37.101) -- (21.742, 
	93.561);

	\path[draw=black,fill=cefffff,line cap=butt,line join=miter,line 
	width=1.0pt,miter limit=4.0] (64.538, 72.794).. controls (63.816, 71.652) and 
	(61.512, 73.676) .. (60.683, 71.54).. controls (59.701, 69.01) and (66.844, 
	67.97) .. (70.146, 66.673).. controls (75.675, 64.499) and (81.769, 59.692) ..
	(87.305, 61.849).. controls (91.269, 63.394) and (95.18, 68.615) .. (94.016, 
	72.708).. controls (92.977, 76.361) and (87.711, 79.017) .. (83.82, 77.797).. 
	controls (79.342, 76.392) and (76.105, 69.736) .. (71.554, 70.884).. controls 
	(66.408, 72.181) and (68.042, 86.14) .. (63.05, 84.341).. controls (59.399, 
	83.025) and (66.613, 76.073) .. (64.538, 72.794) -- cycle;

	\path[draw=black,fill=cffffdc,line width=1.0pt] (33.829, 34.545) -- (40.89, 
	46.775).. controls (38.985, 48.868) and (37.95, 53.023) .. (35.467, 52.285).. 
	controls (32.949, 51.536) and (35.338, 46.725) .. (33.826, 44.577).. controls 
	(31.1, 40.703) and (20.905, 41.763) .. (21.741, 37.101).. controls (22.427, 
	33.279) and (29.445, 35.922) .. (33.163, 34.797).. controls (33.383, 34.731) 
	and (33.606, 34.634) .. (33.829, 34.545) -- cycle;

	\path[draw=black,line cap=butt,line join=miter,line width=0.5pt,miter 
	limit=4.0,dash pattern=on 0.5pt off 2.0pt] (49.337, 29.375) -- (49.337, 
	39.703);

	\path[draw=black,line cap=butt,line join=miter,line width=0.5pt,miter 
	limit=4.0,dash pattern=on 0.5pt off 2.0pt] (51.711, 42.537) -- (51.711, 
	93.561);

	\path[draw=black,line cap=butt,line join=miter,line width=0.5pt,miter 
	limit=4.0,dash pattern=on 0.5pt off 2.0pt] (62.147, 83.594) -- (26.987, 
	22.695);

	\path[draw=black,line cap=butt,line join=miter,line width=0.5pt,miter 
	limit=4.0,dash pattern=on 0.5pt off 2.0pt] (92.819, 74.894) -- (108.226, 
	48.206);

	\node[text=c000208,line cap=butt,line join=miter,line width=1.0pt,miter 
	limit=4.0,anchor=south west] (text13) at (34.282, 6.689){$U\wedge \Down V 
		\neq\varnothing$};

	\path[draw=black,fill=cffebdc,line cap=butt,line join=miter,line 
	width=1.0pt,miter limit=4.0] (33.829, 34.545).. controls (39.086, 32.45) and 
	(44.988, 24.573) .. (48.876, 28.364).. controls (50.992, 30.427) and (45.095, 
	33.766) .. (45.727, 36.653).. controls (46.324, 39.386) and (52.32, 39.807) ..
	(51.71, 42.537).. controls (50.944, 45.972) and (44.729, 44.164) .. (41.748, 
	46.034).. controls (41.44, 46.227) and (41.157, 46.482) .. (40.89, 46.775) -- 
	cycle;

	\node[text=c000208,line cap=butt,line join=miter,line width=1.0pt,miter 
	limit=4.0,anchor=south west] (text17) at (34.537, 97){$\Up U \wedge V 
		=\varnothing$};

		%
		%
		%
		%
		%
		%
		%
		%
		%
		%
		%
		%
		%
		%
		%
		%
		%
		%
		%
		%
		%
		%
		%
		%
		%
		%
		%
		%
		%
		%
	
\end{tikzpicture}
		\caption{Non-parallel cones.}
	\end{subfigure}\hfil
	\begin{subfigure}[b]{0.3\textwidth}
		\definecolor{cffffdc}{RGB}{255,255,220}
\definecolor{cefffff}{RGB}{239,255,255}
\definecolor{c000208}{RGB}{0,2,8}
\definecolor{cffebdc}{RGB}{255,235,220}
\definecolor{ce0ffdc}{RGB}{224,255,220}
\definecolor{c001764}{RGB}{0,23,100}
\definecolor{c640000}{RGB}{100,0,0}
\definecolor{ccd0000}{RGB}{205,0,0}
\definecolor{c644700}{RGB}{100,71,0}
\definecolor{c0d6400}{RGB}{13,100,0}

\def \globalscale {1.000000}
\begin{tikzpicture}[y=.75pt, x=.75pt, yscale=\globalscale,xscale=\globalscale, every node/.append style={scale=\globalscale}, inner sep=0pt, outer sep=0pt]
	\begin{scope}[blend group = multiply]
		\path[fill=cffffdc,line cap=butt,line join=miter,line width=1.0pt,miter 
		limit=4.0] (21.742, 36.868) -- (7.013, 62.38) -- (7.013, 93.686) -- (86.518, 
		93.561) -- (48.876, 28.364) -- (34.703, 39.703) -- cycle;

		\path[fill=cefffff,line cap=butt,line join=miter,line width=1.0pt,miter 
		limit=4.0] (26.987, 22.695) -- (62.147, 83.594) -- (68.719, 68.049) -- 
		(92.819, 74.894) -- (108.404, 48.206) -- (108.404, 22.695) -- cycle;
	\end{scope}
	
	\path[draw=black,fill=cefffff,line width=1.0pt] (73.393, 70.829) -- (70.83, 
	66.387).. controls (76.175, 64.11) and (81.996, 59.781) .. (87.305, 61.85).. 
	controls (91.269, 63.395) and (95.181, 68.615) .. (94.017, 72.707).. controls 
	(92.977, 76.361) and (87.711, 79.017) .. (83.82, 77.796).. controls (79.968, 
	76.588) and (77.034, 71.5) .. (73.393, 70.829) -- cycle;

	\path[draw=black,fill=cffffdc,line width=1.0pt] (33.829, 34.545) -- (40.89, 
	46.775).. controls (39.164, 48.672) and (38.152, 52.263) .. (36.129, 52.369)..
	controls (35.92, 52.381) and (35.7, 52.354) .. (35.467, 52.285).. controls 
	(32.949, 51.536) and (35.338, 46.725) .. (33.826, 44.577).. controls (31.1, 
	40.703) and (20.905, 41.763) .. (21.741, 37.101).. controls (22.427, 33.279) 
	and (29.445, 35.922) .. (33.163, 34.797).. controls (33.383, 34.731) and 
	(33.606, 34.634) .. (33.829, 34.545) -- cycle;

	\path[draw=black,line cap=butt,line join=miter,line width=0.5pt,miter 
	limit=4.0,dash pattern=on 0.5pt off 2.0pt] (62.147, 83.594) -- (26.987, 
	22.695);

	\path[draw=black,line cap=butt,line join=miter,line width=0.5pt,miter 
	limit=4.0,dash pattern=on 0.5pt off 2.0pt] (92.819, 74.894) -- (108.226, 
	48.206);

	\node[text=c000208,line cap=butt,line join=miter,line width=1.0pt,miter 
	limit=4.0,anchor=south west] (text13) at (34.282, 6.689){$U\wedge \Down V 
		\neq\varnothing$};

	\node[text=c000208,line cap=butt,line join=miter,line width=1.0pt,miter 
	limit=4.0,anchor=south west] (text17) at (34.537, 97){$\Up U \wedge V 
		\neq\varnothing$};

	\path[draw=black,line cap=butt,line join=miter,line width=0.5pt,miter 
	limit=4.0,dash pattern=on 0.5pt off 2.0pt] (21.742, 36.868) -- (7.013, 62.38);

	\path[draw=black,line cap=butt,line join=miter,line width=0.5pt,miter 
	limit=4.0,dash pattern=on 0.5pt off 2.0pt] (48.876, 28.364) -- (86.518, 
	93.561);

	\path[draw=black,fill=cffebdc,line width=1.0pt] (33.829, 34.545).. controls 
	(39.086, 32.45) and (44.988, 24.573) .. (48.876, 28.364).. controls (50.992, 
	30.427) and (45.095, 33.766) .. (45.727, 36.653).. controls (46.324, 39.386) 
	and (52.32, 39.807) .. (51.71, 42.537).. controls (50.944, 45.972) and 
	(44.729, 44.164) .. (41.748, 46.034).. controls (41.44, 46.227) and (41.157, 
	46.482) .. (40.89, 46.775) -- cycle;

	\path[draw=black,fill=cffebdc,line width=1.0pt] (73.393, 70.829).. controls 
	(72.801, 70.72) and (72.19, 70.723) .. (71.553, 70.884).. controls (66.408, 
	72.181) and (68.042, 86.14) .. (63.05, 84.341).. controls (59.399, 83.025) and
	(66.613, 76.074) .. (64.538, 72.794).. controls (63.816, 71.652) and (61.512,
	73.676) .. (60.683, 71.54).. controls (59.7, 69.01) and (66.844, 67.97) .. 
	(70.146, 66.672).. controls (70.372, 66.583) and (70.601, 66.484) .. (70.83, 
	66.387) -- cycle;

		%
		%
		%
		%
		%
		%
		%
		%
		%
		%
		%
		%
		%
		%
		%
		%
		%
		%
		%
		%
		%
		%
		%
		%
		%
		%
		%
		%
		%
		%
	
\end{tikzpicture}
		\caption{Parallel cones.}
	\end{subfigure}
	\caption{Intuition of non-parallel vs.~parallel cones.}
	\label{figure:parallel vs nonparallel ordered}
\end{figure}
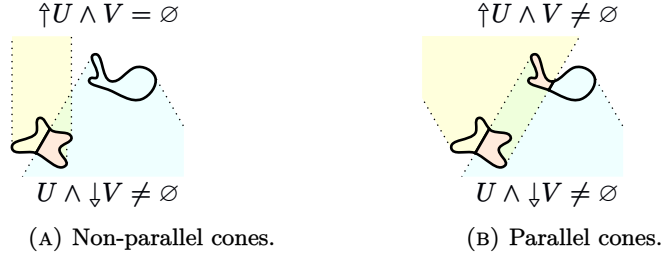

To close this section, we remark on parallel orderedness in spaces.
\begin{proposition}\label{proposition:space with OC is parallel ordered}
	If $(S,\leq)$ has open cones, then $\loc(S)$ is parallel ordered.
\end{proposition}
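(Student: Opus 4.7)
The plan is to verify each of the three defining axioms of parallel orderedness---\eqref{axiom:wedge+}, \eqref{axiom:wedge-}, and \eqref{axiom:empty}---directly using the explicit form of the Egli-Milner order on $\loc(S)$ (\cref{example:egli-milner order}), together with the fact that in a space with open cones the localic cones coincide with the point-wise cones: $\Up U = \up U$ and $\Down U = \down U$ by \cref{corollary:localic cones in space with open cones}.

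First I would dispatch axiom~\eqref{axiom:empty}, which is essentially immediate: if $U \Leq \varnothing$ in the Egli-Milner order then $U \subseteq \down\varnothing = \varnothing$, so $U = \varnothing$, and symmetrically $\varnothing \Leq V$ forces $V \subseteq \up\varnothing = \varnothing$.

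Next, for axiom~\eqref{axiom:wedge+} I would explicitly produce the witness $U' := V' \cap \up U$. This is open because $\up U$ is open by the open cone hypothesis, and $U' \sqleq V'$ by construction. To check $U \Leq U'$ in the Egli-Milner sense, the inclusion $U' \subseteq \up U$ is immediate, while $U \subseteq \down U'$ follows by taking any $x \in U$: since $U \subseteq V \subseteq \down V'$ there exists $y \in V'$ with $x \leq y$, and then $y \in \up U \cap V' = U'$, so $x \in \down U'$. The argument for~\eqref{axiom:wedge-} is formally dual, taking $U := V \cap \down U'$, which is open by the open cone hypothesis applied to $\down U'$, and then verifying the two Egli-Milner inclusions by the symmetric calculation.

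There is no genuinely hard step here; the only thing to be careful about is ensuring that the witnesses produced are genuinely open, which is exactly what the open cone assumption guarantees, and that the chains of inclusions $U \subseteq V \subseteq \down V'$ and $V' \subseteq \up V$ (coming from $U \sqleq V$ and $V \Leq V'$) are combined correctly to land in the required sub-level sets.
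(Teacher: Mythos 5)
Your proof is correct, but it takes a more direct route than the paper. The paper first notes that \eqref{axiom:empty} holds because $\Up\varnothing=(\up\varnothing)^\circ=\varnothing$ and $\Down\varnothing=\varnothing$, then verifies the Frobenius-type inclusions \eqref{axiom:frobenius+} and \eqref{axiom:frobenius-} pointwise (e.g.\ $\up U\cap V\subseteq\up(U\cap\down V)$), and finally invokes \cref{lemma:wedge iff frobenius} together with the fact that spaces satisfy \eqref{axiom:cones give order} to deduce the wedge axioms. You instead verify \eqref{axiom:wedge+} and \eqref{axiom:wedge-} directly, exhibiting the explicit open witnesses $U'=V'\cap\up U$ and $U=V\cap\down U'$ and checking the two Egli--Milner inclusions pointwise; this is legitimate and in effect inlines, in the spatial setting, the witness construction that appears in the converse direction of the proof of \cref{lemma:wedge iff frobenius} (where $U':=\Up U\wedge V'$ is used). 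What the paper's route buys is reuse of the general lemma and emphasis on the Frobenius reformulation; what yours buys is a self-contained, elementary verification that never mentions \eqref{axiom:frobenius} or \eqref{axiom:cones give order}. Your treatment of \eqref{axiom:empty} matches the paper's (and, as the paper notes, needs no open-cone hypothesis), and your attention to openness of the witnesses is exactly where the open-cone assumption enters.
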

\begin{proof}
	From the characterisation of the localic cones in a space, it follows that $\Up \varnothing = (\up \varnothing)^\circ = \varnothing$, and similarly $\Down \varnothing = \varnothing$, so $\loc(S)$ satisfies~\eqref{axiom:empty} (even without open cones).
	
	In a space with open cones, axiom~\eqref{axiom:frobenius+} says that $\up U \cap V \subseteq \up\left(U\cap \down V\right)$. This is almost trivially true: if $y\in \up U\cap V$, then there exists $x\in U$ such that $x\leq y$. But then $x\in \down \{y\}\subseteq \down V$, so $y \in \up\left(U\cap \down V\right)$. The dual axiom~\eqref{axiom:frobenius-} follows similarly, and since ordered spaces satisfy~\eqref{axiom:cones give order}, the result follows by \cref{lemma:wedge iff frobenius}.
\end{proof}

\begin{example}\label{example:non-OC is not parallel}
	Locales induced by spaces without open cones are not always parallel ordered. Consider the space $\{\ast\}\sqcup \mathbb{R}$, where $\ast\leq 0$, which does not have open cones since $\up\{\ast\}$ is not open. We can then see that~\eqref{axiom:frobenius-} is violated, since it would imply a contradiction:
	\[
	\{\ast\} = \{\ast\}\cap \Down \mathbb{R} \subseteq \Down\left( \Up \{\ast\}\cap \mathbb{R}\right) = \Down \left(\{0\}^\circ \cap \mathbb{R}\right)= \Down \varnothing = \varnothing.
	\]
\end{example}

\begin{example}\label{example:minkowski different speeds of light}
	We construct a modified example of Minkowski space where the speed of light into the future is independent from the speed of light into the past. To do this, we consider the underlying space $\mathbb{R}^2$, and equip it with a family of preorders $\leq_\alpha$ defined by the Minkowski metric with speed of light $1/\alpha$. Explicitly:
	\[
	x\leq_\alpha y
	\qquad\text{if and only if}\qquad
	\parbox{.35\textwidth}{the straight line connecting $x$ and $y$ makes an angle with the horizontal axis of at least $\alpha$ radians.}
	\]
	For $\alpha\in (0,\pi/2)$ the pair $(\mathbb{R}^2,\leq_\alpha)$ comes from a spacetime, and so must have open cones (\cref{theorem:spacetimes have OC}). In particular we get monads $\Upsub{\alpha}$ and $\Downsub{\alpha}$ on $\Opens\mathbb{R}^2$.
	
	For any two numbers $\alpha,\beta\in (0,\pi/2)$ we therefore get an ordered locale $\loc(\mathbb{R}^2)$ whose order $\Leq$ is defined by the pair of cones $\Upsub{\alpha}$ and $\Downsub{\beta}$. See \cref{figure:minkowski diferent cones}(a). We claim that this locale is parallel ordered if and only if $\alpha = \beta$. In the latter case, where the slopes are indeed equal, the ordered locale $\loc(\mathbb{R}^2)$ is just the one coming from Minkowski space, which is parallel ordered by \cref{proposition:space with OC is parallel ordered}. Conversely, suppose for the sake of contradiction that $\alpha\neq \beta$. It is then straightforward to construct an example, such as in \cref{figure:minkowski diferent cones}(b), that contradicts the conclusion of \cref{proposition:cones are parallel}, showing that $\loc(\mathbb{R}^2)$ is not parallel ordered.
	
	An interesting non-spacetime example of a parallel ordered locale is \emph{vertical-$\mathbb{R}^2$}, induced by the order $\leq_{\pi/2}$, interpreted as Minkowski space where the speed of light is zero, see \cite[Chapter~7]{schaaf2024TowardsPointFreeSpacetimes}.
\end{example}

\begin{figure}[t]\centering
	\begin{subfigure}[b]{0.33\textwidth}\centering
		\definecolor{cffebdc}{RGB}{255,235,220}
\definecolor{ce0ffdc}{RGB}{224,255,220}
\definecolor{cffffdc}{RGB}{255,255,220}
\definecolor{c0d6400}{RGB}{13,100,0}
\definecolor{c640000}{RGB}{100,0,0}
\definecolor{cefffff}{RGB}{239,255,255}
\definecolor{c001764}{RGB}{0,23,100}
\definecolor{ccd0000}{RGB}{205,0,0}
\definecolor{c644700}{RGB}{100,71,0}

\def \globalscale {1.000000}
\begin{tikzpicture}[y=.8pt, x=.8pt, yscale=\globalscale,xscale=\globalscale, every node/.append style={scale=\globalscale}, inner sep=0pt, outer sep=0pt]
	
	\begin{scope}[blend group = multiply]
		\path[fill=cffebdc,line cap=butt,line join=miter,line width=1.0pt,miter 
		limit=4.0] (48.876, 79.388) -- (0.687, 61.848) -- (0.687, 39.703) -- (128.247,
		39.776) -- (128.247, 66.672) -- (82.121, 83.461) -- cycle;

		\path[fill=ce0ffdc,line cap=butt,line join=miter,line width=1.0pt,miter 
		limit=4.0] (31.869, 65.214) -- (94.023, 69.47) -- (101.27, 110.569) -- 
		(23.871, 110.569) -- cycle;
	\end{scope}

	\path[draw=black,fill=cffffdc,line cap=butt,line join=miter,line 
	width=1.0pt,miter limit=4.0] (31.869, 65.214).. controls (31.533, 61.824) and 
	(34.36, 57.939) .. (37.538, 56.71).. controls (44.805, 53.901) and (52.47, 
	63.23) .. (60.215, 62.38).. controls (64.415, 61.919) and (67.333, 56.51) .. 
	(71.554, 56.71).. controls (80.458, 57.133) and (93.624, 61.99) .. (94.231, 
	70.884).. controls (94.733, 78.246) and (84.376, 83.241) .. (77.223, 85.057)..
	controls (67.883, 87.428) and (57.673, 83.321) .. (48.876, 79.388).. controls
	(45.651, 77.945) and (32.113, 67.686) .. (31.869, 65.214) -- cycle;

	\path[draw=black,line cap=butt,line join=miter,line width=0.5pt,miter 
	limit=4.0,dash pattern=on 0.5pt off 2.0pt] (94.023, 69.47) -- (101.27, 
	110.569);

	\path[draw=black,line cap=butt,line join=miter,line width=0.5pt,miter 
	limit=4.0,dash pattern=on 0.5pt off 2.0pt] (31.869, 65.214) -- (23.871, 
	110.569);

	\path[draw=black,line cap=butt,line join=miter,line width=0.5pt,miter 
	limit=4.0,dash pattern=on 0.5pt off 2.0pt] (48.876, 79.388) -- (0.687, 61.848);

	\path[draw=black,line cap=butt,line join=miter,line width=0.5pt,miter 
	limit=4.0,dash pattern=on 0.5pt off 2.0pt] (82.121, 83.461) -- (128.247, 
	66.672);

	\path[draw=black,fill=cffffdc,line cap=butt,line join=miter,line 
	width=0.5pt,miter limit=4.0] (94.231, 69.47) -- (128.247, 69.47);

	\path[draw=black,line cap=butt,line join=miter,line width=0.5pt,miter 
	limit=4.0] (97.301, 87.868)arc(79.895:-0.05000000000001137:18.681) -- (94.023,
	69.477) -- cycle;

	\node[text=black,line cap=butt,line join=miter,line width=1.0pt,miter 
	limit=4.0,anchor=south west] (text45) at (55.875, 69.211){$U$};

	\node[text=c0d6400,line cap=butt,line join=miter,line width=1.0pt,miter 
	limit=4.0,anchor=south west] (text46) at (51.294, 92.662){$\Upsub{\alpha} U$};

	\node[text=c640000,line cap=butt,line join=miter,line width=1.0pt,miter 
	limit=4.0,anchor=south west] (text47) at (48.774, 41){$\Downsub{\beta}U$};

	\node[text=black,line cap=butt,line join=miter,line width=1.0pt,miter 
	limit=4.0,anchor=south west] (text48) at (108.094, 83.461){$\alpha$};

	\path[draw=black,line cap=butt,line join=miter,line width=0.5pt,miter 
	limit=4.0] (48.876, 79.388) -- (0.687, 79.388);

	\path[draw=black,line cap=butt,line join=miter,line width=0.5pt,miter 
	limit=4.0] (23.01, 69.992)arc(199.697:180.074:27.529 and 27.838) -- (48.928, 
	79.375) -- cycle;

	\node[text=black,line cap=butt,line join=miter,line width=1.0pt,miter 
	limit=4.0,anchor=south west] (text49) at (6.844, 67){$\beta$};

		%
		%
		%
		%
		%
		%
		%
		%
		%
		%
		%
		%
		%
		%
		%
		%
		%
		%
		%
		%
		%
		%
		%
		%
		%
		%
		%
		%
		%
		%
	
\end{tikzpicture}
		\caption{Definition of $\Upsub{\alpha}$ and $\Downsub{\beta}$.}
	\end{subfigure}\hfil
	\begin{subfigure}[b]{0.45\textwidth}\centering
		\definecolor{ce0ffdc}{RGB}{224,255,220}
\definecolor{cffebdc}{RGB}{255,235,220}
\definecolor{cffffdc}{RGB}{255,255,220}
\definecolor{c0d6400}{RGB}{13,100,0}
\definecolor{c640000}{RGB}{100,0,0}
\definecolor{cefffff}{RGB}{239,255,255}
\definecolor{c001764}{RGB}{0,23,100}
\definecolor{ccd0000}{RGB}{205,0,0}
\definecolor{c644700}{RGB}{100,71,0}

\def \globalscale {1.000000}
\begin{tikzpicture}[y=.8pt, x=.8pt, yscale=\globalscale,xscale=\globalscale, every node/.append style={scale=\globalscale}, inner sep=0pt, outer sep=0pt]
	
	\begin{scope}[blend group = multiply]
		\path[fill=ce0ffdc,line cap=butt,line join=miter,line width=1.0pt,miter 
		limit=4.0] (62.768, 60.453) -- (53.932, 110.569) -- (104.499, 110.569) -- 
		(95.502, 59.545) -- cycle;

		\path[fill=cffebdc,line cap=butt,line join=miter,line width=1.0pt,miter 
		limit=4.0] (24.355, 104.008) -- (-13.486, 90.235) -- (-13.486, 42.537) -- 
		(128.247, 42.537) -- (128.247, 72.428) -- (46.042, 102.348) -- cycle;
	\end{scope}

	\path[draw=black,fill=cffffdc,line cap=butt,line join=miter,line 
	width=0.5pt,miter limit=4.0] (95.842, 59.502) -- (128.247, 59.502);

	\path[draw=black,line cap=butt,line join=miter,line width=0.5pt,miter 
	limit=4.0] (98.184, 74.65)arc(80.426:-0.05000000000001137:15.355 and 15.353) 
	-- (95.63, 59.51) -- cycle;

	\node[text=c0d6400,line cap=butt,line join=miter,line width=1.0pt,miter 
	limit=4.0,anchor=south west] (text46) at (69.182, 95){$\Upsub{\alpha} V$};

	\node[text=c640000,line cap=butt,line join=miter,line width=1.0pt,miter 
	limit=4.0,anchor=south west] (text47) at (9.247, 54.261){$\Downsub{\beta}U$};

	\node[text=black,line cap=butt,line join=miter,line width=1.0pt,miter 
	limit=4.0,anchor=south west] (text48) at (115, 64.735){$\alpha$};

	\path[draw=black,line cap=butt,line join=miter,line width=0.5pt,miter 
	limit=4.0] (23.72, 103.757) -- (-13.486, 103.757);

	\path[draw=black,line cap=butt,line join=miter,line width=0.5pt,miter 
	limit=4.0] (0.687, 95.388)arc(199.697:180.074:24.519 and 24.793) -- (23.771, 
	103.744) -- cycle;

	\node[text=black,line cap=butt,line join=miter,line width=1.0pt,miter 
	limit=4.0,anchor=south west] (text49) at (-13.744, 91.5){$\beta$};

	\path[draw=black,fill=cffffdc,line cap=butt,line join=miter,line 
	width=1.0pt,miter limit=4.0] (34.703, 104.899).. controls (29.932, 105.569) 
	and (24.462, 104.849) .. (20.53, 102.065).. controls (17.211, 99.715) and 
	(12.62, 95.28) .. (14.299, 91.576).. controls (17.721, 84.027) and (30.659, 
	85.221) .. (38.747, 87.038).. controls (43.858, 88.186) and (51.879, 91.015) 
	.. (51.489, 96.239).. controls (51.02, 102.517) and (40.938, 104.024) .. 
	(34.703, 104.899) -- cycle;

	\path[draw=black,fill=cefffff,line cap=butt,line join=miter,line 
	width=1.0pt,miter limit=4.0] (81.187, 68.049).. controls (76.89, 67.366) and 
	(72.037, 70.816) .. (68.156, 68.847).. controls (65.191, 67.343) and (62.24, 
	63.735) .. (62.768, 60.453).. controls (63.561, 55.532) and (69.459, 51.781) 
	.. (74.388, 51.041).. controls (81.892, 49.915) and (93.835, 50.087) .. 
	(95.502, 59.545).. controls (96.752, 66.635) and (98.31, 71.297) .. (94.231, 
	73.718).. controls (90.154, 76.138) and (80.222, 72.69) .. (81.187, 68.049).. 
	controls (81.187, 68.049) and (81.187, 68.049) .. (81.187, 68.049) -- cycle;

	\path[draw=black,line cap=butt,line join=miter,line width=0.5pt,miter 
	limit=4.0,dash pattern=on 0.5pt off 2.0pt] (24.355, 104.008) -- (-13.486, 
	90.235);

	\path[draw=black,line cap=butt,line join=miter,line width=0.5pt,miter 
	limit=4.0,dash pattern=on 0.5pt off 2.0pt] (46.621, 102.137) -- (128.247, 
	72.428);

	\path[draw=black,line cap=butt,line join=miter,line width=0.5pt,miter 
	limit=4.0,dash pattern=on 0.5pt off 2.0pt] (62.768, 60.453) -- (53.932, 
	110.569);

	\path[draw=black,line cap=butt,line join=miter,line width=0.5pt,miter 
	limit=4.0,dash pattern=on 0.5pt off 2.0pt] (95.502, 59.545) -- (104.499, 
	110.569);

	\node[text=black,line cap=butt,line join=miter,line width=1.0pt,miter 
	limit=4.0,anchor=south west] (text57) at (20.53, 90.726){$U$};

	\node[text=black,line cap=butt,line join=miter,line width=1.0pt,miter 
	limit=4.0,anchor=south west] (text57-5) at (71.796, 54.936){$V$};

		%
		%
		%
		%
		%
		%
		%
		%
		%
		%
		%
		%
		%
		%
		%
		%
		%
		%
		%
		%
		%
		%
		%
		%
		%
		%
		%
		%
		%
		%
	
\end{tikzpicture}
		\caption{Failure of being parallel.}
		\label{figure:minkowski different cones not parallel}
	\end{subfigure}
	\caption{Illustration of the cones in \cref{example:minkowski different speeds of light}.}
	\label{figure:minkowski diferent cones}
\end{figure}
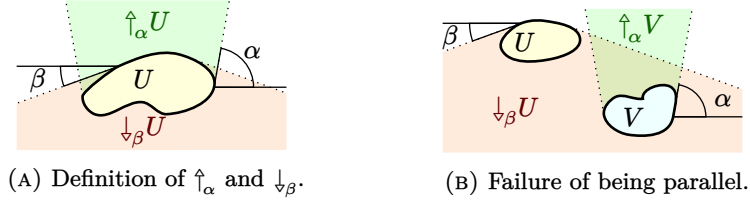

To summarise, we take the framework of ordered locales $(X,\Leq)$ satisfying axioms \eqref{axiom:LV} and~\eqref{axiom:wedge} (which together imply~\eqref{axiom:cones give order} and~\eqref{axiom:empty}) as our basis in which to develop notions of causality theory in a point-free setting. Note by \cref{theorem:spacetimes have OC,proposition:space with OC has join-preserving cones,proposition:space with OC is parallel ordered} that the ordered locale coming from any smooth spacetime satisfies these axioms.

\section{Paths in ordered locales}
\noindent\begin{minipage}{0.68\textwidth}
	The remainder of the paper is dedicated to defining and studying causal coverage in ordered locales. To start, we need the point-free analogue of monotone paths $\gamma\colon [0,1]\to S$. Instead of points, the idea is to take totally ordered families $(p_i)_{i\in I}$ of open regions that cover the image $\im(\gamma)\subseteq S$, such that $p_i\Leq p_j$ whenever $i < j$ (\cref{figure:covered path}). Specifically, we might think of a family $(p_t)_{t\in [0,1]}$ of open neighbourhoods $p_t\ni \gamma(t)$. However, since the unit interval is compact and $\gamma$ is continuous, the image $\im(\gamma)$ is compact, and so $(p_i)_{i\in I}$ will have a finite subcover. For this reason, and due to technical restrictions, we shall only consider paths to be defined in terms of \emph{finite} sequences, instead of arbitrary total orders.
\end{minipage}%
\hfill%
\begin{minipage}{.4\textwidth}\centering
	\definecolor{cffffdc}{RGB}{255,255,220}
\definecolor{cefffff}{RGB}{239,255,255}
\definecolor{c640000}{RGB}{100,0,0}

\def \globalscale {1.000000}
\begin{tikzpicture}[y=.75pt, x=.75pt, yscale=\globalscale,xscale=\globalscale, every node/.append style={scale=\globalscale}, inner sep=0pt, outer sep=0pt]
  \begin{scope}[shift={(-271.5, 73.66)}]
    \begin{scope}[shift={(-93.64, 0.7)}]
      \path[draw=black,fill=cffffdc,line cap=butt,line join=miter,line 
  width=1.0pt,miter limit=4.0] (425.36, -41.74).. controls (419.21, -43.19) and 
  (408.38, -53.38) .. (415.77, -60.2).. controls (423.94, -67.75) and (432.89, 
  -60.93) .. (436.7, -57.33).. controls (440.5, -53.73) and (434.79, -39.52) .. 
  (425.36, -41.74) -- cycle;

      \path[draw=black,fill=cefffff,line cap=butt,line join=miter,line 
  width=1.0pt,miter limit=4.0] (422.52, 53.22).. controls (415.94, 46.93) and 
  (419.69, 41.88) .. (425.36, 39.05).. controls (431.03, 36.21) and (436.27, 
  38.47) .. (439.53, 41.88).. controls (443.95, 46.5) and (442.29, 52.95) .. 
  (437.16, 55.66).. controls (431.03, 58.89) and (427.02, 57.51) .. (422.52, 
  53.22) -- cycle;

      \path[draw=c640000,line cap=butt,line join=miter,line width=0.5pt,miter 
  limit=4.0,dash pattern=on 3.0pt off 1.0pt] (440.85, -28.46).. controls (435.9,
   -19.51) and (420.64, -21.76) .. (416.85, -26.15).. controls (413.07, -30.54) 
  and (415.88, -44.2) .. (422.52, -45.99).. controls (430.68, -48.2) and 
  (446.87, -39.34) .. (440.85, -28.46) -- cycle;

      \path[draw=c640000,line cap=butt,line join=miter,line width=0.5pt,miter 
  limit=4.0,dash pattern=on 3.0pt off 1.0pt] (440.0, -5.31).. controls (434.22, 
  0.83) and (417.41, -5.49) .. (416.18, -13.83).. controls (415.25, -20.12) and 
  (425.37, -26.82) .. (431.5, -25.15).. controls (438.44, -23.27) and (444.93, 
  -10.55) .. (440.0, -5.31) -- cycle;

      \path[draw=c640000,line cap=butt,line join=miter,line width=0.5pt,miter 
  limit=4.0,dash pattern=on 3.0pt off 1.0pt] (442.08, 2.38).. controls (444.38, 
  10.88) and (426.6, 21.82) .. (419.69, 16.37).. controls (413.35, 11.37) and 
  (410.65, -2.5) .. (427.39, -6.6).. controls (432.96, -7.96) and (440.0, -5.31)
   .. (442.08, 2.38) -- cycle;

      \path[draw=c640000,line cap=butt,line join=miter,line width=0.5pt,miter 
  limit=4.0,dash pattern=on 3.0pt off 1.0pt] (439.53, 27.71).. controls (436.37,
   35.94) and (438.13, 44.41) .. (429.14, 42.94).. controls (420.64, 41.55) and 
  (417.88, 29.24) .. (418.54, 24.66).. controls (419.2, 20.09) and (422.29, 
  11.31) .. (430.56, 13.74).. controls (439.53, 16.37) and (442.39, 20.27) .. 
  (439.53, 27.71) -- cycle;

    \end{scope}
    \path[fill=black,line cap=butt,line join=miter,line width=1.0pt,miter 
  limit=4.0] (331.72, -49.55) circle (1.42pt);

    \path[fill=black,line cap=butt,line join=miter,line width=1.0pt,miter 
  limit=4.0] (336.92, 48.3) circle (1.42pt);

    \path[draw=black,line cap=butt,line join=miter,line width=1.0pt,miter 
  limit=4.0] (331.72, -49.55).. controls (331.72, -49.55) and (336.97, -24.17) 
  .. (337.39, -11.28).. controls (337.75, 0.09) and (334.53, 11.36) .. (334.55, 
  22.74).. controls (334.57, 31.29) and (336.8, 48.3) .. (336.8, 48.3);

    \node[text=c640000,line cap=butt,line join=miter,line width=1.0pt,miter 
  limit=4.0,anchor=south west] (text18) at (354.44, 5.36){$p_3$};

    \node[text=black,line cap=butt,line join=miter,line width=1.0pt,miter 
  limit=4.0,anchor=south west] (text11) at (348.73, -56.63){$\gamma(0)$};

    \node[text=black,line cap=butt,line join=miter,line width=1.0pt,miter 
  limit=4.0,anchor=south west] (text11-9) at (301.13, 47.2){$\gamma(1)$};

    \node[text=c640000,line cap=butt,line join=miter,line width=1.0pt,miter 
  limit=4.0,anchor=south west] (text11-9-5) at (354.39, -33.95){$p_1$};

    \node[text=c640000,line cap=butt,line join=miter,line width=1.0pt,miter 
  limit=4.0,anchor=south west] (text11-9-5-1) at (354.39, -14.49){$p_2$};

    \node[text=c640000,line cap=butt,line join=miter,line width=1.0pt,miter 
  limit=4.0,anchor=south west] (text11-9-5-4-7-9) at (352.79, 25.44){$p_4$};

  \end{scope}

\end{tikzpicture}
	
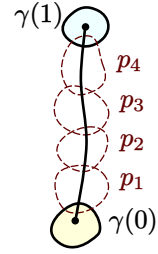
\captionof{figure}{Illustration of a path $(p_i)_{i\in I}$ covering $\im(\gamma)$.}
	\label{figure:covered path}
\end{minipage}

\begin{definition}\label{definition:paths}
	A \emph{path} in an ordered locale $(X,\Leq)$ is a finite sequence $p=(p_n)_{n=0}^N$ of non-empty open regions, called \emph{steps}, such that $p_n\Leq p_{n+1}$ for all ${0\leq n <N}$. 
	
	We call $p_0$ and $p_N$ the \emph{starting point} and \emph{endpoint} of $p$, respectively. When the indexing is not specified, the endpoint of $p$ will be denoted $p_\top$, and the starting point $p_\bot$. 
	
	We say $p$ \emph{inhabits} $V\in\Opens X$ if $p_n\sqleq V$ for some index $n$. We say $p$ \emph{lands in} $V$ if its endpoint is contained in $V$.
	
	Given paths $p=(p_n)_{n=0}^N$ and $q=(q_m)_{m=0}^M$ with $p_N = q_0$, we define their \emph{concatenation} as the path $q\cdot p$ whose steps are defined as $(q\cdot p)_k = p_k$ for $0\leq k\leq N$ and $(q\cdot p)_k=q_{k-N}$ for $N\leq k \leq M+N$. 
\end{definition}

\begin{definition}\label{definition:refinement of paths}
	We say a path $q$ \emph{refines} a path $p$ if every step of $p$ contains some step of $q$: for every $n$ we can find $m$ such that $q_m\sqleq p_n$. In that case we write~$q \refines p$. 
\end{definition}
\noindent\begin{minipage}{0.68\textwidth}
	\begin{remark}
		Note that $\refines$ is just an instance of the \emph{upper order}: for any preordered set $(S,\leq)$ we get a preorder on the powerset $\Powerset(S)$ defined by $A\LeqU B$ iff $B\subseteq \up A$. See e.g.~\cite[\S 11.1]{vickers1989TopologyLogic}. That $q\refines p$ does not mean that every $q_m$ has to be contained in some $p_n$. For instance, any path $p$ can be refined by specifying further steps $p_\top \Leq A$ or~$B\Leq p_\bot$. Conversely, a given step~$p_n$ may contain multiple steps of $q$. See \cref{figure:refined path} for intuition. Interpreting a path $p$ as an approximation of a point-wise curve $\gamma$, the fact that $q\refines p$ is interpreted as $q$ being a more accurate approximation of~$\gamma$. Alternatively, we can think of~$q$ as a tighter `restriction' of where information can flow, a curve $\gamma$ being an idealised `infinitely tight' flow.
	\end{remark}
\end{minipage}%
\hfill%
\begin{minipage}{.44\textwidth}\centering
	\definecolor{cffebdc}{RGB}{255,235,220}
\definecolor{c640000}{RGB}{100,0,0}
\definecolor{cefffff}{RGB}{239,255,255}
\definecolor{c003c64}{RGB}{0,60,100}

\def \globalscale {1.000000}
\begin{tikzpicture}[y=.75pt, x=.75pt, yscale=\globalscale,xscale=\globalscale, every node/.append style={scale=\globalscale}, inner sep=0pt, outer sep=0pt]
  \begin{scope}[shift={(-271.5, 73.66)}]
    \begin{scope}[shift={(-130.49, -2.13)}]
      \path[draw=black,fill=cffebdc,line cap=butt,line join=miter,line 
  width=1.0pt,miter limit=4.0] (484.89, -45.99).. controls (479.94, -37.05) and 
  (456.85, -34.16) .. (450.87, -40.32).. controls (442.5, -48.96) and (445.99, 
  -62.14) .. (453.71, -65.84).. controls (464.82, -71.16) and (490.91, -56.88) 
  .. (484.89, -45.99) -- cycle;

      \path[draw=black,fill=cffebdc,line cap=butt,line join=miter,line 
  width=1.0pt,miter limit=4.0] (472.02, -0.99).. controls (466.25, 5.15) and 
  (449.43, -1.17) .. (448.2, -9.52).. controls (447.27, -15.8) and (457.39, 
  -22.5) .. (463.52, -20.84).. controls (470.47, -18.95) and (476.95, -6.24) .. 
  (472.02, -0.99) -- cycle;

      \path[draw=black,fill=cffebdc,line cap=butt,line join=miter,line 
  width=1.0pt,miter limit=4.0] (479.22, 30.54).. controls (479.22, 39.05) and 
  (469.31, 44.41) .. (460.32, 42.94).. controls (451.82, 41.55) and (442.37, 
  37.52) .. (442.68, 25.79).. controls (442.8, 21.17) and (450.87, 13.53) .. 
  (460.7, 19.74).. controls (468.6, 24.73) and (479.22, 16.37) .. (479.22, 
  30.54) -- cycle;

    \end{scope}
    \begin{scope}[shift={(-18.34, -10.22)}]
      \node[text=c640000,line cap=butt,line join=miter,line width=1.0pt,miter 
  limit=4.0,anchor=south west] (text11-9-5) at (380.45, -48.74){$p_1$};

      \node[text=c640000,line cap=butt,line join=miter,line width=1.0pt,miter 
  limit=4.0,anchor=south west] (text11-9-5-1) at (380.36, -3.25){$p_2$};

      \node[text=c640000,line cap=butt,line join=miter,line width=1.0pt,miter 
  limit=4.0,anchor=south west] (text18) at (380.38, 36.07){$p_3$};

    \end{scope}
    \path[draw=black,fill=cefffff,line cap=butt,line join=miter,line 
  width=1.0pt,miter limit=4.0] (326.05, -28.29).. controls (324.5, -30.59) and 
  (326.42, -35.53) .. (329.13, -35.85).. controls (332.21, -36.21) and (335.97, 
  -31.05) .. (334.55, -28.29).. controls (333.26, -25.76) and (327.6, -25.98) ..
   (326.05, -28.29) -- cycle;

    \path[draw=black,fill=cefffff,line cap=butt,line join=miter,line 
  width=1.0pt,miter limit=4.0] (326.42, -43.97).. controls (324.64, -46.87) and 
  (328.69, -52.33) .. (332.09, -52.47).. controls (335.13, -52.6) and (339.28, 
  -49.46) .. (337.76, -45.34).. controls (336.78, -42.68) and (328.42, -40.73) 
  .. (326.42, -43.97) -- cycle;

    \path[draw=black,fill=cefffff,line cap=butt,line join=miter,line 
  width=1.0pt,miter limit=4.0] (334.55, -56.63).. controls (331.67, -57.06) and 
  (328.71, -62.01) .. (330.45, -64.34).. controls (332.45, -67.0) and (341.74, 
  -66.41) .. (340.22, -62.3).. controls (339.24, -59.64) and (337.2, -56.24) .. 
  (334.55, -56.63) -- cycle;

    \path[draw=black,fill=cefffff,line cap=butt,line join=miter,line 
  width=1.0pt,miter limit=4.0] (326.98, -8.64).. controls (325.46, -11.05) and 
  (326.41, -15.56) .. (328.88, -16.95).. controls (331.49, -18.41) and (335.76, 
  -16.89) .. (337.39, -14.11).. controls (339.01, -11.33) and (337.55, -8.07) ..
   (335.37, -6.76).. controls (332.91, -5.28) and (328.5, -6.22) .. (326.98, 
  -8.64) -- cycle;

    \path[draw=black,fill=cefffff,line cap=butt,line join=miter,line 
  width=1.0pt,miter limit=4.0] (326.16, 60.54).. controls (325.33, 58.38) and 
  (328.01, 55.68) .. (330.18, 54.88).. controls (332.51, 54.03) and (335.76, 
  53.97) .. (337.39, 56.75).. controls (339.01, 59.53) and (336.46, 61.52) .. 
  (334.55, 62.42).. controls (331.96, 63.65) and (327.18, 63.21) .. (326.16, 
  60.54) -- cycle;

    \path[draw=black,fill=cefffff,line cap=butt,line join=miter,line 
  width=1.0pt,miter limit=4.0] (330.45, 33.23).. controls (328.41, 30.34) and 
  (326.18, 25.01) .. (328.88, 22.74).. controls (331.17, 20.82) and (335.76, 
  22.79) .. (337.39, 25.57).. controls (339.01, 28.35) and (340.13, 35.26) .. 
  (337.39, 36.91).. controls (334.65, 38.56) and (331.96, 35.37) .. (330.45, 
  33.23) -- cycle;

    \begin{scope}[shift={(-0.52, -2.78)}]
      \node[text=c003c64,line cap=butt,line join=miter,line width=1.0pt,miter 
  limit=4.0,anchor=south west] (text46) at (297.44, -62.34){$q_1$};

      \node[text=c003c64,line cap=butt,line join=miter,line width=1.0pt,miter 
  limit=4.0,anchor=south west] (text46-7) at (297.44, -48.43){$q_2$};

      \node[text=c003c64,line cap=butt,line join=miter,line width=1.0pt,miter 
  limit=4.0,anchor=south west] (text46-7-1) at (297.44, -32.57){$q_3$};

      \node[text=c003c64,line cap=butt,line join=miter,line width=1.0pt,miter 
  limit=4.0,anchor=south west] (text46-7-1-4) at (297.44, -11.89){$q_4$};

      \node[text=c003c64,line cap=butt,line join=miter,line width=1.0pt,miter 
  limit=4.0,anchor=south west] (text46-7-1-8) at (297.44, 27.03){$q_5$};

      \node[text=c003c64,line cap=butt,line join=miter,line width=1.0pt,miter 
  limit=4.0,anchor=south west] (text46-7-1-7) at (297.44, 57.13){$q_6$};

    \end{scope}
  \end{scope}

\end{tikzpicture}
	
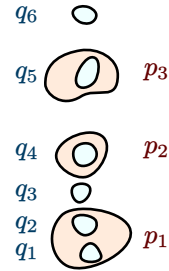
\captionof{figure}{Typical example of {$q\refines p$}.}
	\label{figure:refined path}
\end{minipage}

\begin{lemma}\label{lemma:properties of refinement relation}
	The refinement relation $\refines$ is a preorder and respects concatenation:
	\begin{enumerate}[label={(\alph*)},font=\normalfont]
		\item $p\refines p$;
		\item $r\refines q\refines p$ implies $r\refines p$;
		\item if $q\refines p$ and $q'\refines p'$, then $(q'\cdot q)\refines (p'\cdot p)$. 
	\end{enumerate}
\end{lemma}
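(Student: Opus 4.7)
The plan is to unfold the definition of refinement in each case and verify each assertion directly; no previously established result is needed beyond the transitivity of $\sqleq$ and the definition of concatenation. All three parts amount to mechanical quantifier manipulation.

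For (a), I would note that for each index $n$ of $p$ the choice $m := n$ works, since $p_n \sqleq p_n$. For (b), I would chain the witnesses: fix an index $n$ of $p$; apply $q \refines p$ at $n$ to obtain some index $m$ with $q_m \sqleq p_n$; then apply $r \refines q$ at $m$ to obtain some index $k$ with $r_k \sqleq q_m$; and conclude $r_k \sqleq p_n$ by transitivity of $\sqleq$.

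For (c), assuming the concatenations are defined (so $p_\top = p'_\bot$ and $q_\top = q'_\bot$), the key observation is that every step of $p' \cdot p$ is either a step of $p$ or a step of $p'$, and likewise every step of $q$ and of $q'$ reappears as a step of $q' \cdot q$. Writing $p = (p_n)_{n=0}^{N}$ and $p' = (p'_j)_{j=0}^{M}$, an arbitrary step of $p' \cdot p$ is $p_k$ for some $0 \leq k \leq N$ or $p'_{k-N}$ for some $N \leq k \leq N+M$. In the first case I would invoke $q \refines p$ to obtain a refining step of $q$, which is then a step of $q' \cdot q$ by the definition of concatenation; in the second case I would invoke $q' \refines p'$ analogously. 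The only non-trivial bookkeeping is the index shift in the definition of $q' \cdot q$, but this shift is exactly the one prescribed for $p' \cdot p$, so the two line up.

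The main (mild) obstacle is therefore just the careful index bookkeeping in (c); parts (a) and (b) are immediate from the definition.
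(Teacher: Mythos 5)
Your proof is correct and follows essentially the same route as the paper: parts (a) and (b) are the direct unfolding of the upper-order definition (reflexivity via $m:=n$ and transitivity of $\sqleq$), and part (c) is the same case split of the steps of $p'\cdot p$ into those coming from $p$ (refined by $q$) and those coming from $p'$ (refined by $q'$), with the observation that every step of $q$ and $q'$ survives into $q'\cdot q$.
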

\begin{proof}
	The first two statements follow since $\refines$ is the upper order induced by the inclusion relation $\sqleq$. For (c), the steps of $p'\cdot p$ corresponding to $p$ are refined by~$q$, and those corresponding to $p'$ are refined by $q'$.
\end{proof}

The following construction shows how to canonically refine a path $p$, given some open subregion in its endpoint.

\begin{construction}\label{construction:path restriction}
	Consider a path $p=(p_n)_{n=0}^N$ in a parallel ordered locale ${(X,\Leq)}$, and take any open region $W\sqleq p_N$ in the endpoint. We are going to define a new path: $p|_W$, called the \emph{restriction of $p$ to $W$,} that refines $p$ and has endpoint $W$. Each step of $p|_W$ shall be recursively defined, based on the following procedure using~\eqref{axiom:wedge-}:
	\[
	\begin{tikzcd}[column sep=-.1cm, row sep=-.15cm]
		\cdots & \Leq & {p_{N-2}\wedge\Down(p_{N-1}\wedge\Down W)} & \Leq & {p_{N-1}\wedge\Down W} & \Leq & W \\
		&& \rotatebox[origin=c]{-90}{$\sqleq$} && \rotatebox[origin=c]{-90}{$\sqleq$} && \rotatebox[origin=c]{-90}{$\sqleq$} \\
		\cdots & \Leq & {p_{N-2}} & \Leq & {p_{N-1}} & \Leq & {p_N}.
	\end{tikzcd}
	\]
	More formally, define the endpoint by $(p|_W)_N:= W$, and then use the recursive formula
	\[
	(p|_W)_n := p_n\wedge \Down(p|_W)_{n+1}
	\]
	for $0\leq n < N$. We get $(p|_W)_n \Leq (p|_W)_{n+1}$ by~\eqref{axiom:wedge-} (via \cref{lemma:wedge iff strong wedge}). Since the locale is parallel ordered, it follows that all steps of $p|_W$ are non-empty if and only if $W$ itself is non-empty. Lastly, $p|_W\refines p$ since $(p|_W)_n\sqleq p_n$ by construction. Note that if $W$ is the entire endpoint we get $p|_{p_\top}=p$.
	
	Using axiom~\eqref{axiom:wedge+} we get an analogous construction, where for $V\sqleq p_0$ we get a refinement of $p$ with starting point $V$ that we denote by $p|^V$. 
\end{construction}
\begin{figure}[b]\centering
	\begin{subfigure}[b]{0.55\textwidth}\centering
		\definecolor{cffebdc}{RGB}{255,235,220}
\definecolor{cededed}{RGB}{237,237,237}
\definecolor{cffffdc}{RGB}{255,255,220}
\definecolor{c644700}{RGB}{100,71,0}
\definecolor{c640000}{RGB}{100,0,0}
\definecolor{cefffff}{RGB}{239,255,255}
\definecolor{c001764}{RGB}{0,23,100}

\def \globalscale {1.000000}
\begin{tikzpicture}[y=.75pt, x=.75pt, yscale=\globalscale,xscale=\globalscale, every node/.append style={scale=\globalscale}, inner sep=0pt, outer sep=0pt]
  \begin{scope}[shift={(-271.503, 73.658)}]
    \path[fill=cffebdc,line cap=butt,line join=miter,line width=1.0pt,miter 
  limit=4.0] (334.266, 34.399) -- (350.74, 34.399) -- (379.825, -45.51) -- 
  (305.181, -45.51) -- cycle;

    \path[fill=cededed,line cap=butt,line join=miter,line width=1.001pt,miter 
  limit=4.0] (337.654, -0.206) -- (356.473, -1.019) -- (372.666, -45.51) -- 
  (321.18, -45.468) -- (337.654, -0.206);

    \path[draw=black,line cap=butt,line join=miter,line width=0.5pt,miter 
  limit=4.0,dash pattern=on 0.5pt off 2.0pt] (358.247, 32.742) -- (386.728, 
  -45.51);

    \path[draw=black,line cap=butt,line join=miter,line width=0.5pt,miter 
  limit=4.0,dash pattern=on 0.5pt off 2.0pt] (337.654, -0.206) -- (321.18, 
  -45.51);

    \path[draw=black,line cap=butt,line join=miter,line width=0.5pt,miter 
  limit=4.0,dash pattern=on 0.5pt off 2.0pt] (356.473, -1.019) -- (372.666, 
  -45.51);

    \path[draw=black,line cap=butt,line join=miter,line width=0.5pt,miter 
  limit=4.0,dash pattern=on 0.5pt off 2.0pt] (313.703, 31.445) -- (285.694, 
  -45.51);

    \path[draw=black,fill=cffffdc,line cap=butt,line join=miter,line 
  width=1.0pt,miter limit=4.0] (320.988, 22.939).. controls (326.589, 23.412) 
  and (332.501, 25.948) .. (338.344, 27.042).. controls (344.6, 28.215) and 
  (354.252, 24.033) .. (357.256, 29.674).. controls (359.454, 33.799) and 
  (355.434, 40.406) .. (351.185, 42.314).. controls (339.312, 47.646) and 
  (314.722, 37.53) .. (313.703, 31.445).. controls (312.675, 25.305) and 
  (316.265, 22.541) .. (320.988, 22.939) -- cycle;

    \path[draw=black,fill=cffebdc,line cap=butt,line join=miter,line 
  width=1.0pt,miter limit=4.0] (345.007, 40.155).. controls (340.99, 40.753) and
   (334.063, 38.456) .. (334.266, 34.399).. controls (334.446, 30.808) and 
  (340.654, 29.971) .. (344.237, 30.281).. controls (346.793, 30.502) and 
  (350.455, 31.849) .. (350.74, 34.399).. controls (351.041, 37.091) and 
  (347.686, 39.757) .. (345.007, 40.155) -- cycle;

    \path[draw=black,fill=cffffdc,line cap=butt,line join=miter,line 
  width=1.0pt,miter limit=4.0] (300.587, -29.036).. controls (304.706, -19.782) 
  and (325.327, -25.541) .. (337.654, -26.688).. controls (343.982, -27.277) and
   (355.709, -25.043) .. (356.132, -31.384).. controls (356.441, -36.031) and 
  (347.73, -35.66) .. (343.167, -36.592).. controls (332.625, -38.746) and 
  (319.116, -38.117) .. (310.896, -37.273).. controls (302.261, -36.386) and 
  (298.467, -33.799) .. (300.587, -29.036) -- cycle;

    \path[draw=black,fill=cffebdc,line cap=butt,line join=miter,line 
  width=1.0pt,miter limit=4.0] (328.464, -25.453) -- (323.888, -38.025).. 
  controls (330.271, -38.125) and (337.186, -37.814) .. (343.166, -36.592).. 
  controls (347.729, -35.66) and (356.442, -36.031) .. (356.132, -31.385).. 
  controls (355.709, -25.044) and (343.982, -27.276) .. (337.654, -26.687).. 
  controls (334.918, -26.433) and (331.746, -25.947) .. (328.464, -25.453) -- 
  cycle;

    \path[draw=black,fill=cffffdc,line cap=butt,line join=miter,line 
  width=1.0pt,miter limit=4.0] (314.032, 4.321).. controls (311.034, 4.977) and 
  (305.929, 3.269) .. (305.795, 0.203).. controls (305.681, -2.418) and 
  (309.883, -3.757) .. (312.502, -3.916).. controls (314.681, -4.048) and 
  (317.953, -3.313) .. (318.456, -1.189).. controls (318.998, 1.103) and 
  (316.333, 3.818) .. (314.032, 4.321) -- cycle;

    \path[draw=black,fill=cffebdc,line cap=butt,line join=miter,line 
  width=1.0pt,miter limit=4.0] (337.172, -2.233).. controls (336.865, -5.788) 
  and (342.345, -8.051) .. (345.891, -8.443).. controls (349.836, -8.879) and 
  (355.986, -7.692) .. (356.865, -3.821).. controls (357.629, -0.462) and 
  (353.386, 3.23) .. (350.01, 3.912).. controls (345.36, 4.853) and (337.581, 
  2.493) .. (337.172, -2.233) -- cycle;

    \node[text=c644700,line cap=butt,line join=miter,line width=1.0pt,miter 
  limit=4.0,anchor=south west] (text7) at (286.498, -2.173){$p_1$};

    \node[text=c640000,line cap=butt,line join=miter,line width=1.0pt,miter 
  limit=4.0,anchor=south west] (text8) at (378.505, -6.104){$p_1\wedge \Down W$};

    \node[text=c640000,line cap=butt,line join=miter,line width=1.0pt,miter 
  limit=4.0,anchor=south west] (text9) at (342.543, 47.097){$W$};

    \node[text=c640000,line cap=butt,line join=miter,line width=1.0pt,miter 
  limit=4.0,anchor=south west] (text12) at (310.564, -62.396){$p_0\wedge 
  \Down(p_1\wedge \Down W)$};

    \node[text=c644700,line cap=butt,line join=miter,line width=1.0pt,miter 
  limit=4.0,anchor=south west] (text13) at (275.876, -33.154){$p_0$};

    \node[text=c644700,line cap=butt,line join=miter,line width=1.0pt,miter 
  limit=4.0,anchor=south west] (text14) at (291.193, 30.275){$p_\top$};

    \path[draw=black,line cap=butt,line join=miter,line width=0.5pt,miter 
  limit=4.0,dash pattern=on 0.5pt off 2.0pt] (334.266, 34.399) -- (305.182, 
  -45.51);

    \path[draw=black,line cap=butt,line join=miter,line width=0.5pt,miter 
  limit=4.0,dash pattern=on 0.5pt off 2.0pt] (350.74, 34.399) -- (379.825, 
  -45.51);

  \end{scope}
%
%
%
%
%
%
%
%
%
%
%
%
%
%
%
%
%
%
%
%
%

\end{tikzpicture}
		\caption{Path restriction in Minkowski-type space.}
		\label{figure:path restriction Minkowski}
	\end{subfigure}\hfil
	\begin{subfigure}[b]{0.4\textwidth}\centering
		\definecolor{c644700}{RGB}{100,71,0}
\definecolor{c640000}{RGB}{100,0,0}
\definecolor{cededed}{RGB}{237,237,237}
\definecolor{cffffdc}{RGB}{255,255,220}
\definecolor{cffebdc}{RGB}{255,235,220}
\definecolor{cefffff}{RGB}{239,255,255}
\definecolor{c001764}{RGB}{0,23,100}

\def \globalscale {1.000000}
\begin{tikzpicture}[y=.75pt, x=.75pt, yscale=\globalscale,xscale=\globalscale, every node/.append style={scale=\globalscale}, inner sep=0pt, outer sep=0pt]
  \begin{scope}[shift={(-271.503, 73.658)}]
    \begin{scope}[line width=1.493pt,cm={ 1.899,-0.0,-0.0,1.899,(-301.836, -4.234)}]
      \node[text=c644700,line cap=butt,line join=miter,line width=0.527pt,miter 
  limit=4.0,anchor=south west] (text7) at (310.185, 6.89){$p_1$};

      \node[text=c640000,line cap=butt,line join=miter,line width=0.527pt,miter 
  limit=4.0,anchor=south west] (text9) at (353.399, 24.159){$W$};

      \node[text=c644700,line cap=butt,line join=miter,line width=0.527pt,miter 
  limit=4.0,anchor=south west] (text13) at (310.087, -22.496){$p_0$};

      \node[text=c644700,line cap=butt,line join=miter,line width=0.527pt,miter 
  limit=4.0,anchor=south west] (text14) at (310.185, 24.181){$p_\top$};

    \end{scope}
    \path[fill=cededed,line cap=butt,line join=miter,line width=1.0pt,miter 
  limit=4.0] (335.783, 46.479) -- (360.764, 47.111) -- (360.764, -56.632) -- 
  (335.783, -56.632) -- cycle;

    \path[draw=black,fill=cffffdc,line cap=butt,line join=miter,line 
  width=1.0pt,miter limit=4.0] (317.545, 39.746).. controls (322.183, 39.746) 
  and (343.159, 39.313) .. (354.395, 36.911).. controls (360.832, 35.535) and 
  (365.967, 38.459) .. (365.967, 45.63).. controls (365.967, 52.802) and 
  (358.95, 52.59) .. (354.395, 53.919).. controls (346.108, 56.337) and 
  (337.054, 53.928) .. (328.536, 52.528).. controls (321.13, 51.311) and (307.1,
   54.57) .. (306.85, 46.479).. controls (306.864, 39.419) and (313.827, 39.746)
   .. (317.545, 39.746) -- cycle;

    \path[draw=black,line cap=butt,line join=miter,line width=0.5pt,miter 
  limit=4.0,dash pattern=on 0.5pt off 2.0pt] (306.85, 46.479) -- (306.85, 
  -56.632);

    \path[draw=black,line cap=butt,line join=miter,line width=0.5pt,miter 
  limit=4.0,dash pattern=on 0.5pt off 2.0pt] (366.294, 47.464) -- (366.294, 
  -56.632);

    \path[draw=black,fill=cffffdc,line cap=butt,line join=miter,line 
  width=1.0pt,miter limit=4.0] (310.883, 19.664).. controls (314.93, 22.043) and
   (318.798, 20.473) .. (324.961, 19.211).. controls (331.124, 17.949) and 
  (346.065, 15.095) .. (350.939, 6.833).. controls (353.006, 3.329) and (348.05,
   -1.898) .. (350.203, -5.35).. controls (353.232, -10.205) and (366.937, 
  -6.301) .. (366.049, -11.954).. controls (365.429, -15.896) and (357.934, 
  -14.113) .. (354.118, -12.946).. controls (348.359, -11.185) and (336.163, 
  -10.691) .. (340.033, -1.633).. controls (342.606, 3.226) and (345.85, 5.239) 
  .. (339.788, 7.569).. controls (333.726, 9.898) and (332.746, 11.792) .. 
  (325.336, 9.926).. controls (319.494, 7.578) and (310.524, 4.068) .. (307.177,
   12.471).. controls (306.455, 15.07) and (308.558, 18.297) .. (310.883, 
  19.664) -- cycle;

    \path[draw=black,fill=cffebdc,line cap=butt,line join=miter,line 
  width=1.0pt,miter limit=4.0] (335.782, 16.549) -- (335.782, 9.281).. controls 
  (336.902, 8.782) and (338.156, 8.196) .. (339.788, 7.569).. controls (345.849,
   5.24) and (342.606, 3.226) .. (340.033, -1.633).. controls (336.162, -10.69) 
  and (348.36, -11.185) .. (354.119, -12.946).. controls (355.894, -13.489) and 
  (358.46, -14.157) .. (360.764, -14.329) -- (360.764, -8.47).. controls 
  (356.83, -8.042) and (351.881, -8.04) .. (350.203, -5.351).. controls (348.05,
   -1.899) and (353.006, 3.33) .. (350.938, 6.834).. controls (348.081, 11.678) 
  and (341.764, 14.654) .. (335.782, 16.549) -- cycle;

    \path[draw=black,fill=cffffdc,line cap=butt,line join=miter,line 
  width=1.0pt,miter limit=4.0] (306.85, -47.428).. controls (305.352, -45.214) 
  and (307.969, -41.621) .. (310.193, -40.139).. controls (314.271, -37.42) and 
  (320.283, -41.305) .. (324.887, -39.624).. controls (329.978, -37.766) and 
  (332.935, -32.211) .. (337.679, -29.592).. controls (343.686, -26.276) and 
  (350.299, -23.827) .. (357.059, -22.653).. controls (359.807, -22.176) and 
  (364.771, -19.942) .. (365.427, -22.653).. controls (366.96, -28.985) and 
  (354.957, -29.995) .. (348.047, -31.593).. controls (342.331, -32.914) and 
  (331.55, -36.498) .. (332.909, -42.585).. controls (334.147, -48.129) and 
  (343.99, -46.277) .. (349.649, -45.782).. controls (354.914, -45.321) and 
  (360.741, -36.404) .. (364.469, -40.15).. controls (366.893, -42.584) and 
  (363.207, -47.352) .. (360.764, -49.767).. controls (357.998, -52.502) and 
  (353.712, -53.631) .. (349.828, -53.84).. controls (342.739, -54.221) and 
  (336.129, -50.079) .. (329.188, -48.588).. controls (325.871, -47.876) and 
  (322.55, -47.041) .. (319.165, -46.828).. controls (315.063, -46.571) and 
  (309.152, -50.832) .. (306.85, -47.428) -- cycle;

    \path[draw=black,fill=cffebdc,line cap=butt,line join=miter,line 
  width=1.0pt,miter limit=4.0] (360.764, -21.725).. controls (359.447, -22.045) 
  and (358.095, -22.473) .. (357.058, -22.653).. controls (350.298, -23.827) and
   (343.685, -26.276) .. (337.679, -29.592).. controls (337.014, -29.959) and 
  (336.388, -30.388) .. (335.783, -30.852) -- (335.783, -36.644).. controls 
  (339.129, -34.072) and (344.537, -32.404) .. (348.047, -31.593).. controls 
  (351.783, -30.73) and (356.995, -30.03) .. (360.764, -28.576) -- (360.764, 
  -21.725) -- cycle(360.764, -39.572).. controls (357.411, -40.864) and 
  (353.389, -45.455) .. (349.649, -45.782).. controls (345.608, -46.135) and 
  (339.434, -47.18) .. (335.783, -45.729) -- (335.783, -50.605).. controls 
  (340.413, -52.272) and (345.01, -54.099) .. (349.828, -53.84).. controls 
  (353.712, -53.631) and (357.999, -52.503) .. (360.764, -49.768) -- (360.764, 
  -39.572) -- cycle;

    \path[draw=black,fill=cffebdc,line cap=butt,line join=miter,line 
  width=1.0pt,miter limit=4.0] (336.408, 47.464).. controls (335.349, 45.326) 
  and (338.366, 42.447) .. (340.625, 41.679).. controls (347.096, 39.481) and 
  (361.179, 40.688) .. (360.294, 47.464).. controls (359.834, 50.984) and 
  (353.198, 47.734) .. (349.649, 47.743).. controls (345.234, 47.755) and 
  (338.368, 51.42) .. (336.408, 47.464) -- cycle;

    \path[draw=black,line cap=butt,line join=miter,line width=0.5pt,miter 
  limit=4.0,dash pattern=on 0.5pt off 2.0pt] (335.783, 46.479) -- (335.783, 
  -56.632);

    \path[draw=black,line cap=butt,line join=miter,line width=0.5pt,miter 
  limit=4.0,dash pattern=on 0.5pt off 2.0pt] (360.764, 47.111) -- (360.764, 
  -56.632);

  \end{scope}
%
%
%
%
%
%
%
%
%
%
%
%
%
%
%
%
%
%
%
%
%

\end{tikzpicture}
		\caption{Path restriction in vertical-$\mathbb{R}^2$.}
	\end{subfigure}
	\caption{Examples of \cref{construction:path restriction}.}
	\label{figure:path restriction}
\end{figure}

\begin{remark}
	Note that the construction only works on paths defined in terms of finite sequences, at least without resorting to a higher form of induction.
	
	It is tempting to define $p|_W$ by the simpler formula $(p|_W)_n = p_n\wedge \Down W$. Unfortunately, this does not always define a path. As \cref{figure:path restriction}(a) shows, it can happen that intersecting a path step-wise with $\Down W$ does not preserve the causal order~$\Leq$. Here $p_0\Leq p_1\Leq p_\top\sqgeq W$, where $p_1$ is the disjoint union of the two regions in the middle, but $p_0\not\Leq p_1\wedge \Down W$ since $p_0\not\sqleq \Down (p_1\wedge \Down W)$.
	
	Similarly, it might seem desirable to consider only paths $p$ whose steps $p_n$ are \emph{connected} (e.g.~as done in the unordered setting of \cite{kennison1989WhatFundamentalGroup}). However, we can see in \cref{figure:path restriction}(b) that \cref{construction:path restriction} does not preserve this property: $p_0$ is connected but $(p|_W)_0$ is not. A more suitable option is to only consider paths consisting of \emph{convex} regions \cite[\S 6.1]{schaaf2024TowardsPointFreeSpacetimes}, which \cref{construction:path restriction} does preserve, but for our purposes here we do not find it necessary to do so.
\end{remark}

We obtain some elementary properties of path restriction.

\begin{lemma}\label{lemma:restriction of paths is functorial}
	Let $p$ be a path, with $W\sqleq V\sqleq p_\top$. Then $p|_W = (p|_V)|_W$.
\end{lemma}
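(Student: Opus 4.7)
The plan is to prove the equality by downward induction on the index $n$, using the recursive definition of path restriction from \cref{construction:path restriction}. Writing $p = (p_n)_{n=0}^N$, both $p|_W$ and $(p|_V)|_W$ are paths indexed by $n \in \{0, \ldots, N\}$, so it suffices to show $(p|_W)_n = ((p|_V)|_W)_n$ for each $n$. The base case $n = N$ is immediate from the definition, since both endpoints equal $W$ (note that $W \sqleq V = (p|_V)_N$ makes the outer restriction well-posed).

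For the inductive step, unfolding the recursion twice and applying the induction hypothesis $((p|_V)|_W)_{n+1} = (p|_W)_{n+1}$ yields
\[
((p|_V)|_W)_n \;=\; (p|_V)_n \wedge \Down (p|_W)_{n+1} \;=\; p_n \wedge \Down (p|_V)_{n+1} \wedge \Down (p|_W)_{n+1}.
\]
To conclude $(p|_W)_n = p_n \wedge \Down (p|_W)_{n+1}$, I need the factor $\Down (p|_V)_{n+1}$ to be redundant, i.e.\ $\Down (p|_W)_{n+1} \sqleq \Down (p|_V)_{n+1}$. By monotonicity of the localic cone (\cref{lemma:properties of localic cones}(e)), this in turn reduces to the step-wise comparison $(p|_W)_{n+1} \sqleq (p|_V)_{n+1}$.

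I will handle the latter as an auxiliary downward induction establishing that $(p|_W)_k \sqleq (p|_V)_k$ for every $k$: the endpoint case $W \sqleq V$ holds by hypothesis, and the induction step follows at once from monotonicity of $\Down$ applied to the recursive definition $(p|_W)_k = p_k \wedge \Down (p|_W)_{k+1}$. Conceptually this auxiliary fact says that path restriction is monotone in the target region, which is precisely what makes the double restriction collapse. I do not expect any serious obstacle: the argument is essentially bookkeeping once one observes that both restrictions arise from the same recursion, differing only in their endpoints, and no axiom beyond monotonicity of $\Down$ and elementary frame arithmetic is required.
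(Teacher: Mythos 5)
Your proof is correct, and at its core it is the same downward induction from the shared endpoint $W$ that the paper uses. The one place you diverge is in how the redundant factor $\Down(p|_V)_{n+1}$ is absorbed: the paper simply notes that $((p|_V)|_W)_{n+1}\sqleq (p|_V)_{n+1}$ holds \emph{by construction} (any restriction sits step-wise below the path being restricted, since each step is an intersection with it), so that $p_n\wedge\Down(p|_V)_{n+1}\wedge\Down((p|_V)|_W)_{n+1}=p_n\wedge\Down((p|_V)|_W)_{n+1}$ and the double restriction visibly satisfies the defining recursion of $p|_W$; induction then finishes immediately. You instead apply the induction hypothesis first and are then forced to prove the comparison $(p|_W)_{k}\sqleq(p|_V)_{k}$ by a second, auxiliary downward induction. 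That auxiliary fact — monotonicity of restriction in the target region — is true and proved correctly (only monotonicity of $\Down$ from \cref{lemma:properties of localic cones} is needed), and it is a mildly interesting statement in its own right, but it is not needed for this lemma: the by-construction inclusion makes the collapse one line. So your argument is a slightly longer route to the same destination, trading the paper's immediate observation for an extra lemma of independent (if modest) interest.
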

\begin{proof}
	By construction, we get $((p|_V)|_W)_n\sqleq (p|_V)_n$. Using this, we find that
	\[
	((p|_V)|_W)_n 
	=
	p_n \wedge \Down (p|_V)_{n+1} \wedge \Down ((p|_V)|_W)_{n+1}
	=
	p_n \wedge \Down ((p|_V)|_W)_{n+1}.
	\]
	From this equation we see that the result will follow by induction as soon as we can show $((p|_V)|_W)_N = (p|_W)_N$, where $N$ is the top index of $p$. But this equation is just the statement that $W=W$, so we are done.
\end{proof}

The following result shows that the path restriction operation loses no information about a path, as long as the entire endpoint is covered.
\begin{lemma}\label{lemma:join over restrictions}
	Suppose that $p=(p_n)_{n=0}^N$ is a path, and take an open cover $(W_i)_{i\in I}$ of the endpoint $p_N$. Then for all $0\leq n\leq N$:
	\[
	p_n = \bigvee_{i\in I} \left( p|_{W_i} \right)_n.
	\]
\end{lemma}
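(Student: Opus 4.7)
The plan is to prove the equality by downward induction on $n$, starting from $n = N$ and stepping down to $n = 0$. The $\sqgeq$ direction is automatic from the construction, since each $(p|_{W_i})_n \sqleq p_n$; so the work lies in showing $p_n \sqleq \bigvee_{i\in I}(p|_{W_i})_n$.

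For the base case $n = N$, by definition $(p|_{W_i})_N = W_i$, so $\bigvee_{i\in I}(p|_{W_i})_N = \bigvee_{i\in I}W_i = p_N$, using the hypothesis that $(W_i)_{i\in I}$ covers $p_N$.

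For the inductive step, suppose the equality holds at index $n+1$. Using the recursive definition from \cref{construction:path restriction}, we compute
\[
\bigvee_{i\in I}(p|_{W_i})_n
= \bigvee_{i\in I}\bigl(p_n \wedge \Down (p|_{W_i})_{n+1}\bigr)
= p_n \wedge \bigvee_{i\in I}\Down (p|_{W_i})_{n+1},
\]
where the second equality uses that $\Opens X$ is a frame (finite meets distribute over arbitrary joins). By axiom~\eqref{axiom:LV}, the join commutes with $\Down$, so
\[
\bigvee_{i\in I}\Down (p|_{W_i})_{n+1} = \Down\bigvee_{i\in I}(p|_{W_i})_{n+1} = \Down p_{n+1},
\]
using the inductive hypothesis. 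Finally, since $p_n \Leq p_{n+1}$, \cref{lemma:properties of localic cones}(a) gives $p_n \sqleq \Down p_{n+1}$, so $p_n \wedge \Down p_{n+1} = p_n$, completing the induction.

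The only genuine ingredients are thus frame distributivity, the join-preservation axiom~\eqref{axiom:LV} for the localic cones, and the basic relationship between $\Leq$ and $\Down$; no appeal to the~\eqref{axiom:wedge} axioms is needed beyond what is already built into the definition of $p|_{W_i}$. I do not expect a real obstacle here; the key step is recognising that~\eqref{axiom:LV} is exactly what makes the join commute past $\Down$, which in turn is exactly what lets the induction close.
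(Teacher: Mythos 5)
Your proof is correct and follows essentially the same route as the paper: downward induction from $n=N$, with the inductive step given by frame distributivity, axiom~\eqref{axiom:LV}, the induction hypothesis, and $p_n\Leq p_{n+1}$ giving $p_n\sqleq\Down p_{n+1}$. No gaps.
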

\begin{proof}
	The statement holds for $n=N$ since $(p|_{W_i})_N=W_i$. We prove the remaining cases via induction. Suppose the equation holds for the index $n+1$. We then calculate:
	\[
	\bigvee_{i\in I}(p|_{W_i})_{n} 
	=
	p_{n}\wedge\bigvee_{i\in I}\Down(p|_{W_i})_{n+1}
	=
	p_{n}\wedge \Down \bigvee_{i\in I}(p|_{W_i})_{n+1}
	= p_{n}\wedge \Down p_{n+1}
	= p_{n}.
	\]
	The first step is infinite distributivity applied to the definition of $(p|_{W_i})_n$, the second is due to~\eqref{axiom:LV}, the third is the induction hypothesis, and the final equality follows by ${p_{n}\Leq p_{n+1}}$.
\end{proof}

Under certain conditions, the path restriction operation preserves refinements.
\begin{lemma}\label{lemma:path restriction preserves refinement}
	Let $q\refines p$ with the property that
	\begin{quote}
		if $q_m\sqleq p_n\neq p_\top$, then $\exists k\geq m: q_k\sqleq p_{n+1}$,
	\end{quote}
	and take non-empty $W\sqleq q_\top \sqleq p_\top$. Then $q|_W\refines p|_{W}$.
\end{lemma}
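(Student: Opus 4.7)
The plan is to first build a non-decreasing ``witness'' function $\varphi\colon\{0,\ldots,N\}\to\{0,\ldots,M\}$ (with $N,M$ the top indices of $p$ and $q$) satisfying $q_{\varphi(n)}\sqleq p_n$ for each $n$ and $\varphi(N)=M$, and then to prove by backward induction on $n$ that $(q|_W)_{\varphi(n)}\sqleq (p|_W)_n$, which is exactly what is required for $q|_W\refines p|_W$.

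For the construction I would define $\varphi(n)$ to be the \emph{largest} $m\in\{0,\ldots,M\}$ with $q_m\sqleq p_n$; such an $m$ exists because $q\refines p$, and the set of candidates is finite. The endpoint value $\varphi(N)=M$ follows immediately from the standing assumption $q_\top\sqleq p_\top$. The non-decreasing property is where the extra hypothesis of the lemma enters: applied to $q_{\varphi(n)}\sqleq p_n$ for $n<N$ it produces some $k\geq\varphi(n)$ with $q_k\sqleq p_{n+1}$, and then by maximality $\varphi(n+1)\geq k\geq\varphi(n)$.

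The backward induction should then be routine. The base $n=N$ is immediate since $(q|_W)_M=W=(p|_W)_N$. In the inductive step, assuming $(q|_W)_{\varphi(n+1)}\sqleq (p|_W)_{n+1}$, I would chain the $\Leq$-successive steps
\[
(q|_W)_{\varphi(n)}\Leq (q|_W)_{\varphi(n)+1}\Leq\cdots\Leq (q|_W)_{\varphi(n+1)}
\]
using transitivity of $\Leq$ (trivially collapsing if $\varphi(n)=\varphi(n+1)$), and then \cref{lemma:properties of localic cones}(a),(e) yields $(q|_W)_{\varphi(n)}\sqleq\Down(q|_W)_{\varphi(n+1)}\sqleq\Down(p|_W)_{n+1}$. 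Combining this with the trivial bound $(q|_W)_{\varphi(n)}\sqleq q_{\varphi(n)}\sqleq p_n$ and the recursive definition $(p|_W)_n=p_n\wedge\Down(p|_W)_{n+1}$ closes the induction.

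The hardest part will be arranging $\varphi$ to be simultaneously non-decreasing and to hit $M$ at the top; this is precisely where the extra hypothesis in the statement is essential, since without it the refining indices could backtrack, the $\Leq$-chain in the inductive step would fail to exist, and the argument would collapse.
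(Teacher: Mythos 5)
Your proposal is correct and follows essentially the same route as the paper's proof: a backward induction on $n$ in which the extra hypothesis supplies a later $q$-index landing in $p_{n+1}$, so that the inclusion $(q|_W)_{m}\sqleq q_m\wedge\Down(\text{later restricted step})\sqleq p_n\wedge\Down(p|_W)_{n+1}=(p|_W)_n$ transports down the path. The only cosmetic difference is bookkeeping: where you fix a monotone witness $\varphi$ choosing the largest refining index, the paper instead strengthens the induction hypothesis to quantify over all $m$ with $q_m\sqleq p_{n+1}$ and picks the later index on the fly.
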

The condition says that for every step $q_m$ inhabiting $p_n$, there exists some $q_k$ in the future that inhabits the next step $p_{n+1}$.
\begin{proof}
	Write explicitly $p=(p_n)_{n=0}^N$ and $q=(q_m)_{m=0}^M$. We need to prove that for every $n$ there exists an $m$ such that $(q|_W)_m\sqleq (p|_W)_n$. For the top indices we get $(q|_W)_M = W = (p|_W)_N$. Similarly, since $q\refines p$ we get $q_m\sqleq p_{N-1}$ for some~$m$, and since $(q|_W)_m\Leq W$ it follows that ${(q|_W)_m\sqleq p_{N-1}\wedge \Down W = (p|_W)_{N-1}}$. In fact, this inclusion holds for every $m$ such that $q_m$ is contained in $p_{N-1}$.
	
	Pose as an induction hypothesis (case $n+1$) that for all $m$ with $q_m\sqleq p_{n+1}$ we have $(q|_W)_m \sqleq (p|_W)_{n+1}$. We claim the same holds for $n$. If $q_m\sqleq p_n$ then by assumption there exists $k\geq m$ such that ${q_m\Leq q_k\sqleq p_{n+1}}$. We thus get
	\begin{align*}
		(q|_W)_m
		&:=
		q_m \wedge \Down (q|_W)_{m+1}
		\\&\sqleq 
		p_n \wedge \Down (q|_W)_k &&\text{($q_m\sqleq p_n$ and $m\leq k$)}
		\\&\sqleq 
		p_n \wedge \Down (p|_W)_{n+1}, &&\text{(induction hypothesis)}
	\end{align*}
	so it follows that $(q|_W)_m\sqleq (p|_W)_n$. The fact that $q|_W\refines p|_{W}$ now follows directly from~${q\refines p}$.
\end{proof}

\section{Causal coverages in ordered locales}
\label{section:causal coverages ordered locales}
We are now ready to define the point-free analogue of `causal coverage' discussed in the~\nameref{section:introduction}. First we need to refine our definition of refinements.
\begin{definition}\label{definition:locally refines}
	A \emph{local past refinement} of a path $p$ consists of a family $(q^j)_{j\in J}$ of paths such that:
	\begin{enumerate}
		\item $\bigvee_{j\in J} q^j_\top = p_\top$;
		\item $q^j\refines p|_{q^j_\top}$ for all $j\in J$.
	\end{enumerate}
	In that case we write $(q^j)_{j\in J}\refines p$.
	
	A \emph{(global) past refinement} is a local past refinement where $J$ contains one element. Explicitly, for a path $p$ this consists of a refinement $q\refines p$ such that $q_\top = p_\top$. 
\end{definition}

\begin{remark}
	The definition of a past refinement specifically excludes refinements $q$ that propagate beyond the original path $p$, which is allowed for arbitrary refinements (see \cref{figure:refined path}). Moreover, the fact that $\bigvee q^j_\top = p_\top$ guarantees that all information at the endpoint $p_\top$ is reached by the refinements~$q^j$. We use this to define past causal coverages. For future causal coverages, one defines (local) future refinements dually.
\end{remark}

\begin{definition}\label{definition:causal coverage locale}\label{definition:causal coverage from ordered locale}
	Let $(X,\Leq)$ be a parallel ordered locale with~\eqref{axiom:LV}, and pick opens $U,A\in\Opens X$. We say that $A$ \emph{covers $U$ from below} if $A\sqleq \Down U$, and every path landing in $U$ can be \emph{locally refined} to inhabit $A$:
	\begin{quote}
		for every path $p$ landing in $U$ there exists a local past refinement $(q^j)\refines p$ such that each $q^j$ inhabits $A$.
	\end{quote}
	
	An analogous definition can be made for regions that cover $U$ from \emph{above}. Denote by $\CovLeq^\pm(U)\subseteq \Opens X$ the set of all opens that cover $U$ from above/below. We call $\CovLeq^\pm$ the \emph{(future/past) causal coverage} induced by $\Leq$.
\end{definition}

\noindent\begin{minipage}{0.65\textwidth}
	\begin{remark}
		Very explicitly, if $A\in \CovLeq^-(U)$ and $p$ is a path landing in $U$, this means we can find some family~$(q^j)_{j\in J}$ of paths, together with an open cover $(W_j)_{j\in J}$ of the endpoint $p_\top$, such that $q^j_\top = W_j$ and $q^j$ refines the path $p|_{W_j}$, and such that for every $j\in J$ there is some index $a_j$ such that $q^j_{a_j}\sqleq A$.
	\end{remark}
	\begin{example}
		In \cref{figure:non-globally refinable} we illustrate an example of a path in vertical-$\mathbb{R}^2$ that is not globally past refinable to inhabit $A$, but only locally. Nevertheless, we want to think of $A$ as covering $U$ from below in this picture. This shows that local refinements are necessary for a proper definition of $\CovLeq^\pm$. 
	\end{example}	
\end{minipage}%
\hfill%
\begin{minipage}{.45\textwidth}\centering
	\definecolor{cededed}{RGB}{237,237,237}
\definecolor{cffffdc}{RGB}{255,255,220}
\definecolor{c644700}{RGB}{100,71,0}
\definecolor{cffebdc}{RGB}{255,235,220}
\definecolor{ce0ffdc}{RGB}{224,255,220}
\definecolor{c0d6400}{RGB}{13,100,0}
\definecolor{cefffff}{RGB}{239,255,255}
\definecolor{c001764}{RGB}{0,23,100}
\definecolor{c640000}{RGB}{100,0,0}

\def \globalscale {1.000000}
\begin{tikzpicture}[y=.75pt, x=1.1pt, yscale=\globalscale,xscale=\globalscale, every node/.append style={scale=\globalscale}, inner sep=0pt, outer sep=0pt]
\begin{scope}[shift={(-271.503, 73.658)}]
\path[fill=cededed,line cap=butt,line join=miter,line width=1.0pt,miter 
limit=4.0,shift={(12.427, -0.0)}] (320.379, 36.911) -- (337.387, 36.911) -- 
(338.166, -76.475) -- (320.629, -76.475) -- cycle;

\path[draw=black,fill=cffffdc,line cap=butt,line join=miter,line 
width=1.0pt,miter limit=4.0] (334.552, 45.415).. controls (300.537, 45.415) 
and (300.537, 45.415) .. (300.537, 36.911).. controls (300.537, 28.407) and 
(300.537, 28.407) .. (334.552, 28.407).. controls (374.237, 28.407) and 
(374.237, 28.407) .. (374.237, 36.911).. controls (374.237, 45.415) and 
(374.237, 45.415) .. (334.552, 45.415) -- cycle;

\path[draw=black,line cap=butt,line join=miter,line width=0.5pt,miter 
limit=4.0,dash pattern=on 0.5pt off 2.0pt] (300.537, 36.911) -- (300.537, 
-76.475);

\path[draw=black,line cap=butt,line join=miter,line width=0.5pt,miter 
limit=4.0,dash pattern=on 0.5pt off 2.0pt] (373.987, 36.911) -- (374.238, 
-76.475);

\node[text=c644700,line cap=butt,line join=miter,line width=1.0pt,miter 
limit=4.0,anchor=south west] (text5) at (361.4, 32.742){$U$};

\begin{scope}[shift={(12.427, -0.0)},blend group=multiply]
\path[draw=black,fill=cffebdc,line cap=butt,line join=miter,line 
width=1.0pt,miter limit=4.0] (331.85, -59.467).. controls (328.095, -58.427) 
and (321.631, -58.57) .. (320.511, -62.301).. controls (319.46, -65.803) and 
(324.32, -69.785) .. (327.927, -70.382).. controls (331.522, -70.976) and 
(337.002, -68.743) .. (337.519, -65.136).. controls (337.899, -62.491) and 
(334.425, -60.18) .. (331.85, -59.467) -- cycle;

\path[draw=black,fill=cffebdc,line cap=butt,line join=miter,line 
width=1.0pt,miter limit=4.0] (329.983, -22.934).. controls (326.106, -23.2) 
and (319.88, -25.689) .. (320.379, -29.543).. controls (320.84, -33.1) and 
(326.988, -33.892) .. (330.504, -33.183).. controls (333.621, -32.554) and 
(337.969, -29.877) .. (337.529, -26.727).. controls (337.14, -23.939) and 
(332.792, -22.741) .. (329.983, -22.934) -- cycle;

\path[draw=black,fill=cffebdc,line cap=butt,line join=miter,line 
width=1.0pt,miter limit=4.0] (330.598, 7.688).. controls (326.85, 8.019) and 
(320.314, 6.66) .. (320.379, 2.899).. controls (320.463, -1.917) and (329.058,
-4.168) .. (333.669, -2.774).. controls (335.833, -2.12) and (337.81, 0.678) 
.. (337.387, 2.899).. controls (336.869, 5.619) and (333.357, 7.445) .. 
(330.598, 7.688) -- cycle;

\path[draw=black,fill=cffebdc,line cap=butt,line join=miter,line 
width=1.0pt,miter limit=4.0] (331.718, 41.439).. controls (327.797, 41.997) 
and (321.161, 40.848) .. (320.733, 36.911).. controls (320.375, 33.621) and 
(325.574, 31.275) .. (328.883, 31.242).. controls (332.29, 31.208) and 
(337.241, 33.507) .. (337.387, 36.911).. controls (337.491, 39.327) and 
(334.112, 41.098) .. (331.718, 41.439) -- cycle;

\path[draw=black,line cap=butt,line join=miter,line width=0.5pt,miter 
limit=4.0,dash pattern=on 0.5pt off 2.0pt] (320.379, 36.911) -- (320.629, 
-76.475);

\path[draw=black,line cap=butt,line join=miter,line width=0.5pt,miter 
limit=4.0,dash pattern=on 0.5pt off 2.0pt] (337.387, 36.911) -- (338.166, 
-76.475);

\path[draw=black,fill=ce0ffdc,line cap=butt,line join=miter,line 
width=1.0pt,miter limit=4.0] (302.283, 10.838).. controls (287.731, 10.838) 
and (288.11, 5.547) .. (288.11, 2.895).. controls (288.11, 0.244) and 
(292.546, -4.946) .. (302.283, -2.774).. controls (312.02, -0.602) and 
(327.265, -5.608) .. (327.543, 1.825).. controls (327.795, 8.565) and 
(309.321, 10.838) .. (302.283, 10.838) -- cycle;

\path[draw=black,fill=ce0ffdc,line cap=butt,line join=miter,line 
width=1.0pt,miter limit=4.0] (347.637, -25.768).. controls (333.085, -25.768) 
and (324.96, -28.603) .. (324.96, -34.272).. controls (324.96, -39.942) and 
(339.133, -39.942) .. (349.394, -38.433).. controls (358.263, -37.129) and 
(361.81, -34.272) .. (361.81, -31.438).. controls (361.81, -28.603) and 
(354.675, -25.768) .. (347.637, -25.768) -- cycle;

\end{scope}
\node[text=c0d6400,line cap=butt,line join=miter,line width=1.0pt,miter 
limit=4.0,anchor=south west] (text6) at (289.137, 0.061){$A$};

\node[text=c0d6400,line cap=butt,line join=miter,line width=1.0pt,miter 
limit=4.0,anchor=south west] (text6-4) at (377.245, -29.579){$A$};

\path[draw=black,fill=c0d6400,line cap=butt,line join=miter,line 
width=0.5pt,miter limit=4.0,dash pattern=on 4.0pt off 1.5pt] (338.96, -0.691) 
-- (338.96, -30.472);

\end{scope}
%
%
%
%
%
%
%
%
%
%
%
%
%
%
%
%
%
%
%
%
%
%
%
%
%
%
%

\end{tikzpicture}
	
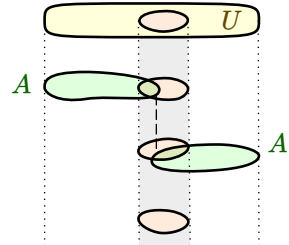
\captionof{figure}{Non-globally refinable path.}
	\label{figure:non-globally refinable}
\end{minipage}

\begin{example}\label{example:causal coverage for equality}
	We claim that in the ordered locale $(X,=)$ the causal coverage relation is equivalent to equality. Note first that paths here are just non-empty regions, and path refinement corresponds to the inclusion order. That $A\in \Cov_{=}^-(U)$ first implies that $A\sqleq \Down U = U$. Moreover, the constant path $p = (U)$ must admit a local refinement, that being just an open cover $(U_j)_{j\in J}$ of $U$, each of which inhabits~$A$, meaning $U_j\sqleq A$. Thus $U= \bigvee_{j\in J} U_j \sqleq A$, and so the equality $A=U$ follows.
\end{example}

In the following we collect some important properties of $\CovLeq^\pm$. See \cref{figure:properties of coverage from locale} for intuition. In proving these properties we have the fundamental axioms of a coverage (see \cref{section:introduction}) in mind.

\begin{figure}[b]\centering
	\begin{subfigure}[b]{0.4\textwidth}\centering
		\definecolor{cffffdc}{RGB}{255,255,220}
\definecolor{c644700}{RGB}{100,71,0}
\definecolor{ce0ffdc}{RGB}{224,255,220}
\definecolor{c0d6400}{RGB}{13,100,0}
\definecolor{cffebdc}{RGB}{255,235,220}
\definecolor{cefffff}{RGB}{239,255,255}
\definecolor{c001764}{RGB}{0,23,100}
\definecolor{c640000}{RGB}{100,0,0}

\def \globalscale {1.000000}
\begin{tikzpicture}[y=.75pt, x=.75pt, yscale=\globalscale,xscale=\globalscale, every node/.append style={scale=\globalscale}, inner sep=0pt, outer sep=0pt]
	\path[draw=black,line cap=butt,line join=miter,line width=0.5pt,miter 
	limit=4.0,dash pattern=on 0.5pt off 2.0pt] (38.771, 117.494) -- (5.299, 
	25.529);

	\path[draw=black,fill=cffffdc,line cap=butt,line join=miter,line 
	width=1.0pt,miter limit=4.0] (40.373, 119.073).. controls (37.142, 116.895) 
	and (35.14, 110.805) .. (37.538, 107.734).. controls (41.608, 102.521) and 
	(50.767, 107.612) .. (57.38, 107.734).. controls (70.986, 107.986) and 
	(93.148, 96.598) .. (98.177, 109.243).. controls (100.364, 114.744) and 
	(99.062, 124.426) .. (85.727, 121.907).. controls (80.079, 120.84) and 
	(76.972, 115.535) .. (71.819, 114.647).. controls (61.387, 112.849) and 
	(49.151, 124.988) .. (40.373, 119.073) -- cycle;

	\node[text=c644700,line cap=butt,line join=miter,line width=1.0pt,miter 
	limit=4.0,anchor=south west] (text12) at (40.539, 108){$U$};

	\path[draw=black,line cap=butt,line join=miter,line width=0.5pt,miter 
	limit=4.0,dash pattern=on 0.5pt off 2.0pt] (99.9, 113.403) -- (131.884, 
	25.529);

	\path[draw=black,fill=ce0ffdc,line cap=butt,line join=miter,line 
	width=1.0pt,miter limit=4.0] (48.876, 82.222).. controls (42.011, 83.391) and 
	(32.803, 93.567) .. (28.768, 87.892).. controls (26.764, 85.074) and (31.753, 
	81.19) .. (34.703, 79.388).. controls (52.606, 68.449) and (76.632, 75.639) ..
	(97.065, 70.884).. controls (102.886, 69.529) and (109.363, 61.536) .. 
	(114.073, 65.214).. controls (116.314, 66.964) and (115.205, 71.508) .. 
	(113.417, 73.718).. controls (109.222, 78.903) and (102.735, 79.388) .. 
	(94.231, 79.388).. controls (79.083, 79.388) and (63.809, 79.68) .. (48.876, 
	82.222) -- cycle;

	\path[draw=black,fill=ce0ffdc,line cap=butt,line join=miter,line 
	width=1.0pt,miter limit=4.0] (29.034, 45.372).. controls (23.552, 44.878) and 
	(14.705, 50.424) .. (12.521, 45.372).. controls (11.419, 42.824) and (15.631, 
	40.292) .. (18.226, 39.305).. controls (29.585, 34.981) and (42.402, 42.036) 
	.. (54.546, 42.537).. controls (75.041, 43.383) and (96.875, 34.954) .. 
	(116.084, 42.153).. controls (118.983, 43.239) and (124.146, 45.406) .. 
	(123.046, 48.3).. controls (119.633, 57.278) and (103.828, 47.823) .. (94.231,
	48.207).. controls (82.862, 48.661) and (71.58, 51.591) .. (60.215, 51.041)..
	controls (49.663, 50.531) and (39.555, 46.321) .. (29.034, 45.372) -- cycle;

	\node[text=c0d6400,line cap=butt,line join=miter,line width=1.0pt,miter 
	limit=4.0,anchor=south west] (text15) at (119.131, 69.509){$A$};

	\node[text=c0d6400,line cap=butt,line join=miter,line width=1.0pt,miter 
	limit=4.0,anchor=south west] (text16) at (126.187, 45.047){$B$};

		%
		%
		%
		%
		%
		%
		%
		%
		%
		%
		%
		%
		%
		%
		%
		%
		%
		%
		%
		%
		%
		%
		%
		%
		%
		%
		%
	
\end{tikzpicture}
		\caption{Transitivity.}
	\end{subfigure}\hfil
	\begin{subfigure}[b]{0.4\textwidth}\centering
		\definecolor{cededed}{RGB}{237,237,237}
\definecolor{cffffdc}{RGB}{255,255,220}
\definecolor{c644700}{RGB}{100,71,0}
\definecolor{ce0ffdc}{RGB}{224,255,220}
\definecolor{c0d6400}{RGB}{13,100,0}
\definecolor{cffebdc}{RGB}{255,235,220}
\definecolor{c640000}{RGB}{100,0,0}
\definecolor{cefffff}{RGB}{239,255,255}
\definecolor{c001764}{RGB}{0,23,100}

\def \globalscale {1.000000}
\begin{tikzpicture}[y=.75pt, x=.75pt, yscale=\globalscale,xscale=\globalscale, every node/.append style={scale=\globalscale}, inner sep=0pt, outer sep=0pt]
	
	\begin{scope}[blend group=multiply]
	\path[fill=cededed,line cap=butt,line join=miter,line width=1.0pt,miter 
	limit=4.0] (77.041, 111.568) -- (92.058, 113.122) -- (124.041, 25.248) -- 
	(46.042, 25.529) -- cycle;

	\path[draw=black,fill=cffffdc,line cap=butt,line join=miter,line 
	width=1.0pt,miter limit=4.0] (40.373, 119.073).. controls (37.142, 116.895) 
	and (35.14, 110.805) .. (37.538, 107.734).. controls (41.608, 102.521) and 
	(50.767, 107.612) .. (57.38, 107.734).. controls (70.986, 107.986) and 
	(93.148, 96.598) .. (98.177, 109.243).. controls (100.364, 114.744) and 
	(99.062, 124.426) .. (85.727, 121.907).. controls (80.079, 120.84) and 
	(76.972, 115.535) .. (71.819, 114.647).. controls (61.387, 112.849) and 
	(49.151, 124.988) .. (40.373, 119.073) -- cycle;
	\end{scope}
	\path[draw=black,line cap=butt,line join=miter,line width=0.5pt,miter 
	limit=4.0,dash pattern=on 0.5pt off 2.0pt] (38.771, 117.494) -- (5.299, 
	25.529);

	\node[text=c644700,line cap=butt,line join=miter,line width=1.0pt,miter 
	limit=4.0,anchor=south west] (text12) at (40.539, 108){$U$};

	\path[draw=black,line cap=butt,line join=miter,line width=0.5pt,miter 
	limit=4.0,dash pattern=on 0.5pt off 2.0pt] (99.9, 113.403) -- (131.884, 
	25.529);

	\path[draw=black,fill=ce0ffdc,line cap=butt,line join=miter,line 
	width=1.0pt,miter limit=4.0] (26.067, 80.567).. controls (24.568, 77.081) and 
	(21.665, 70.558) .. (20.892, 66.127).. controls (20.397, 63.292) and (27.749, 
	65.214) .. (30.346, 65.214).. controls (50.189, 65.214) and (100.703, 65.214) 
	.. (106.551, 65.214).. controls (108.028, 65.214) and (118.233, 61.516) .. 
	(116.315, 66.786) -- (111.156, 80.959).. controls (109.788, 84.718) and 
	(106.344, 80.519) .. (100.999, 80.468).. controls (79.555, 80.263) and 
	(53.271, 80.7) .. (34.703, 81.147).. controls (31.391, 81.226) and (28.156, 
	85.429) .. (26.067, 80.567) -- cycle;

	\node[text=c0d6400,line cap=butt,line join=miter,line width=1.0pt,miter 
	limit=4.0,anchor=south west] (text15) at (119.131, 69.509){$A$};

	\path[draw=black,fill=cffebdc,line cap=butt,line join=miter,line 
	width=1.0pt,miter limit=4.0] (77.041, 111.568).. controls (76.808, 108.806) 
	and (80.881, 106.695) .. (83.648, 106.515).. controls (86.873, 106.306) and 
	(91.757, 108.141) .. (92.046, 111.36).. controls (92.277, 113.93) and (88.704,
	116.081) .. (86.14, 116.361).. controls (82.732, 116.733) and (77.329, 
	114.984) .. (77.041, 111.568) -- cycle;

	\node[text=c640000,line cap=butt,line join=miter,line width=1.0pt,miter 
	limit=4.0,anchor=south west] (text17) at (102.562, 113.489){$W$};

	\path[draw=black,line cap=butt,line join=miter,line width=0.5pt,miter 
	limit=4.0,dash pattern=on 0.5pt off 2.0pt] (77.041, 111.568) -- (45.623, 
	25.248);

	\path[draw=black,line cap=butt,line join=miter,line width=0.5pt,miter 
	limit=4.0,dash pattern=on 0.5pt off 2.0pt] (92.058, 113.122) -- (124.041, 
	25.248);

	\path[draw=black,fill=cffebdc,line cap=butt,line join=miter,line 
	width=1.0pt,miter limit=4.0] (103.803, 80.855).. controls (102.936, 80.634) 
	and (102.001, 80.478) .. (101.0, 80.468).. controls (89.902, 80.362) and 
	(77.538, 80.431) .. (65.588, 80.581) -- (60.1, 65.215) -- (106.551, 65.215).. 
	controls (106.998, 65.215) and (108.269, 64.879) .. (109.736, 64.553) -- cycle;

	\node[text=c640000,line cap=butt,line join=miter,line width=1.0pt,miter 
	limit=4.0,anchor=south west] (text22) at (61, 48.962){$A\wedge \Down W$};

		%
		%
		%
		%
		%
		%
		%
		%
		%
		%
		%
		%
		%
		%
		%
		%
		%
		%
		%
		%
		%
		%
		%
		%
		%
		%
		%
	
\end{tikzpicture}
		\caption{Pullback stability.}
	\end{subfigure}
	\caption{Illustrations of the properties (c) and (d$^-$) of $\CovLeq^-$ from \cref{lemma:properties of coverage from locale}.}
	\label{figure:properties of coverage from locale}
\end{figure}
\begin{proposition}\label{lemma:properties of coverage from locale}
	For any parallel ordered locale $(X,\Leq)$ satisfying~\eqref{axiom:LV}, the causal coverages have the following properties:
	\begin{enumerate}[font=\normalfont]
		\item[(a)] $U\in\CovLeq^\pm(U)$;
		\item[(b)] $\Down U \in \CovLeq^-(U)$ and $\Up U \in\CovLeq^+(U)$;
		\item[(c)] if $B\in \CovLeq^\pm(A)$ and $A\in \CovLeq^\pm(U)$, then $B\in \CovLeq^\pm(U)$;
		\item[(d$^-$)] if $A\in \CovLeq^-(U)$ and $W\sqleq U$, then $A\wedge \Down W\in\CovLeq^-(W)$;
		\item[(d$^+$)] if $B\in \CovLeq^+(U)$ and $W\sqleq U$, then $B\wedge \Up W\in\CovLeq^+(W)$;
		\item[(e)] if $A\in \CovLeq^-(U)$ then $A\Leq U$, and if $B\in\CovLeq^+(U)$ then $U\Leq B$;
		\item[(f)] $\CovLeq^\pm(\varnothing)=\{\varnothing\}$.
	\end{enumerate}
\end{proposition}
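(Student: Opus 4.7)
The plan is to dispatch (a), (b), (e), and (f) by direct calculation from the cone axioms and transitivity of $\Leq$, handle (d$^\pm$) essentially for free by reusing existing refinements, and concentrate the real effort on (c). For (a), the singleton family $(p)$ is itself a local past refinement of $p$ inhabiting $U$ (at its endpoint $p_\top\sqleq U$). For (b), since $\Leq$ is transitive and satisfies~\eqref{axiom:V}, any path $p$ landing in $U$ gives $p_0\Leq p_\top\sqleq U$, hence $p_0\sqleq \Down p_\top\sqleq \Down U$ by Lemma~\ref{lemma:properties of localic cones}(a) and (e); so $p$ itself inhabits $\Down U$ at step $0$. For (f), axioms~\eqref{axiom:empty} and~\eqref{axiom:LV} force $\Down\varnothing=\Up\varnothing=\varnothing$, so the only possible cover is $\varnothing$, and the refinement clause is vacuous since steps of paths are required to be non-empty. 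For (e), apply $A\in\CovLeq^-(U)$ to the constant path $p=(U)$: the resulting family $(q^j)$ satisfies $\bigvee_j q^j_\top=U$ with $q^j$ inhabiting $A$ at some $a_j$; transitivity gives $q^j_\top\sqleq \Up A$, so $U\sqleq \Up A$, and together with $A\sqleq \Down U$ Lemma~\ref{lemma:wedge implies cones determine order} yields $A\Leq U$.

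For (d$^-$), a path $p$ landing in $W$ \emph{a fortiori} lands in $U$, so the refinement $(q^j)_{j\in J}$ produced by $A\in\CovLeq^-(U)$ already witnesses what we need: its witnessing steps $q^j_{a_j}\sqleq A$ satisfy $q^j_{a_j}\Leq q^j_\top\sqleq W$ by transitivity, hence $q^j_{a_j}\sqleq A\wedge\Down W$. The case (d$^+$) is dual. Containment $A\wedge\Down W\sqleq \Down W$ is automatic.

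The main obstacle is (c). Given $p$ landing in $U$, I would chain the two assumptions by concatenation. First, $A\in\CovLeq^-(U)$ produces $(q^j)_{j\in J}\refines p$ with each $q^j$ inhabiting $A$ at some index $a_j$. The initial segment $\underline{q}^j:=(q^j_0,\ldots,q^j_{a_j})$ is then a path landing in $A$, so $B\in\CovLeq^-(A)$ furnishes $(r^{j,k})_{k\in K_j}\refines \underline{q}^j$ with $r^{j,k}$ inhabiting $B$ and $\bigvee_k r^{j,k}_\top = q^j_{a_j}$. These paths stop at $q^j_{a_j}$, so I would use the dual of Construction~\ref{construction:path restriction} (built from~\eqref{axiom:wedge+}) to restrict the tail $\bar{q}^j:=(q^j_{a_j},\ldots,q^j_\top)$ upwards to the new starting point $r^{j,k}_\top$, and form the concatenation $s^{j,k}:=\bar{q}^j|^{r^{j,k}_\top}\cdot r^{j,k}$. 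Each $s^{j,k}$ inhabits $B$; the dual of Lemma~\ref{lemma:join over restrictions} gives $\bigvee_k s^{j,k}_\top=q^j_\top$, and hence $\bigvee_{j,k}s^{j,k}_\top = p_\top$. The containment $B\sqleq \Down A\sqleq \Down\Down U=\Down U$ follows from Lemma~\ref{lemma:properties of localic cones}(d,e). The genuinely technical step will be verifying $s^{j,k}\refines p|_{s^{j,k}_\top}$: using Lemma~\ref{lemma:restriction of paths is functorial} to factor $p|_{s^{j,k}_\top}=(p|_{q^j_\top})|_{s^{j,k}_\top}$ and transferring the given refinements $q^j\refines p|_{q^j_\top}$ and $r^{j,k}\refines \underline{q}^j|_{r^{j,k}_\top}$ across the restriction operation, invoking Lemma~\ref{lemma:path restriction preserves refinement} (whose tightness hypothesis has to be checked index-by-index) for the tail part while the initial part is handled directly by $r^{j,k}\refines \underline{q}^j|_{r^{j,k}_\top}$.
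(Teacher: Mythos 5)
Your proposal is correct and takes essentially the same route as the paper on every item: (a), (b), (d$^\pm$), (e), (f) by the same direct arguments (reusing the given local refinement for (d$^-$), the single-step path $U$ for (e), $A\sqleq\Down\varnothing=\varnothing$ for (f)), and for (c) the identical two-stage construction — split each $q^j$ at the $A$-witnessing index, apply $B\in\CovLeq^-(A)$ to the initial segment, restrict the tail upwards to the new starting points, concatenate, and use the dual of \cref{lemma:join over restrictions} to recover the endpoint cover. The only difference is at the very end of (c), where you flag verifying $s^{j,k}\refines p|_{s^{j,k}_\top}$ as the technical crux; the paper's own proof is in fact less demanding there, recording only that the concatenation refines $q^j$ and hence $p|_{W_j}$ together with the join of endpoints being $p_\top$, so your extra care (via \cref{lemma:restriction of paths is functorial,lemma:path restriction preserves refinement}) goes beyond rather than against the paper's argument.
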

\begin{proof}
	As usual, we only prove the past versions. Statement (a) is trivial: if $p$ is a path with endpoint in $U$, then $p$ already inhabits $U$. The claim in (b) follows similarly.
	
	For (c), note first that if $B\in \CovLeq^-(A)$ and $A\in\CovLeq^-(U)$ then $B\sqleq \Down A\sqleq \Down \Down U = \Down U$. Now let $p=(p_n)_{n=0}^N$ be a path landing in $U$. Since $A$ covers $U$ we get an open cover $(W_j)_{j\in J}$ of $p_N$, together with refinements $q^j$ of $p|_{W_j}$ with endpoints $q^j_{M_j}=W_j$. These refinements inhabit $A$, so there are indices $a_j$ such that $q^j_{a_j}\sqleq A$. Denote $\hat{q}^j = (q^j_m)_{m=0}^{a_j}$ and $\check{q}^j = (q^j_m)_{m=a_j}^{M_j}$. In turn, since $B$ covers $A$ we get open covers $(V^j_i)_{i\in I_j}$ of $q^j_{a_j}$, together with refinements $r^{ji}$ of $\hat{q}^j|_{V^j_i}$ that inhabit $B$ and have endpoint $V^j_i$. The concatenation $\check{q}^j|^{V^j_i}\cdot r^{ji}$ then forms a refinement of $q^j$, and hence of $p|_{W_j}$, that inhabits $B$.
	Moreover, using (the dual of) \cref{lemma:join over restrictions} we find
	\[
	\bigvee_{j\in J}\bigvee_{i\in I_j}(\check{q}^j|^{V^j_i})_{M_j}
	=
	\bigvee_{j\in J}q^j_{M_j}
	= 
	\bigvee_{j\in J}W_j
	=
	p_N.
	\]
	This gives the desired open cover and refinements to obtain $B\in\CovLeq^-(U)$.	
	
	For (d$^-$), clearly we have $A\wedge \Down W\sqleq \Down W$. Further, if $p$ is a path landing in~$W$, it also lands in $U$. Thus we can find local refinements $q^j$ that inhabit $A$ and land in the endpoint of $p$. But since the endpoint of $q^j$ is contained in $W$, each of its steps is contained in $\Down W$, so $q^j$ must inhabit $A\wedge \Down W$. 
	
	For (e), if $A\in\CovLeq^-(U)$ we already know $A\sqleq \Down U$. Now consider the path $p$ consisting of the single step:~$U$. Hence there exists an open cover $(U_i)_{i\in I}$ of~$U$, together with paths $q^i$ that inhabit $A$ and have endpoint $U_i$. Hence, for each $i\in I$, there exists some index $a_i$ such that $q^i_{a_i}\sqleq A$, which implies $U_i\sqleq \Up A$. In total, this gives $U=\bigvee_{i\in I} U_i\sqleq \Up A$. That $A\Leq U$ now follows by~\eqref{axiom:cones give order}.
	
	Lastly, for (f), if $A\in\CovLeq^-(\varnothing)$, then $A\sqleq \Down \varnothing=\varnothing$, so $\CovLeq^-(\varnothing)\subseteq \{\varnothing\}$, and the converse inclusion follows by (a).
\end{proof}

\begin{remark}
	Note in this proof that only property~(c) relies on the path restriction operation from \cref{construction:path restriction}, which in turn relies on parallel orderedness.
\end{remark}

The following gives an analogue of the fundamental axiom~\eqref{axiom:V}.

\begin{lemma}\label{lemma:coverage locale satisfies Cov-V weak}
	In a parallel ordered locale $(X,\Leq)$ with~\eqref{axiom:LV} we have:
	\[
	\forall i\in I: A_i\in\CovLeq^\pm(U_i)
	\quad \text{implies} \quad 
	\bigvee_{i\in I}A_i\in \CovLeq^\pm\left(\bigvee_{i\in I}U_i\right).
	\]
\end{lemma}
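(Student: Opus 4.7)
The plan is to prove the past version, since the future case is dual. Fix a family $(A_i)_{i\in I}$ with $A_i\in\CovLeq^-(U_i)$, and set $A=\bigvee_i A_i$ and $U=\bigvee_i U_i$. The easy part of the definition of causal coverage is the inclusion $A\sqleq\Down U$: since $A_i\sqleq\Down U_i$ holds for every $i$ by \cref{lemma:properties of coverage from locale}(e) (or directly by definition), taking joins and invoking axiom~\eqref{axiom:LV} gives $A=\bigvee_i A_i\sqleq \bigvee_i\Down U_i=\Down\bigvee_i U_i=\Down U$.

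The substantive content is the refinement condition. Let $p=(p_n)_{n=0}^N$ be a path landing in $U$, so $p_N\sqleq U$. The key idea is to split the endpoint by the cover $(p_N\wedge U_i)_{i\in I}$, which indeed covers $p_N$ because $p_N=p_N\wedge U=\bigvee_i(p_N\wedge U_i)$ using infinite distributivity. For each $i$, form the restriction $p^{(i)}:=p|_{p_N\wedge U_i}$ (\cref{construction:path restriction}); this is a path landing in $U_i$. Applying the hypothesis $A_i\in\CovLeq^-(U_i)$ to $p^{(i)}$ yields a local past refinement $(q^{i,j})_{j\in J_i}\refines p^{(i)}$ such that each $q^{i,j}$ inhabits $A_i$, and in particular $A$.

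I claim the indexed family $(q^{i,j})_{(i,j)\in\bigsqcup_i J_i}$ is a local past refinement of $p$ inhabiting $A$, which is exactly what we need. The three conditions to verify are:
\begin{enumerate}
    \item[(1)] Each $q^{i,j}$ inhabits $A$: immediate from $A_i\sqleq A$.
    \item[(2)] The endpoints cover $p_N$: by the coverage hypothesis for each $i$, $\bigvee_{j\in J_i}q^{i,j}_\top=p^{(i)}_N=p_N\wedge U_i$, so $\bigvee_{i,j}q^{i,j}_\top=\bigvee_i(p_N\wedge U_i)=p_N$ as above.
    \item[(3)] $q^{i,j}\refines p|_{q^{i,j}_\top}$ for all $i,j$: by hypothesis $q^{i,j}\refines p^{(i)}|_{q^{i,j}_\top}$. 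Since $q^{i,j}_\top\sqleq p^{(i)}_N=p_N\wedge U_i\sqleq p_N$, \cref{lemma:restriction of paths is functorial} gives $p^{(i)}|_{q^{i,j}_\top}=(p|_{p_N\wedge U_i})|_{q^{i,j}_\top}=p|_{q^{i,j}_\top}$, so indeed $q^{i,j}\refines p|_{q^{i,j}_\top}$.
\end{enumerate}

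The only subtle step is (3), which is where \cref{lemma:restriction of paths is functorial} (and thus parallel orderedness) does the work of gluing local refinements against different pieces of the cover into one coherent local refinement of the original path; everything else reduces to axiom~\eqref{axiom:LV} plus distributivity. This completes the proof. $\square$
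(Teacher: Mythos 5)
Your proof is correct and follows essentially the same route as the paper's: split the endpoint along the cover $(p_N\wedge U_i)_{i\in I}$, restrict $p$ to each piece, apply the coverage hypothesis there, and glue the resulting local refinements back together using \cref{lemma:restriction of paths is functorial}. The one point to tidy is that you should discard the indices $i$ with $p_N\wedge U_i=\varnothing$: by \cref{construction:path restriction} the restriction $p|_{p_N\wedge U_i}$ has non-empty steps (and hence is a path to which the hypothesis $A_i\in\CovLeq^-(U_i)$ can be applied) only when $p_N\wedge U_i\neq\varnothing$, and since empty pieces contribute nothing to the join, dropping them leaves your covering computation in step (2) unchanged; the paper does exactly this, and also disposes of the degenerate case $\bigvee_{i\in I}U_i=\varnothing$ separately via \cref{lemma:properties of coverage from locale}(f), which in your write-up is handled vacuously.
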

\begin{proof}
	Suppose that $A_i\in \CovLeq^-(U_i)$ for all $i\in I$, for some families $(A_i)_{i\in I}$ and $(U_i)_{i\in I}$ of opens. In the case that $\bigvee_{i\in I}U_i$ is empty, the result follows by \cref{lemma:properties of coverage from locale}(f). For the rest of the proof we can therefore assume that the join, and hence the index $I$, are non-empty. In that case, using monotonicity of the localic cones we get $\bigvee_{i\in I}A_i\sqleq \Down\bigvee_{i\in I}U_i$.
	
	We are left to show any path that lands in $\bigvee_{i\in I}U_i$ can be locally refined to inhabit $\bigvee_{i\in I} A_i$. For that, take a path $p=(p_n)_{n=0}^N$ that lands in $\bigvee_{i\in I}U_i$. Denote by $J\subseteq I$ the set of indices $j$ for which $W_j:=p_N\wedge U_j$ is non-empty. Since $p_N$ is non-empty, the set $J$ is non-empty, and $p_N = \bigvee_{j\in J} W_j$. The restrictions $p|_{W_j}$ land in $U_j$. Since $A_j$ covers $U_j$ from below, we get an open cover $(V^j_k)_{k\in K_j}$ of $W_j$, together with refinements $q^{jk}\refines (p|_{W_j})|_{V^j_k}$ that inhabit $A_j$ and have endpoint $V^j_k$. Using \cref{lemma:restriction of paths is functorial} we get $(p|_{W_j})|_{V^j_k} = p|_{V^j_k}$. This shows that the $q^{jk}$'s provide the desired local refinement of $p$. 
	%
	%
\end{proof}

Put differently, for any family $(U_i)_{i\in I}$ of opens the function
\[
\prod_{i\in I}\CovLeq^\pm(U_i)\longrightarrow\Opens X;\qquad (A_i)_{i\in I}\longmapsto \bigvee_{i\in I} A_i
\]
lands in $\CovLeq^\pm\left(\bigvee_{i\in I} U_i\right)$. The following result sharpens this further:~the image of this function in fact gives the whole set $\CovLeq^\pm\left(\bigvee_{i\in I} U_i\right)$. This means that the causal cover of a join can always be written as the union of causal covers of the constituents.

\begin{lemma}\label{lemma:coverage locale satisfies Cov-V strong}
	In a parallel ordered locale $(X,\Leq)$ with~\eqref{axiom:LV} we have:
	\[\tag{$\Cov$-$\vee$}\label{axiom:Cov-V strong}
	\CovLeq^\pm\left(\bigvee_{i\in I} U_i\right) = \left\{\bigvee_{i\in I} A_i : (A_i)_{i\in I}\in\prod_{i\in I} \CovLeq^\pm(U_i)\right\}.
	\]
\end{lemma}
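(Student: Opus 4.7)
The inclusion $\supseteq$ is exactly the content of \cref{lemma:coverage locale satisfies Cov-V weak}, so the plan is to establish the nontrivial inclusion $\subseteq$. I will work in the past case; the future case is dual. Let $U := \bigvee_{i\in I}U_i$ and suppose $A\in \CovLeq^-(U)$. The natural candidate decomposition is
\[
A_i := A\wedge \Down U_i,
\]
which slices $A$ according to which $U_i$ each part can flow into.

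First I would verify that $A_i\in \CovLeq^-(U_i)$ for every $i$. Since $U_i\sqleq U$, this is an immediate application of \cref{lemma:properties of coverage from locale}(d$^-$) with $W=U_i$, using that $\Down W = \Down U_i$.

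Next I would verify $A = \bigvee_{i\in I}A_i$. From \cref{definition:causal coverage from ordered locale} we already have $A\sqleq \Down U$, so $A = A \wedge \Down U$. Applying~\eqref{axiom:LV} to rewrite $\Down U = \bigvee_{i\in I}\Down U_i$, and then using infinite distributivity in the frame $\Opens X$, gives
\[
A \;=\; A\wedge \bigvee_{i\in I}\Down U_i \;=\; \bigvee_{i\in I}\bigl(A\wedge \Down U_i\bigr) \;=\; \bigvee_{i\in I}A_i,
\]
as required. The future version is proved by the dual construction $B_i := B\wedge \Up U_i$, using~\eqref{axiom:LV} for $\Up$ and property~(d$^+$).

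No real obstacle arises beyond ensuring the bound $A\sqleq \Down U$ is available at the start; this is built into the definition of causal coverage (and is also recovered from property~(e) via~\eqref{axiom:cones give order}), so the argument reduces to a short chain of frame-theoretic manipulations combined with the pullback-stability axiom~(d$^\pm$).
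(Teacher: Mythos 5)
Your proposal is correct and follows essentially the same route as the paper's proof: the inclusion $\supseteq$ via \cref{lemma:coverage locale satisfies Cov-V weak}, and for $\subseteq$ the decomposition $A_i = A\wedge\Down U_i$, justified by \cref{lemma:properties of coverage from locale}(d$^-$) together with~\eqref{axiom:LV} and frame distributivity. The only cosmetic difference is that the paper disposes of the empty index set separately via \cref{lemma:properties of coverage from locale}(f), whereas your computation already covers that case since~\eqref{axiom:LV} gives $\Down\varnothing=\varnothing$.
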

\begin{proof}
	The inclusion from right-to-left is precisely \cref{lemma:coverage locale satisfies Cov-V weak}. For the converse, take $A\in \CovLeq^-\left(\bigvee_{i\in I} U_i\right)$. Given \cref{lemma:properties of coverage from locale}(f) we can assume without loss of generality that the indexing set is non-empty. We need to show that $A$ can be written as a join of open regions that cover the $U_i$'s from below. By \cref{lemma:properties of coverage from locale}(d$^-$) we get $A\wedge \Down U_i\in \CovLeq^-(U_i)$ for every $i\in I$. Using axiom~\eqref{axiom:LV} we find the desired expression: $A = A\wedge\Down \bigvee_{i\in I} U_i= \bigvee_{i\in I} A\wedge \Down U_i$. The proof for future covers is dual.
\end{proof}

\section{Causal coverages as Grothendieck topologies}
\label{section:causal coverages as grothendieck topologies}
In this section we explore the fact that the properties of $\CovLeq^\pm$ showcased in \cref{lemma:properties of coverage from locale} share a remarkable resemblance to the fundamental axioms of a coverage. In particular, we shall prove that the causal coverage relation in an ordered locale canonically defines a type of \emph{Grothendieck topology}. These are very important mathematical structures in category theory, and relate to both geometry and logic~\cite{maclane1994SheavesGeometryLogic}. Grothendieck topologies are among the most general incarnations of the notion of coverage, and one of their main points is to study how `local data' is amalgamated into `global data'. In the causal setting we shall interpret it as dictating how `past data' evolves into `future data'.

For convenience, we first recall the definition of a Grothendieck topology in full generality on a category~$\cat{C}$. Recall that a \emph{category} consists of a collection of \emph{objects}, together with a collection of \emph{arrows} between them that can be composed via a unital and associative operation~\cite{maclane1998CategoriesWorkingMathematician}. An arrow $f\colon A\to B$ between objects $A,B\in\cat{C}$ of a category has \emph{domain} $A$, and \emph{codomain} $B$. 

\begin{example}\label{example:category of opens}
	Let $X$ be a locale. Then $\Opens X$ is a category, the \emph{category of opens} of $X$, where the objects are the open regions $U\in\Opens X$, and there exists a unique arrow $U\to V$ iff the inclusion $U\sqleq V$ holds. Thus, for the moment, we can think of categories as an intricate network of abstract regions, where the arrows tell us how these regions are related and included in one another.
\end{example}

\begin{definition}\label{definition:sieve}
	A \emph{sieve} $R$ on an object $C\in\cat{C}$ of a category is a set of arrows with codomain $C$, such that if $(D\xrightarrow{g}C)\in R$, and $E\xrightarrow{h}D$ is an arbitrary arrow in~$\cat{C}$, then $(E\xrightarrow{g}D\xrightarrow{h}C)\in R$. 
\end{definition}

For any object $C\in\cat{C}$ we get the \emph{maximal sieve}:
\[
t_C:= \left\{\text{all arrows with codomain $C$}\right\}.
\]
If $R$ is a sieve on $C\in\cat{C}$ and $h\colon D\to C$ is an arrow, the \emph{pullback} of $R$ along $h$ is the sieve on the domain~$D$ defined as:
\[
h^\ast(R):=\left\{E\xrightarrow{~f~}D: h\circ f\in R \right\}.
\]


\begin{definition}[{\cite[\S III.2]{maclane1994SheavesGeometryLogic}}]\label{definition:grothendieck topology}
	A \emph{Grothendieck topology} on a category $\cat{C}$ is a function $J$ that assigns to each object $C\in\cat{C}$ a collection $J(C)$ of sieves on $C$, called \emph{covering sieves}, such that the following axioms are satisfied:
	\begin{enumerate}
		\item[(i)] the maximal sieve $t_C$ is in $J(C)$;
		\item[(ii)] if $S\in J(C)$ then $h^\ast(S)\in J(D)$ for any arrow $h\colon D\to C$;
		\item[(iii)] if $S\in J(C)$ and $R$ is a sieve on $C$ such that $h^\ast(R)\in J(D)$ for all $h\colon D\to C$ in $S$, then $R\in J(C)$.
	\end{enumerate}
	A category $\cat{C}$ equipped with a Grothendieck topology $J$ is called a \emph{site}.
\end{definition}

\begin{remark}[Intuition]
	As this definition is quite abstract, it helps to unpack it in the setting where $\cat{C}$ is the category of opens from \cref{example:category of opens}. Thus the objects are the opens of $X$, and an arrow $f\colon V\to U$ is just an inclusion $f\colon V\sqleq U$, which is unique if it exists. In the following we shall be somewhat informal in conflating the arrow $f$ with the open subset $V\sqleq U$ itself. In so doing, the maximal sieve on $U\in\Opens X$ can be identified with the principal ideal generated by~$U$:
	\[
	t_U= \{V\in\Opens X:V\sqleq U\}.
	\]
	In general, a sieve $R$ on $U\in\Opens X$ is a collection of subregions of $U$ that are down-closed under the inclusion relation $\sqleq$. If $h\colon V\sqleq U$ is an arrow and $R$ is a sieve on~$U$, then its pullback can be calculated using intersections:
	\[
	h^\ast(R) = \{V\wedge W: W\in R\}.
	\]
	With this in mind, we can reinterpret the axioms of a Grothendieck topology in \cref{definition:grothendieck topology} for the \emph{canonical coverage} $J$ on $\Opens X$, which is defined as:
	\[
	R\in J(U)
	\qquad\text{if and only if}\qquad
	\bigvee R = U.
	\]
	In other words, the covering sieves $R\in J(U)$ are precisely the down-closed open covers of $U$. It is straightforward to see that the axioms hold:
	\begin{enumerate}
		\item[(i)] $\bigvee\{V\in\Opens X: V\sqleq U\} = U$;
		\item[(ii)] if $\bigvee S = U$ and $V\sqleq U$ then $V\wedge \bigvee S = V$;
		\item[(iii)] if $\bigvee S = U$ and $R$ is a sieve on $U$ such that $V\wedge \bigvee R = V$ for all $V\in S$, then $\bigvee R = U$. This is true because: $\bigvee S = \bigvee_{V\in S} (V\wedge \bigvee R) = \bigvee S \wedge \bigvee R$.
	\end{enumerate}
\end{remark}

Motivated by~\cref{lemma:properties of coverage from locale}, we now want to interpret the causal coverages $\CovLeq^\pm$ of an ordered locale as Grothendieck topologies $J^\pm$. To motivate the definition, we compare this structure to the canonical coverage $J$ on $\Opens X$. There, a sieve `$R$ covers $U$' if $R$ forms an open cover for $U$. In the causal setting we want to say `$R$ \emph{causally} covers $U$' if $R$ forms an open cover for some $A\in \CovLeq^-(U)$. 

But this poses a problem, since with respect to the inclusion order $\sqleq$ the sieves~$R$ on $U$ by definition only contain opens $V\in R$ that are subregions $V\sqleq U$, while the interesting causal covering behaviour of $\CovLeq^\pm$ happens when the regions $V$ are disjoint from $U$. We thus need some means to say when a sieve $R$ is covering for $U$ while $R$ itself is not strictly a sieve on $U$ itself. Rather, we want to consider sieves on~$\Down U$. For this, we propose the following generalisation of Grothendieck topologies.
To state it, we need to recall that an \emph{endofunctor} $T \colon \cat{C} \to \cat{C}$ sends objects $C \in \cat{C}$ to objects $T(C) \in \cat{C}$, and arrows $f \colon D \to C$ in $\cat{C}$ to arrows $T(f) \colon T(D) \to T(C)$ in $\cat{C}$ in a way that respects composition and its units, and that a \emph{natural transformation} $\mu \colon T^2 \to T$ is a family of morphisms $\mu_C \colon T(T(C)) \to T(C)$, indexed by objects $C \in \cat{C}$, satisfying $\mu_C \circ T(T(f)) = T(f) \circ \mu_D$ for each arrow $f \colon D \to C$ in $\cat{C}$.

\begin{definition}\label{definition:modified grothendieck topology}
	Let $\cat{C}$ be a category, together with an endofunctor $T\colon \cat{C}\to \cat{C}$ and a natural transformation $\mu\colon T^2\to T$. A \emph{$(T,\mu)$-Grothendieck topology} on $\cat{C}$ is a function $J_T$ that assigns to each object $C\in\cat{C}$ a collection $J_T(C)$ of sieves on~$T(C)$, called \emph{covering sieves}, such that the following axioms are satisfied:
	\begin{enumerate}
		\item[(i)] the maximal sieve $t_{T(C)}$ is in $J_T(C)$;
		\item[(ii)] if $S\in J_T(C)$ then $T(h)^\ast(S)\in J_T(D)$ for any arrow $h\colon D\to C$;
		\item[(iii)] if $S\in J_T(C)$ and $R$ is a sieve on $T(C)$ such that $(\mu_C\circ T(h))^\ast(R)\in J_T(D)$ for all $h\colon D\to T(C)$ in $S$, then $R\in J_T(C)$.
	\end{enumerate}
\end{definition}

Setting $T(C)=C$, $T(f)=f$, and $\mu_C = \id_C$ of course returns the ordinary notion of Grothendieck topology on $\cat{C}$. The natural transformation $\mu$ is needed only to make sense of axiom~(iii), where for $h\colon D\to T(C)$ in a covering sieve $S\in J_T(C)$ we need to be able to construct the sieve $(\mu_C\circ T(h))^\ast(R)$ on $T(D)$.


The following result shows that the causal coverage $\CovLeq^-$ of an ordered locale gives rise to a canonical $\Down$-Grothendieck topology, and also explains the intuition of \cref{definition:modified grothendieck topology}.

\begin{theorem}\label{theorem:causal grothendieck topology}
	Let $(X,\Leq)$ be a parallel ordered locale with~\eqref{axiom:LV}. The causal coverage $\CovLeq^-$ defines a $\Down$-Grothendieck topology~$J^-$ on~$\Opens X$ via
	\[
	R\in J^-(U)
	\qquad\text{if and only if}\qquad
	\bigvee R\in \CovLeq^-(U).
	\]
\end{theorem}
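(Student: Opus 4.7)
My plan is to exploit the fact that $\Down$ is a monad on $\Opens X$ (viewed as a posetal category) with $\Down\Down U = \Down U$ by \cref{lemma:properties of localic cones}(d), so the multiplication $\mu_U$ is an identity arrow. In a posetal category, a sieve $S$ on $\Down U$ is just a downward-closed family of subregions of $\Down U$, and the pullback of $S$ along an inclusion $h\colon V\sqleq \Down U$ simplifies to $\{W\wedge V : W\in S\}$, whose join is $V\wedge \bigvee S$ by infinite distributivity. With this identification in hand, axiom~(i) is immediate: $\Down U\in t_{\Down U}$ so $\bigvee t_{\Down U}=\Down U$, which lies in $\CovLeq^-(U)$ by \cref{lemma:properties of coverage from locale}(b). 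Axiom~(ii) reduces to checking that the join of $\Down(h)^\ast(S)$, namely $\Down W\wedge \bigvee S$, lies in $\CovLeq^-(W)$ whenever $W\sqleq U$ and $\bigvee S\in \CovLeq^-(U)$; this is precisely \cref{lemma:properties of coverage from locale}(d$^-$).

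The core work is axiom~(iii). Write $A=\bigvee R$ and $B=\bigvee S$, so $B\in\CovLeq^-(U)$ and, because $\mu_U$ is the identity, the hypothesis reduces to $A\wedge \Down W'\in\CovLeq^-(W')$ for every $W'\in S$. Applying \cref{lemma:coverage locale satisfies Cov-V weak} across the family $(W')_{W'\in S}$ yields
\[
\bigvee_{W'\in S}\bigl(A\wedge \Down W'\bigr)\in \CovLeq^-(B),
\]
and by~\eqref{axiom:LV} this join rewrites as $A\wedge \Down B$. Transitivity of causal coverage (\cref{lemma:properties of coverage from locale}(c)) applied to $A\wedge \Down B\in \CovLeq^-(B)$ and $B\in\CovLeq^-(U)$ then delivers $A\wedge \Down B\in\CovLeq^-(U)$.

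The main obstacle is closing the gap between $A\wedge \Down B$ and $A$ itself. I will supply a short monotonicity observation: if $C\in\CovLeq^-(U)$ and $C\sqleq C'\sqleq \Down U$, then $C'\in\CovLeq^-(U)$, since any local past refinement whose each $q^j$ inhabits $C$ (i.e.\ has a step $q^j_n\sqleq C$) automatically inhabits the larger region $C'$. Since $R$ is a sieve on $\Down U$ we have $A\sqleq \Down U$, and $A\wedge \Down B\sqleq A$, so this monotonicity upgrades the previous conclusion to $A=\bigvee R\in\CovLeq^-(U)$, giving $R\in J^-(U)$ and completing axiom~(iii). The subtle bookkeeping throughout is tracking that sieves live on $\Down U$ rather than $U$, and verifying that the composite $\mu_U\comp \Down(h)$ appearing in axiom~(iii) collapses to $\Down(h)$ by idempotence of the localic past cone.
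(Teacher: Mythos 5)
Your proposal is correct and follows essentially the same route as the paper: axiom (i) via \cref{lemma:properties of coverage from locale}(b), axiom (ii) via (d$^-$), and axiom (iii) by combining the join of the pulled-back covers with \eqref{axiom:LV}, transitivity (c), and the observation that anything between a covering region and $\Down U$ still covers $U$. The only difference is cosmetic: you invoke \cref{lemma:coverage locale satisfies Cov-V weak} and prove the final monotonicity step explicitly, where the paper cites \eqref{axiom:Cov-V strong} and states that step tersely (it is the same fact later formalised as axiom (C4)).
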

\begin{proof}
	We check the axioms (see \cref{figure:properties of causal grothendieck topology} for intuition).
	\begin{enumerate}
		\item[(i)] That $t_{\Down U}\in J^-(U)$ just means $\bigvee \{V\in\Opens X: V\sqleq \Down U\} = \Down U \in \CovLeq^-(U)$, which holds by \cref{lemma:properties of coverage from locale}(b);
		
		\item[(ii)] Let $S\in J^-(U)$ and take an arrow $h\colon W\sqleq U$. We need to show that $(\Down h)^\ast(S)\in J^-(W)$. As above, we identify arrows $V\sqleq \Down U$ in $S$ with their codomains, which allows us to write the pullback as
		\[
		(\Down h)^\ast(S) = \{V\wedge \Down W: V\in S\}.
		\]
		Indeed, if $V\in S$ then $V\wedge \Down W\in (\Down h)^\ast(S)$ since $V\wedge \Down W\sqleq V\in S$, and~$S$ is downwards closed with respect to the inclusion relation. Conversely, if $V\in (\Down h)^\ast(S)$ then the composition $V\sqleq \Down W\sqleq \Down U$ is in $S$, and in particular we get $V=V\wedge \Down W$.
		Thus taking the join over $(\Down h)^\ast(S)$ and using infinite distributivity we see
		\[
		(\Down h)^\ast(S) \in J^-(W)
		\qquad\text{if and only if}\qquad
		\Down W\wedge \bigvee S\in \CovLeq^-(W),
		\]
		and the right hand side holds by \cref{lemma:properties of coverage from locale}(d$^-$).
		
		\item[(iii)] Let $S\in J^-(U)$, and let $R$ be a sieve on $\Down U$ with the property that for every $h\colon V\sqleq \Down U$ in $S$ we get $(\mu_U\circ \Down h)^\ast(R)\in J^-(V)$. In this case the natural map $\mu$ is just the family of equations  $\mu_U\colon \Down\Down U = \Down U$. Similar to the above, we can calculate the pullback as
		\[
		(\mu_U\circ \Down h)^\ast(R) = \{W\wedge \Down V: W\in R\}.
		\]
		Translating into the language of causal coverages, we get ${\bigvee S \in \CovLeq^-(U)}$, and $R$ is a sieve such that $\Down V\wedge \bigvee R\in \CovLeq^-(V)$ for every $V\in S$. From~\eqref{axiom:Cov-V strong} and using~\eqref{axiom:LV} with infinite distributivity it then follows that
		\[
		\Down \bigvee S \wedge \bigvee R = \bigvee_{V\in S}\left(\Down V\wedge \bigvee R\right)\in \CovLeq^-\left(\bigvee S\right).
		\]
		By \cref{lemma:properties of coverage from locale}(c) it thus follows that $\Down \bigvee S\wedge \bigvee R\in \CovLeq^-(U)$. Since $R$ is a sieve on $\Down U$ we have $\bigvee R\sqleq \Down U$, and since it contains a region that covers~$U$ from below we must in fact have $\bigvee R\in \CovLeq^-(U)$. \qedhere
	\end{enumerate}
\end{proof}
\begin{figure}[t]\centering
	\begin{subfigure}[b]{0.45\textwidth}\centering
		\definecolor{cededed}{RGB}{237,237,237}
\definecolor{cffffdc}{RGB}{255,255,220}
\definecolor{c644700}{RGB}{100,71,0}
\definecolor{cffebdc}{RGB}{255,235,220}
\definecolor{c640000}{RGB}{100,0,0}
\definecolor{ce0ffdc}{RGB}{224,255,220}
\definecolor{c0d6400}{RGB}{13,100,0}
\definecolor{cefffff}{RGB}{239,255,255}
\definecolor{c001764}{RGB}{0,23,100}
\definecolor{ccd0000}{RGB}{205,0,0}

\def \globalscale {1.000000}
\begin{tikzpicture}[y=.8pt, x=.8pt, yscale=\globalscale,xscale=\globalscale, every node/.append style={scale=\globalscale}, inner sep=0pt, outer sep=0pt]
	\begin{scope}[blend group=multiply]
		
		\path[fill=cededed,line cap=butt,line join=miter,line width=1.0pt,miter 
		limit=4.0,fill opacity =.6] (74.388, 114.653) -- (43.885, 31.032) -- (125.109, 31.032) -- 
		(95.607, 112.017) -- cycle;

		\path[draw=black,fill=cffffdc,line cap=butt,line join=miter,line 
		width=1.0pt,miter limit=4.0] (22.593, 119.073).. controls (18.632, 116.784) 
		and (14.434, 112.289) .. (14.861, 107.734).. controls (15.11, 105.073) and 
		(18.023, 102.99) .. (20.53, 102.065).. controls (31.848, 97.889) and (44.547, 
		105.094) .. (56.609, 104.899).. controls (81.253, 104.502) and (106.809, 
		93.231) .. (116.137, 102.065).. controls (119.572, 105.318) and (119.965, 
		112.627) .. (116.908, 116.238).. controls (105.394, 129.839) and (81.597, 
		121.862) .. (63.78, 122.183).. controls (50.014, 122.432) and (34.515, 
		125.959) .. (22.593, 119.073) -- cycle;
	\end{scope}

	\node[text=black,line cap=butt,line join=miter,line width=1.0pt,miter 
	limit=4.0,anchor=south west] (text1) at (20.53, 104.899){$U$};

	\path[draw=black,fill=cffebdc,line cap=butt,line join=miter,line 
	width=1.0pt,miter limit=4.0] (74.25, 109.552).. controls (76.209, 105.864) and
	(82.116, 105.361) .. (86.25, 105.951).. controls (89.93, 106.476) and 
	(95.994, 108.32) .. (95.607, 112.017).. controls (94.923, 118.545) and 
	(83.131, 121.933) .. (77.223, 119.073).. controls (74.23, 117.624) and (72.69,
	112.488) .. (74.25, 109.552) -- cycle;

	\node[text=c640000,line cap=butt,line join=miter,line width=1.0pt,miter 
	limit=4.0,anchor=south west] (text2) at (77.223, 107.734){$W$};

	\path[draw=black,line cap=butt,line join=miter,line width=0.5pt,miter 
	limit=4.0,dash pattern=on 0.5pt off 2.0pt] (14.861, 110.569) -- (-14.114, 
	31.032);

	\path[draw=black,line cap=butt,line join=miter,line width=0.5pt,miter 
	limit=4.0,dash pattern=on 0.5pt off 2.0pt] (118.789, 111.932) -- (148.095, 
	31.032);

	\begin{scope}[blend group=multiply]
		
		\path[draw=black,fill=ce0ffdc,line cap=butt,line join=miter,line 
		width=1.0pt,miter limit=4.0] (17.633, 75.401).. controls (13.117, 76.488) and 
		(7.375, 78.94) .. (3.712, 76.082).. controls (1.631, 74.458) and (1.116, 
		70.737) .. (2.165, 68.314).. controls (3.999, 64.076) and (9.329, 61.442) .. 
		(13.925, 60.99).. controls (20.81, 60.311) and (33.128, 61.399) .. (33.346, 
		68.314).. controls (33.527, 74.057) and (23.219, 74.056) .. (17.633, 75.401) 
		-- cycle;

		\path[draw=black,fill=ce0ffdc,line width=1.0pt] (54.753, 60.726) -- (60.393, 
		76.217).. controls (54.256, 75.801) and (49.404, 74.325) .. (44.695, 70.866)..
		controls (42.415, 69.191) and (42.601, 64.444) .. (44.536, 62.38).. controls 
		(47.788, 58.911) and (49.813, 59.842) .. (54.753, 60.726) -- cycle;

		\path[draw=black,fill=cffebdc,line cap=butt,line join=miter,line 
		width=1.0pt,miter limit=4.0] (82.892, 73.718).. controls (80.7, 74.626) and 
		(77.727, 75.065) .. (75.773, 73.718).. controls (74.218, 72.647) and (73.465, 
		70.292) .. (73.756, 68.427).. controls (74.239, 65.329) and (76.509, 61.443) 
		.. (79.622, 61.075).. controls (83.375, 60.633) and (88.329, 64.277) .. 
		(88.562, 68.049).. controls (88.726, 70.716) and (85.361, 72.696) .. (82.892, 
		73.718) -- cycle;

		\path[draw=black,fill=ce0ffdc,line width=1.0pt] (109.318, 74.344) -- (112.743,
		64.933).. controls (115.258, 65.761) and (116.939, 67.233) .. (116.908, 
		69.844).. controls (116.867, 73.279) and (113.503, 74.186) .. (109.318, 
		74.344) -- cycle;

		\path[draw=black,fill=ce0ffdc,line cap=butt,line join=miter,line 
		width=1.0pt,miter limit=4.0] (41.966, 73.718).. controls (38.155, 74.332) and 
		(33.819, 71.38) .. (31.869, 68.049).. controls (31.151, 66.823) and (30.921, 
		64.845) .. (31.869, 63.786).. controls (34.702, 60.621) and (40.855, 60.259) 
		.. (44.536, 62.38).. controls (46.35, 63.425) and (47.406, 66.063) .. (47.067,
		68.13).. controls (46.659, 70.619) and (44.456, 73.317) .. (41.966, 73.718) 
		-- cycle;

		\path[draw=black,fill=ce0ffdc,line cap=butt,line join=miter,line 
		width=1.0pt,miter limit=4.0] (124.341, 77.135).. controls (120.392, 77.816) 
		and (114.55, 74.862) .. (114.073, 70.884).. controls (113.755, 68.23) and 
		(117.575, 66.777) .. (119.743, 65.214).. controls (121.833, 63.708) and 
		(124.08, 61.795) .. (126.656, 61.757).. controls (129.372, 61.718) and 
		(134.137, 62.474) .. (134.049, 65.19).. controls (133.562, 77.067) and 
		(129.397, 76.263) .. (124.341, 77.135) -- cycle;
		
		\path[draw=black,fill=cffebdc,line width=1.0pt] (54.753, 60.726).. controls 
		(57.601, 61.236) and (61.41, 61.731) .. (67.002, 61.356).. controls (70.936, 
		61.093) and (76.405, 64.484) .. (76.485, 68.426).. controls (76.57, 72.605) 
		and (70.922, 76.222) .. (66.743, 76.324).. controls (64.465, 76.38) and 
		(62.365, 76.351) .. (60.393, 76.217) -- cycle;

		\path[draw=black,fill=cffebdc,line width=1.0pt] (109.318, 74.344).. controls 
		(105.236, 74.498) and (100.373, 73.939) .. (97.042, 74.308).. controls 
		(94.683, 74.57) and (91.926, 75.581) .. (89.922, 74.308).. controls (87.348, 
		72.674) and (84.691, 68.768) .. (86.03, 66.028).. controls (87.261, 63.51) and
		(91.428, 64.142) .. (94.23, 64.173).. controls (99.411, 64.232) and (107.751,
		63.291) .. (112.743, 64.933) -- cycle;
	\end{scope}

	\path[draw=black,line cap=butt,line join=miter,line width=0.5pt,miter 
	limit=4.0,dash pattern=on 0.5pt off 2.0pt] (74.388, 114.653) -- (44.013, 
	31.199);

	\path[draw=black,line cap=butt,line join=miter,line width=0.5pt,miter 
	limit=4.0,dash pattern=on 0.5pt off 2.0pt] (95.607, 112.017) -- (125.022, 
	31.199);

	\node[text=c0d6400,line cap=butt,line join=miter,line width=1.0pt,miter 
	limit=4.0,anchor=south west] (text26) at (12.588, 80.183){$\text{$V$s in 
			$S$}$};

	\path[decoration={brace},decorate,draw=c0d6400,line width=1.0pt,text=c0d6400] (140.085, 76.553) -- node[right=.4em] {$S$}(140.085, 62.38);

	\path[decoration={brace,mirror},decorate,draw=c640000,line width=1.0pt,text=c640000] (55.5, 56.71) -- node[below=.5em] {$(\Down h)^\ast(S)$}(113, 56.614);

		%
		%
		%
		%
		%
		%
		%
		%
		%
		%
		%
		%
		%
		%
		%
		%
		%
		%
		%
		%
		%
		%
		%
		%
		%
		%
		%
		%
		%
		%
	
\end{tikzpicture}
		\caption{Axiom (ii).}
	\end{subfigure}
	\begin{subfigure}[b]{0.45\textwidth}\centering
		\definecolor{cededed}{RGB}{237,237,237}
\definecolor{cffffdc}{RGB}{255,255,220}
\definecolor{ce0ffdc}{RGB}{224,255,220}
\definecolor{c0d6400}{RGB}{13,100,0}
\definecolor{c640000}{RGB}{100,0,0}
\definecolor{cffebdc}{RGB}{255,235,220}
\definecolor{cefffff}{RGB}{239,255,255}
\definecolor{c001764}{RGB}{0,23,100}
\definecolor{ccd0000}{RGB}{205,0,0}
\definecolor{c644700}{RGB}{100,71,0}

\def \globalscale {1.000000}
\begin{tikzpicture}[y=.9pt, x=.9pt, yscale=\globalscale,xscale=\globalscale, every node/.append style={scale=\globalscale}, inner sep=0pt, outer sep=0pt]
	\path[fill=cededed,opacity=0.6,line cap=butt,line join=miter,line 
	width=1.0pt,miter limit=4.0] (67.801, 86.222) -- (47.774, 31.199) -- (108.569,
	31.199) -- (89.086, 84.524) -- cycle;

	\path[draw=black,fill=cffffdc,line cap=butt,line join=miter,line 
	width=1.0pt,miter limit=4.0] (26.199, 110.569).. controls (27.651, 94.954) and
	(48.66, 104.257) .. (60.215, 103.271).. controls (73.407, 102.145) and 
	(86.661, 104.899) .. (99.9, 104.899).. controls (106.994, 104.899) and 
	(119.036, 105.399) .. (118.32, 111.932).. controls (117.621, 118.309) and 
	(105.672, 114.198) .. (99.28, 114.738).. controls (76.884, 116.631) and 
	(59.105, 116.238) .. (31.869, 116.238).. controls (28.263, 116.238) and 
	(25.952, 113.23) .. (26.199, 110.569) -- cycle;

	\node[text=black,line cap=butt,line join=miter,line width=1.0pt,miter 
	limit=4.0,anchor=south west] (text1) at (31.447, 104){$U$};

	\path[draw=black,line cap=butt,line join=miter,line width=0.5pt,miter 
	limit=4.0,dash pattern=on 0.5pt off 2.0pt] (26.403, 112.367) -- (-3.14, 
	31.199);

	\path[draw=black,line cap=butt,line join=miter,line width=0.5pt,miter 
	limit=4.0,dash pattern=on 0.5pt off 2.0pt] (118.789, 111.932) -- (148.173, 
	31.199);

	\path[draw=black,line cap=butt,line join=miter,line width=0.5pt,miter 
	limit=4.0,dash pattern=on 0.5pt off 2.0pt] (67.801, 86.222) -- (47.774, 
	31.199);

	\path[draw=black,line cap=butt,line join=miter,line width=0.5pt,miter 
	limit=4.0,dash pattern=on 0.5pt off 2.0pt] (89.086, 84.524) -- (108.404, 
	31.447);

	\path[draw=black,fill=ce0ffdc,line cap=butt,line join=miter,line 
	width=1.0pt,miter limit=4.0] (43.207, 62.38).. controls (57.38, 62.38) and 
	(66.354, 64.488) .. (65.884, 56.71).. controls (65.425, 49.106) and (54.546, 
	53.876) .. (43.207, 53.876);

	\path[draw=black,fill=ce0ffdc,line cap=butt,line join=miter,line 
	width=1.0pt,miter limit=4.0] (111.239, 62.38).. controls (102.735, 62.38) and 
	(87.426, 63.023) .. (88.562, 56.71).. controls (89.711, 50.319) and (102.735, 
	53.876) .. (111.239, 53.876);

	\path[draw=black,line cap=butt,line join=miter,line width=1.0pt,miter 
	limit=4.0,dash pattern=on 1.0pt off 1.0pt] (43.207, 62.38) -- (40.373, 62.38);

	\path[draw=black,line cap=butt,line join=miter,line width=1.0pt,miter 
	limit=4.0,dash pattern=on 1.0pt off 1.0pt] (43.207, 53.876) -- (40.373, 
	53.876);

	\path[draw=black,line cap=butt,line join=miter,line width=1.0pt,miter 
	limit=4.0,dash pattern=on 1.0pt off 1.0pt] (111.239, 62.38) -- (114.073, 
	62.38);

	\path[draw=black,line cap=butt,line join=miter,line width=1.0pt,miter 
	limit=4.0,dash pattern=on 1.0pt off 1.0pt] (111.239, 53.876) -- (114.073, 
	53.876);

	\path[decoration={brace},decorate,draw=c0d6400,line cap=butt,line join=miter,line width=1.0pt,miter 
	limit=4.0,text = c0d6400] (133, 96.395) -- node[right=.4em] {$S$}(133, 82.222);

	\path[decoration={brace},decorate,draw=c0d6400,line cap=butt,line join=miter,line width=1.0pt,miter 
	limit=4.0,text=c0d6400] (132, 63.313) -- node[right=.4em, fill = white, inner sep = 2pt] {$R$}(132, 51.975);
	
	\path[decoration={brace,mirror},decorate,draw=c640000,line cap=butt,line join=miter,line width=1.0pt,miter 
	limit=4.0,text = c640000] (57.38, 48.206) -- node[below=.5em,scale = .9] {$(\mu_U{\circ}\Down h)^\ast(R)$}(99.9, 48.206);

	\node[text=c640000,line cap=butt,line join=miter,line width=1.0pt,miter 
	limit=4.0,anchor=south west] (text32) at (72.604, 91.84){$V$};

	\node[text=c0d6400,line cap=butt,line join=miter,line width=1.0pt,miter 
	limit=4.0,anchor=south west] (text35) at (29.478, 55.39){$\cdots$};

	\node[text=c0d6400,line cap=butt,line join=miter,line width=1.0pt,miter 
	limit=4.0,anchor=south west] (text35-5) at (112.619, 55.39){$\cdots$};

	\begin{scope}[blend group=multiply]
		
		\path[draw=black,fill=cffebdc,line cap=butt,line join=miter,line 
		width=1.0pt,miter limit=4.0] (70.353, 61.88).. controls (75.526, 63.481) and 
		(81.344, 63.428) .. (86.597, 62.113).. controls (89.586, 61.365) and (94.964, 
		60.712) .. (94.687, 57.644).. controls (94.149, 51.667) and (83.626, 51.9) .. 
		(77.625, 51.897).. controls (72.416, 51.895) and (63.493, 51.709) .. (62.812, 
		56.873).. controls (62.418, 59.864) and (67.471, 60.987) .. (70.353, 61.88) --
		cycle;

		\path[draw=black,fill=cffebdc,line width=1.0pt] (55.661, 52.721).. controls 
		(61.436, 52.042) and (65.604, 52.071) .. (65.884, 56.71).. controls (66.134, 
		60.853) and (63.683, 62.181) .. (59.215, 62.527) -- cycle;

		\path[draw=black,fill=cffebdc,line width=1.0pt] (97.276, 62.021).. controls 
		(92.047, 61.46) and (87.956, 60.08) .. (88.562, 56.71).. controls (89.276, 
		52.738) and (94.577, 52.609) .. (100.529, 53.08) -- cycle;

		\path[draw=black,fill=ce0ffdc,line cap=butt,line join=miter,line 
		width=1.0pt,miter limit=4.0] (20.395, 92.681).. controls (17.518, 90.587) and 
		(17.543, 84.524) .. (20.063, 82.012).. controls (21.513, 80.567) and (24.182, 
		81.878) .. (26.199, 82.222).. controls (31.217, 83.078) and (40.827, 81.699) 
		.. (40.77, 86.789).. controls (40.692, 93.859) and (26.111, 96.841) .. 
		(20.395, 92.681) -- cycle;

		\path[draw=black,fill=ce0ffdc,line cap=butt,line join=miter,line 
		width=1.0pt,miter limit=4.0] (53.326, 92.088).. controls (47.696, 91.7) and 
		(37.387, 89.417) .. (38.514, 83.887).. controls (39.587, 78.623) and (49.176, 
		82.057) .. (54.546, 82.222).. controls (60.291, 82.399) and (70.971, 79.339) 
		.. (71.554, 85.057).. controls (72.213, 91.535) and (59.823, 92.535) .. 
		(53.326, 92.088) -- cycle;

		\path[draw=black,fill=ce0ffdc,line cap=butt,line join=miter,line 
		width=1.0pt,miter limit=4.0] (124.322, 94.671).. controls (121.831, 98.269) 
		and (115.007, 95.787) .. (111.239, 93.561).. controls (108.306, 91.828) and 
		(103.962, 88.061) .. (105.569, 85.057).. controls (107.799, 80.891) and 
		(115.748, 82.535) .. (119.743, 85.057).. controls (122.744, 86.952) and 
		(126.342, 91.752) .. (124.322, 94.671) -- cycle;

		\path[draw=black,fill=ce0ffdc,line cap=butt,line join=miter,line 
		width=1.0pt,miter limit=4.0] (108.894, 86.777).. controls (108.905, 92.162) 
		and (99.533, 94.504) .. (94.231, 93.561).. controls (90.284, 92.859) and 
		(84.322, 88.811) .. (85.727, 85.057).. controls (87.701, 79.781) and (97.043, 
		82.878) .. (102.585, 83.887).. controls (104.861, 84.302) and (108.89, 84.464)
		.. (108.894, 86.777) -- cycle;

		\path[draw=black,fill=cffebdc,line cap=butt,line join=miter,line 
		width=1.0pt,miter limit=4.0] (80.853, 88.757).. controls (77.224, 89.788) and 
		(72.611, 90.562) .. (69.541, 88.368).. controls (68.006, 87.271) and (66.804, 
		84.821) .. (67.525, 83.077).. controls (69.102, 79.258) and (74.695, 77.93) ..
		(78.826, 77.988).. controls (82.726, 78.042) and (89.09, 79.197) .. (89.357, 
		83.088).. controls (89.59, 86.487) and (84.13, 87.826) .. (80.853, 88.757) -- 
		cycle;
	\end{scope}

		%
		%
		%
		%
		%
		%
		%
		%
		%
		%
		%
		%
		%
		%
		%
		%
		%
		%
		%
		%
		%
		%
		%
		%
		%
		%
		%
		%
		%
		%
	
\end{tikzpicture}
		\caption{Axiom (iii).}
	\end{subfigure}
	\caption{Illustrations for \cref{theorem:causal grothendieck topology}.}
	\label{figure:properties of causal grothendieck topology}
\end{figure}

\begin{remark}
	One might think $J^-$ could define a Grothendieck topology (in the original sense of \cref{definition:grothendieck topology}) on the so-called \emph{Kleisli category} $\Kleisli(\Down)$ induced by the past localic cone monad. Explicitly, this is the category whose objects are opens $U\in \Opens X$ with a unique arrow $V\to U$ precisely if $V\sqleq \Down U$. We denote these arrows by $V\sqleqdown U$. That $U\sqleqdown U$ follows since $U\sqleq \Down U$, and that $W\sqleqdown V\sqleqdown U$ implies $W\sqleqdown U$ follows since $W\sqleq \Down V\sqleq \Down \Down U = \Down U$. Therefore $\Kleisli(\Down)$ is just the set $\Opens X$ equipped with the preorder $\sqleqdown$.
	
	In this preorder, a sieve $R$ on $U$ is indeed some collection of open regions $V\sqleq \Down U$, which at first glance seems like what we were looking for. However, defining $J^-$ as above, we see that axiom~(ii) of a Grothendieck topology can fail. Namely, if $S\in J^-(U)$ is a covering sieve and $h\colon W\sqleqdown U$, this does not force $W\sqleq U$. It can happen that $W$ is in the past of and disjoint from $\bigvee S$, so that $h^\ast(S) = \{V\wedge \Down W:V\in S\}\notin J^-(W)$, since $\Down W\wedge \bigvee S=\varnothing\notin \CovLeq^-(W)$. In order to interpret $\CovLeq^\pm$ as Grothendieck topologies we are therefore seemingly forced to consider the generalised notion of Grothendieck topology in \cref{definition:modified grothendieck topology}.
\end{remark}
%

\begin{remark}
	The fact that the localic cones are monads suggests a refinement of the \cref{definition:modified grothendieck topology} of $(T,\mu)$-Grothendieck topologies on $\cat{C}$ to the setting where $T$ is in fact a monad. This gives the additional data of a natural unit map $\eta\colon \id_\cat{C}\to T$, allowing us to state the additional axiom:
	\begin{enumerate}
		\item[(i$'$)] $(\eta_C)_\ast(t_C)\in J_T(C)$.
	\end{enumerate}
	Here $t_C$ is the maximal sieve on $C\in\cat{C}$, and we have the \emph{pushforward} sieve on~$T(C)$:
	\[
	(\eta_C)_\ast(t_C):=\{ \eta_C\circ f:f\in t_C\}.
	\]
	
	In the setting of causal coverages, where $J_T = J^-$ from \cref{theorem:causal grothendieck topology}, we get unit $\eta_U\colon U\sqleq \Down U$, and axiom~(i$'$) translates to $\bigvee (\eta_U)_\ast(t_U) = U\in \CovLeq^-(U)$, which holds by \cref{lemma:properties of coverage from locale}(a). We can interpret this as a compatibility condition between~$J^-$ and the canonical Grothendieck topology induced by the frame structure of $\Opens X$.
	
	Pushing that train of thought even further, we can think generalising ordered locales $(X,\Leq)$ to triples $(\cat{C},J,J_T)$, where $J$ encodes the ordinary topology of $X$, and $J_T$ the causal coverage of $\Leq$. Then the compatibility condition (i$'$) can be extended to the condition that
		\begin{enumerate}
			\item[(i$''$)] if $R\in J(C)$ then $(\eta_C)_\ast(R)\in J_T(C)$.
		\end{enumerate}
	If $J$ is the canonical coverage of $\Opens X$ and $J_T=J^-$, then this corresponds to saying that if $\bigvee R = U$ then $\bigvee R \in\CovLeq^-(U)$, which is again just~\cref{lemma:properties of coverage from locale}(a).
\end{remark}

\begin{example}\label{example:causal coverage for equality as Grothendieck topology}
	In \cref{example:causal coverage for equality} we saw that in the ordered locale $(X,=)$, where equality is the causal order, the causal coverage relation $\Cov_{=}^-$ is just equality. Since $\Up = \id_{\Opens X} = \Down$, via \cref{theorem:causal grothendieck topology} the coverage relation $\Cov_{=}^-$ induces an ordinary Grothendieck topology on $\Opens X$. It is in fact just the canonical Grothendieck topology of the locale.
\end{example}

\section{Causal coverages in spacetimes}
In the following we describe the behaviour of $\CovLeq^\pm$ in spacetimes, and relate it to the intuition using monotone curves outlined in the~\nameref{section:introduction}. For spacetimes we get two distinct notions of coverages, depending on whether one works with the causal or chronological curves.

\begin{definition}\label{definition:spacetime causal coverages}
	In a smooth spacetime $M$, we define the following notion of coverage. For $A,U\in \Opens M$ we say \emph{$A$ covers $U$ from below with (causal) curves} and write $A\in\Covcaus^-(U)$ if $A\subseteq J^-(U)$ and:
	\begin{quote}
		for every future-directed causal curve $\gamma\colon [a,b]\to M$ with $\gamma(b)\in U$, either ${\im(\gamma)\cap A \neq \varnothing}$ or $J^-(\gamma(a))\cap A\neq\varnothing$.
	\end{quote}
	There is an analogous chronological definition. We say \emph{$A$ covers $U$ from below with timelike curves} and write $A\in\Covchron^-(U)$ if $A\subseteq I^-(U)$ and:
	\begin{quote}
		for every future-directed timelike curve $\gamma\colon [a,b]\to M$ with $\gamma(b)\in U$, either ${\im(\gamma)\cap A\neq \varnothing}$ or $I^-(\gamma(a))\cap A\neq\varnothing$.
	\end{quote}
\end{definition}
\noindent\begin{minipage}{0.64\textwidth}
	\begin{remarknumbered}\label{remark:causal cover contained in chronological cover}
		We note that since all timelike curves are causal, it follows immediately that \[\Covcaus^-(U)\subseteq \Covchron^-(U).\]
		The converse inclusion does not generally hold. \cref{figure:non-intersecting lightlike curve} shows a typical $A\in \CovLeq^-(U)$, together with a causal curve $\gamma$ that is lightlike as it crosses~$A$. Then $A\setminus\im(\gamma)\in \Covchron^-(U)$, but of course ${A\setminus\im(\gamma)\notin\Covcaus^-(U)}$, since $\gamma$ itself provides a counterexample.
	\end{remarknumbered}
\end{minipage}%
\hfill%
\begin{minipage}{.43\textwidth}\centering
	\definecolor{cffffdc}{RGB}{255,255,220}
\definecolor{ce0ffdc}{RGB}{224,255,220}
\definecolor{c640000}{RGB}{100,0,0}
\definecolor{c644700}{RGB}{100,71,0}
\definecolor{c0d6400}{RGB}{13,100,0}
\definecolor{cffebdc}{RGB}{255,235,220}
\definecolor{cefffff}{RGB}{239,255,255}
\definecolor{c001764}{RGB}{0,23,100}

\def \globalscale {1.000000}
\begin{tikzpicture}[y=.75pt, x=.75pt, yscale=\globalscale,xscale=\globalscale, every node/.append style={scale=\globalscale}, inner sep=0pt, outer sep=0pt]



	\path[draw=black,fill=cffffdc,line cap=butt,line join=miter,line 
	width=1.0pt,miter limit=4.0] (32.14, 101.69).. controls (32.359, 92.299) and 
	(51.352, 102.343) .. (60.215, 99.23).. controls (66.644, 96.973) and (70.759, 
	90.046) .. (77.223, 87.892).. controls (85.727, 85.057) and (98.049, 91.676) 
	.. (97.065, 99.23).. controls (95.821, 108.791) and (78.736, 106.065) .. 
	(69.114, 106.673).. controls (56.703, 107.458) and (31.851, 114.123) .. 
	(32.14, 101.69) -- cycle;

	\path[draw=black,line cap=butt,line join=miter,line width=0.5pt,miter 
	limit=4.0,dash pattern=on 0.5pt off 2.0pt] (32.14, 104.525) -- (4.55, 28.702);

	\path[draw=black,line cap=butt,line join=miter,line width=0.5pt,miter 
	limit=4.0,dash pattern=on 0.5pt off 2.0pt] (96.557, 100.941) -- (123.339, 
	26.934);

	\path[draw=black,fill=ce0ffdc,line width=1.0pt] (53.786, 62.38) -- (47.596, 
	45.372) -- (116.557, 45.372) -- (110.509, 62.38) -- cycle;

	\path[draw=black,fill=ce0ffdc,line width=1.0pt] (51.771, 62.38) -- (16.82, 
	62.423) -- (10.831, 45.372) -- (45.581, 45.377) -- cycle;

	\begin{scope}[decoration={
			markings,
			mark=between positions .1 and .9 step 50pt with {\arrow[line width = .75pt]{>}}}
		]
		\path[postaction = {decorate},draw=c640000,line cap=butt,line join=miter,line width=1.0pt,miter 
	limit=4.0] (40.373, 28.364) -- (67.2, 102.065);
	\end{scope}

	\path[fill=black,line cap=butt,line join=miter,line width=1.0pt,miter 
	limit=4.0] (67.096, 101.915) ellipse (1.738pt and 1.739pt);

	\node[text=c640000,line cap=butt,line join=miter,line width=1.0pt,miter 
	limit=4.0,scale=1.0,anchor=south west] (text22) at (47.248, 29.945){$\gamma$};

	\node[text=c644700,line cap=butt,line join=miter,line width=1.0pt,miter 
	limit=4.0,scale=1.0,anchor=south west] (text23) at (80.068, 93.567){$U$};

	\node[text=c0d6400,line cap=butt,line join=miter,line width=1.0pt,miter 
	limit=4.0,scale=1.0,anchor=south west] (text24) at (20.182, 49.559){$A$};

	\node[text=c0d6400,line cap=butt,line join=miter,line width=1.0pt,miter 
	limit=4.0,scale=1.0,anchor=south west] (text24-6) at (101.589, 49.559){$A$};

		%
		%
		%
		%
		%
		%
		%
		%
		%
		%
		%
		%
		%
		%
		%
		%
		%
		%
		%
		%
		%
		%
		%
		%
		%
		%
		%
	
\end{tikzpicture}
	
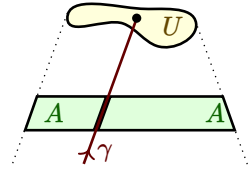
\captionof{figure}{Non-intersecting lightlike curve.}
	\label{figure:non-intersecting lightlike curve}
\end{minipage}

In this definition we made use of `bounded' causal and timelike curves, defined on closed intervals. In relativity theory it is common to make use of causal curves defined on half-open intervals $(a,b]$ that are \emph{inextendible} into the past. We think of these as causal curves that (when in reverse) propagate indefinitely into the past.
To make this rigorous, we adopt \cite[Definition~5.17]{landsman2021FoundationsGeneralRelativity}.

\begin{definition}\label{definition:past inextendible}
	A curve $\gamma\colon (a,b]\to M$ in a smooth spacetime is called \emph{past extendible} if the limit $\lim_{t\searrow a}\gamma(t)$ exists. Otherwise $\gamma$ is called \emph{past inextendible}.
\end{definition}


\begin{lemma}\label{lemma:past inextendible intersects then bounded intersects}
	In a smooth spacetime $M$, we have $A\in\Covcaus^-(U)$ as soon as ${A\subseteq J^-(U)}$ and:
	\begin{quote}
		every past inextendible causal curve landing in $U$ intersects $A$.
	\end{quote}
\end{lemma}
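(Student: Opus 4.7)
The plan is to argue by contrapositive: assume there is a bounded future-directed causal curve $\gamma \colon [a,b] \to M$ landing in $U$ such that $\im(\gamma) \cap A = \varnothing$ and $J^-(\gamma(a)) \cap A = \varnothing$; I will derive a past inextendible causal curve landing in $U$ and disjoint from $A$, contradicting the hypothesis.

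First I would invoke the standard extension result from Lorentzian causality theory (e.g.\ a Zorn's lemma argument on the poset of causal extensions of $\gamma$ into the past, as in \cite[Lemma~1.59]{minguzzi2019LorentzianCausalityTheory} or \cite[Proposition~5.18]{landsman2021FoundationsGeneralRelativity}) to produce a past inextendible causal curve $\tilde{\gamma}\colon (c,b] \to M$ that agrees with $\gamma$ on $[a,b]$. Note that such an extension exists in any smooth spacetime; this is the step I expect to be the main obstacle, although it is essentially a citation.

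Once $\tilde{\gamma}$ is obtained, it lands in $U$ since $\tilde{\gamma}(b) = \gamma(b) \in U$. By the hypothesis, $\im(\tilde{\gamma}) \cap A \neq \varnothing$, so there exists $t \in (c,b]$ with $\tilde{\gamma}(t) \in A$. I then split cases on the location of $t$: if $t \in [a,b]$, then $\tilde{\gamma}(t) = \gamma(t) \in \im(\gamma) \cap A$, contradicting $\im(\gamma) \cap A = \varnothing$. If instead $t \in (c,a)$, then the restriction $\tilde{\gamma}|_{[t,a]}$ is a future-directed causal curve from $\tilde{\gamma}(t)$ to $\gamma(a)$, so $\tilde{\gamma}(t) \in J^-(\gamma(a)) \cap A$, contradicting $J^-(\gamma(a)) \cap A = \varnothing$.

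Either case yields a contradiction, so one of $\im(\gamma) \cap A \neq \varnothing$ or $J^-(\gamma(a)) \cap A \neq \varnothing$ must hold. Combined with the assumption $A \subseteq J^-(U)$, this is exactly the defining condition for $A \in \Covcaus^-(U)$ in \cref{definition:spacetime causal coverages}.
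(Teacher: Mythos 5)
Your proposal is correct and follows essentially the same route as the paper: extend the bounded causal curve to a past inextendible one via the standard extension theorem, apply the hypothesis to get an intersection parameter, and case-split on whether it lies in $[a,b]$ (so $\gamma$ itself meets $A$) or before $a$ (so the intersection point lies in $J^-(\gamma(a))\cap A$). The contrapositive framing and your choice of a single parameter $t$ instead of the paper's small closed interval are only cosmetic differences.
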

\begin{proof}
	Let $\gamma\colon [a,b]\to M$ be a causal curve landing in $U$. By~\cite[Theorem~2.5.7]{chrusciel2020ElementsCausality} we can extend $\gamma$ to a past inextendible causal curve $\overline{\gamma}\colon (c,b]\to M$ such that ${\overline{\gamma}|_{[a,b]} = \gamma}$. By assumption $\overline{\gamma}^{-1}(A)$ is open and non-empty, so since~$\mathbb{R}$ is regular there exists some closed interval $[n,m]\subseteq \overline{\gamma}^{-1}(A)$ with $n<m$. If $n\geq a$ then $\gamma$ already intersects $A$. If $n\leq a$ then $\overline{\gamma}(n)\in A \cap J^-(\gamma(a))$. Since $\gamma$ was arbitrary this gives $A\in\Covcaus^-(U)$.
	%
	%
\end{proof}
\noindent\begin{minipage}{0.64\textwidth}
	\begin{remarknumbered}\label{remark:difference causal coverage bounded or inextendible}
		An analogous result holds for the chronological coverage. The converse of \cref{lemma:past inextendible intersects then bounded intersects} is not generally true. In \cref{figure:non-intersecting inextendible curve} we have a past inextendible curve $\gamma$ in Minkowski space with one point removed that does not intersect~$A$. Note however that $A\in\Covcaus^-(U)$ does hold, since for every restriction $\gamma|_{[c,b]}$ to a bounded curve we do get $J^-(\gamma(c))\cap A\neq\varnothing$.
	\end{remarknumbered}
\end{minipage}%
\hfill%
\begin{minipage}{.45\textwidth}\centering
	\definecolor{cffffdc}{RGB}{255,255,220}
\definecolor{ce0ffdc}{RGB}{224,255,220}
\definecolor{c640000}{RGB}{100,0,0}
\definecolor{c644700}{RGB}{100,71,0}
\definecolor{c0d6400}{RGB}{13,100,0}
\definecolor{cff0909}{RGB}{255,9,9}
\definecolor{cffebdc}{RGB}{255,235,220}
\definecolor{cefffff}{RGB}{239,255,255}
\definecolor{c001764}{RGB}{0,23,100}

\def \globalscale {1.000000}
\begin{tikzpicture}[y=.85pt, x=.8pt, yscale=\globalscale,xscale=\globalscale, every node/.append style={scale=\globalscale}, inner sep=0pt, outer sep=0pt]
	\path[draw=black,fill=cffffdc,line cap=butt,line join=miter,line 
	width=1.0pt,miter limit=4.0] (56.027, 113.047).. controls (47.549, 111.504) 
	and (31.296, 115.619) .. (30.89, 107.012).. controls (30.512, 98.997) and 
	(45.721, 100.355) .. (53.664, 99.216).. controls (65.543, 97.511) and (83.846,
	89.25) .. (89.632, 100.827).. controls (92.724, 107.012) and (87.288, 
	115.488) .. (80.467, 116.102).. controls (71.163, 116.939) and (64.104, 
	114.517) .. (56.027, 113.047) -- cycle;

	\path[draw=black,line cap=butt,line join=miter,line width=0.5pt,miter 
	limit=4.0,dash pattern=on 0.5pt off 2.0pt] (30.89, 107.012) -- (6.357, 39.591);

	\path[draw=black,line cap=butt,line join=miter,line width=0.5pt,miter 
	limit=4.0,dash pattern=on 0.5pt off 2.0pt] (90.267, 105.21) -- (114.082, 
	39.404);

	\path[draw=black,fill=ce0ffdc,line width=1.0pt] (104.682, 65.379) -- (15.946, 
	65.379) -- (10.045, 49.917) -- (110.26, 49.917) -- cycle;

	\begin{scope}[decoration={
			markings,
			mark=between positions .5 and .8 step 20pt with {\arrow[line width = .75pt]{>}}}
		]
		\path[postaction = {decorate},draw=c640000,line cap=butt,line join=miter,line width=1.0pt,miter 
		limit=4.0] (71.595, 72.056).. controls (71.969, 80.562) and (68.385, 88.849) 
		.. (68.106, 97.359).. controls (68.005, 100.436) and (68.765, 106.57) .. 
		(68.765, 106.57);
	\end{scope}

	\path[fill=black,line cap=butt,line join=miter,line width=1.0pt,miter 
	limit=4.0] (68.765, 105.891) ellipse (1.546pt and 1.546pt);

	\node[text=c640000,line cap=butt,line join=miter,line width=1.0pt,miter 
	limit=4.0,scale=1.0,anchor=south west] (text22) at (73.832, 84.015){$\gamma$};

	\node[text=c644700,line cap=butt,line join=miter,line width=1.0pt,miter 
	limit=4.0,scale=1.0,anchor=south west] (text23) at (74.675, 103.482){$U$};

	\node[text=c0d6400,line cap=butt,line join=miter,line width=1.0pt,miter 
	limit=4.0,scale=1.0,anchor=south west] (text24) at (16.246, 54.218){$A$};

	\path[draw=cff0909, fill = white,line width=2pt] 
	(71.622, 70.276) ellipse (3.055pt and 3.055pt);

	\node[text=cff0909,line width=0.601pt,dash pattern=on 6.009pt off 
	1.803pt,anchor=south west] (text2) at (28.442, 79.264){$\text{point}$};

	\node[text=cff0909,line width=0.601pt,dash pattern=on 6.009pt off 
	1.803pt,anchor=south west] (text3) at (22.639, 73.151){$\text{removed}$};

		%
		%
		%
		%
		%
		%
		%
		%
		%
		%
		%
		%
		%
		%
		%
		%
		%
		%
		%
		%
		%
		%
		%
		%
		%
		%
		%
	
\end{tikzpicture}
	
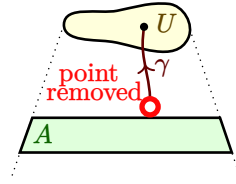
\captionof{figure}{Non-intersecting inextendible curve.}
	\label{figure:non-intersecting inextendible curve}
\end{minipage}

In the following we compare the localic and curve-wise definitions of coverages in a spacetime. First we need a small lemma.
\begin{lemma}\label{lemma:path restriction preserves points}
	Let $(S,\leq)$ be an ordered space with open cones, and consider a path $(p_n)_{n=0}^N$ and a chain $x_0\leq \cdots \leq x_N$ with $x_n\in p_n$ for all $n$. If $x_N\in W\subseteq U$ for $W\in\Opens S$, then $x_n\in (p|_W)_n$ for all $n$.
\end{lemma}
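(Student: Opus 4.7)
The plan is a straightforward downward induction on the index $n$, exploiting the recursive definition of path restriction together with the fact that, in an ordered space with open cones, the localic cone $\Down$ coincides with the point-set cone $\down$ (\cref{corollary:localic cones in space with open cones}), and meets in $\Opens S$ coincide with intersections.

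First I would dispatch the base case $n = N$: by construction $(p|_W)_N = W$, and $x_N \in W$ holds by hypothesis. For the inductive step, assume $x_{n+1} \in (p|_W)_{n+1}$. By hypothesis $x_n \in p_n$, and $x_n \leq x_{n+1} \in (p|_W)_{n+1}$ gives $x_n \in \down (p|_W)_{n+1} = \Down (p|_W)_{n+1}$, where the equality uses \cref{corollary:localic cones in space with open cones}. Combining these two memberships,
\[
x_n \in p_n \cap \Down (p|_W)_{n+1} = p_n \wedge \Down(p|_W)_{n+1} = (p|_W)_n,
\]
which completes the induction.

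There is essentially no obstacle here: the only small point worth flagging is that the argument genuinely needs the open-cone assumption, since otherwise $\Down (p|_W)_{n+1}$ is merely the interior of the downset $\down (p|_W)_{n+1}$ (\cref{lemma:localic cones in a space}), and the point $x_n$ need not lie in this interior even though it lies in the downset. With open cones this gap vanishes, and the lemma follows immediately from the recursive definition in \cref{construction:path restriction}.
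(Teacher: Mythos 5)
Your proof is correct and follows essentially the same route as the paper: downward induction from $n=N$, with the step $x_n \leq x_{n+1}$ giving $x_n \in p_n \cap \down(p|_W)_{n+1} = p_n \wedge \Down(p|_W)_{n+1} = (p|_W)_n$, the open-cone hypothesis supplying $\down = \Down$ exactly as in the paper's argument. Your remark about why open cones are genuinely needed is a nice observation but does not change the substance.
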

\begin{proof}
	Since $x_N\in W = (p|_W)_N$, for an induction proof it suffices to show that if $x_{n+1}\in (p|_W)_{n+1}$ then $x_n \in (p|_W)_n$. But $x_n\leq x_{n+1}$, so using the open cone condition:
	\[
	x_n \in p_n \cap \down x_{n+1}
	\subseteq
	p_n \cap \Down (p|_W)_{n+1}=: (p|_W)_n.\qedhere
	\] 
\end{proof}

\begin{figure}[b]\centering
	\begin{subfigure}[b]{0.4\textwidth}\centering
		\definecolor{ce0ffdc}{RGB}{224,255,220}
\definecolor{cffebdc}{RGB}{255,235,220}
\definecolor{cffffdc}{RGB}{255,255,220}
\definecolor{c0d6400}{RGB}{13,100,0}
\definecolor{c644700}{RGB}{100,71,0}
\definecolor{c640000}{RGB}{100,0,0}
\definecolor{c001764}{RGB}{0,23,100}
\definecolor{cefffff}{RGB}{239,255,255}

\def \globalscale {1.000000}
\begin{tikzpicture}[y=.85pt, x=.85pt, yscale=\globalscale,xscale=\globalscale, every node/.append style={scale=\globalscale}, inner sep=0pt, outer sep=0pt]
  \begin{scope}[shift={(-271.503, 73.658)}]
    \path[fill=ce0ffdc,line cap=butt,line join=miter,line width=1.0pt,miter 
  limit=4.0] (394.08, 8.565).. controls (382.741, 5.73) and (348.989, 19.619) ..
   (326.067, 18.925).. controls (309.639, 18.427) and (289.198, 14.234) .. 
  (277.86, 8.565) -- (277.86, -14.112).. controls (277.86, -14.112) and (309.66,
   -4.726) .. (326.049, -4.251).. controls (348.953, -3.588) and (381.569, 
  -17.508) .. (394.08, -14.112) -- cycle;

    \path[draw=black,line cap=butt,line join=miter,line width=1.001pt,miter 
  limit=4.0] (277.86, -14.112).. controls (277.86, -14.112) and (309.66, -4.726)
   .. (326.049, -4.251).. controls (348.953, -3.588) and (381.569, -17.508) .. 
  (394.08, -14.112)(394.08, 8.565).. controls (382.741, 5.73) and (348.989, 
  19.619) .. (326.067, 18.925).. controls (309.639, 18.427) and (289.198, 
  14.234) .. (277.86, 8.565);

    \path[draw=black,fill=cffebdc,line cap=butt,line join=miter,line 
  width=1.0pt,miter limit=4.0] (317.109, 49.379).. controls (314.873, 46.189) 
  and (317.235, 40.841) .. (319.944, 38.041).. controls (324.726, 33.096) and 
  (333.552, 29.464) .. (339.786, 32.372).. controls (343.988, 34.331) and 
  (347.085, 40.733) .. (345.455, 45.074).. controls (343.803, 49.475) and 
  (337.29, 50.218) .. (332.653, 50.989).. controls (327.515, 51.844) and 
  (320.099, 53.645) .. (317.109, 49.379) -- cycle;

    \path[draw=black,fill=cffebdc,line cap=butt,line join=miter,line 
  width=1.0pt,miter limit=4.0] (320.379, -33.955).. controls (319.526, -37.756) 
  and (319.778, -43.458) .. (323.214, -45.294).. controls (327.183, -47.415) and
   (332.1, -43.11) .. (335.923, -40.735).. controls (339.845, -38.299) and 
  (346.184, -35.727) .. (345.891, -31.12).. controls (345.511, -25.148) and 
  (337.435, -19.667) .. (331.495, -20.394).. controls (325.694, -21.105) and 
  (321.659, -28.252) .. (320.379, -33.955) -- cycle;

    \path[draw=black,line cap=butt,line join=miter,line width=0.5pt,miter 
  limit=4.0,dash pattern=on 0.5pt off 2.0pt] (345.891, -31.12) -- (377.875, 
  56.754);

    \path[draw=black,line cap=butt,line join=miter,line width=0.5pt,miter 
  limit=4.0,dash pattern=on 0.5pt off 2.0pt] (320.379, -42.459) -- (284.269, 
  56.754);

    \path[draw=black,line cap=butt,line join=miter,line width=0.5pt,miter 
  limit=4.0,dash pattern=on 0.5pt off 2.0pt] (317.109, 49.379) -- (280.694, 
  -50.669);

    \path[draw=black,line cap=butt,line join=miter,line width=0.5pt,miter 
  limit=4.0,dash pattern=on 0.5pt off 2.0pt] (345.455, 45.074) -- (380.41, 
  -50.963);

    \path[draw=black,fill=cffffdc,line cap=butt,line join=miter,line 
  width=1.0pt,miter limit=4.0] (356.543, 14.943).. controls (346.146, 17.293) 
  and (335.407, 19.308) .. (326.342, 19.034).. controls (319.796, 18.836) and 
  (312.617, 18.043) .. (305.627, 16.812) -- (302.305, 7.528) -- (307.491, 
  -6.721).. controls (314.315, -5.352) and (321.102, -4.293) .. (326.324, 
  -4.141).. controls (335.043, -3.889) and (345.164, -5.753) .. (355.053, 
  -7.967) -- (359.815, 5.953) -- cycle;

    \node[text=c0d6400,line cap=butt,line join=miter,line width=1.0pt,miter 
  limit=4.0,anchor=south west] (text12) at (377.072, -5.609){$A$};

    \node[text=c644700,line cap=butt,line join=miter,line width=1.0pt,miter 
  limit=4.0,anchor=south west] (text13) at (308.056, 1.221){$W$};

    \node[text=c640000,line cap=butt,line join=miter,line width=1.0pt,miter 
  limit=4.0,anchor=south west] (text14) at (305.761, -43.311){$p_k$};

    \node[text=c640000,line cap=butt,line join=miter,line width=1.0pt,miter 
  limit=4.0,anchor=south west] (text15) at (291.657, 42.089){$p_{k+1}$};

	\begin{scope}[decoration={
	markings,
	mark=between positions .35 and .95 step 35pt with {\arrow[line width = .75pt]{>}}}
	]
    \path[postaction = {decorate},draw=c001764,line cap=butt,line join=miter,line width=1.0pt,miter 
  limit=4.0] (332.933, -50.818).. controls (332.933, -50.818) and (329.063, 
  -27.117) .. (329.044, -15.16).. controls (329.015, 2.294) and (334.342, 
  19.458) .. (334.552, 36.911).. controls (334.633, 43.564) and (332.933, 
  56.805) .. (332.933, 56.805);
	\end{scope}

    \path[fill=black,line cap=butt,line join=miter,line width=1.0pt,miter 
  limit=4.0] (334.552, 42.026) circle (1.417pt);

    \path[fill=black,line cap=butt,line join=miter,line width=1.0pt,miter 
  limit=4.0] (330.382, -33.967) circle (1.417pt);

    \path[fill=black,line cap=butt,line join=miter,line width=1.0pt,miter 
  limit=4.0] (331.059, 7.795) circle (1.417pt);

    \node[text=black,line cap=butt,line join=miter,line width=1.0pt,miter 
  limit=4.0,anchor=south west] (text16) at (347.259, 42.065){$x_{k+1}$};

    \node[text=black,line cap=butt,line join=miter,line width=1.0pt,miter 
  limit=4.0,anchor=south west] (text16-8) at (331.657, -31.621){$x_k$};

    \node[text=black,line cap=butt,line join=miter,line width=1.0pt,miter 
  limit=4.0,anchor=south west] (text17) at (333.126, 2.794){$\gamma(t)$};

  \end{scope}
%
%
%
%
%
%
%
%
%
%
%
%
%
%
%
%
%
%
%
%
%
%
%
%
%
%
%

\end{tikzpicture}
		\caption{Construction of $W$.}
	\end{subfigure}\hfil
	\begin{subfigure}[b]{0.4\textwidth}\centering
		\definecolor{cffffdc}{RGB}{255,255,220}
\definecolor{ce0ffdc}{RGB}{224,255,220}
\definecolor{c640000}{RGB}{100,0,0}
\definecolor{c644700}{RGB}{100,71,0}
\definecolor{c0d6400}{RGB}{13,100,0}
\definecolor{cffebdc}{RGB}{255,235,220}
\definecolor{cefffff}{RGB}{239,255,255}
\definecolor{c001764}{RGB}{0,23,100}

\def \globalscale {1.000000}
\begin{tikzpicture}[y=.85pt, x=.85pt, yscale=\globalscale,xscale=\globalscale, every node/.append style={scale=\globalscale}, inner sep=0pt, outer sep=0pt]
  \path[draw=black,fill=cffffdc,line cap=butt,line join=miter,line 
  width=1.0pt,miter limit=4.0] (56.479, 101.047).. controls (48.295, 100.625) 
  and (34.949, 112.226) .. (32.14, 104.525).. controls (29.039, 96.019) and 
  (46.299, 91.894) .. (55.181, 90.153).. controls (69.462, 87.355) and (91.695, 
  84.55) .. (98.203, 97.569).. controls (101.68, 104.525) and (94.726, 108.003) 
  .. (87.055, 108.693).. controls (76.592, 109.634) and (66.971, 101.588) .. 
  (56.479, 101.047) -- cycle;

  \path[draw=black,line cap=butt,line join=miter,line width=0.5pt,miter 
  limit=4.0,dash pattern=on 0.5pt off 2.0pt] (32.14, 104.525) -- (4.55, 28.702);

  \path[draw=black,line cap=butt,line join=miter,line width=0.5pt,miter 
  limit=4.0,dash pattern=on 0.5pt off 2.0pt] (98.917, 102.499) -- (125.699, 
  28.492);

  \path[draw=black,fill=ce0ffdc,line width=1.0pt] (70.53, 63.48).. controls 
  (70.08, 57.601) and (67.871, 51.885) .. (66.506, 46.091) -- (119.443, 46.091) 
  -- (113.115, 63.48) -- cycle;

  \path[draw=black,fill=ce0ffdc,line width=1.0pt] (68.574, 63.48) -- (17.515, 
  63.48) -- (10.878, 46.091) -- (64.55, 46.091).. controls (65.915, 51.885) and 
  (68.125, 57.601) .. (68.574, 63.48) -- cycle;

\begin{scope}[decoration={
markings,
mark=between positions .1 and .95 step 41pt with {\arrow[line width = .75pt]{>}}}
]
  \path[postaction = {decorate},draw=c640000,line cap=butt,line join=miter,line width=1.0pt,miter 
  limit=4.0] (66.193, 28.702).. controls (66.193, 28.702) and (64.454, 34.809) 
  .. (64.455, 37.943).. controls (64.456, 47.304) and (69.654, 56.178) .. 
  (69.656, 65.538).. controls (69.658, 73.123) and (65.886, 80.328) .. (65.452, 
  87.901).. controls (65.254, 91.357) and (66.193, 98.259) .. (66.193, 98.259);
\end{scope}

  \path[fill=black,line cap=butt,line join=miter,line width=1.0pt,miter 
  limit=4.0] (66.193, 97.496) ellipse (1.738pt and 1.739pt);

  \node[text=c640000,line cap=butt,line join=miter,line width=1.0pt,miter 
  limit=4.0,scale=1.0,anchor=south west] (text22) at (70.313, 30.035){$\gamma$};

  \node[text=c644700,line cap=butt,line join=miter,line width=1.0pt,miter 
  limit=4.0,scale=1.0,anchor=south west] (text23) at (76.634, 94.787){$U$};

  \node[text=c0d6400,line cap=butt,line join=miter,line width=1.0pt,miter 
  limit=4.0,scale=1.0,anchor=south west] (text24) at (20.182, 49.559){$A$};

  \node[text=c0d6400,line cap=butt,line join=miter,line width=1.0pt,miter 
  limit=4.0,scale=1.0,anchor=south west] (text24-6) at (101.589, 49.559){$A$};

%
%
%
%
%
%
%
%
%
%
%
%
%
%
%
%
%
%
%
%
%
%
%
%
%
%
%

\end{tikzpicture}
		\caption{Counterexample of converse.}
	\end{subfigure}
	\caption{Illustrations for the proof of \cref{proposition: localic coverage contains chronological coverage}.}
	\label{figure:causal coverage in spacetime}
\end{figure}
\begin{proposition}\label{proposition: localic coverage contains chronological coverage}
	In any smooth spacetime: $\Covchron^-(U)\subseteq \CovLeq^-(U)$.
\end{proposition}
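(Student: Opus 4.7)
The plan is to verify both conditions for $A\in\CovLeq^-(U)$. The inclusion $A\sqleq \Down U$ follows immediately from $A\subseteq I^-(U)$, since $\Down U = I^-(U)$ in spacetime by \cref{example:localic cones in spacetime}. For the second condition, fix a path $p=(p_n)_{n=0}^N$ landing in $U$. The plan is to produce, for each $y\in p_N$, an open neighbourhood $W_y\sqleq p_N$ of $y$ and a path $q^y$ that refines $p|_{W_y}$ and inhabits $A$; since the $W_y$'s cover $p_N$, the resulting family $(q^y)_{y\in p_N}$ will supply the required local past refinement.

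To build the local data around $y\in p_N$, first use $p_n\Leq p_{n+1}$ together with $\Up p_n = I^+(p_n)$ (\cref{example:localic cones in spacetime}) to inductively pick, starting from $x_N:=y$, points $x_n\in p_n$ with $x_n\chron x_{n+1}$. Concatenating future-directed timelike curves joining successive $x_n$ yields a timelike curve $\gamma\colon [a,b]\to M$ ending at $y$ with $\gamma(a)=x_0$, which lands in $U$. By the hypothesis $A\in\Covchron^-(U)$, one obtains a point $z\in A$ either on $\gamma$ (so $z\chron y$) or in $I^-(x_0)$ (so still $z\chron y$).

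Next, assemble the refinement. Choose an open $V_z\ni z$ with $V_z\sqleq A$ and an open $W_y\ni y$ with $W_y\sqleq p_N$; then, working backwards from $V_N:=W_y$, choose open neighbourhoods $V_n$ of $x_n$ satisfying $V_n\sqleq (p|_{W_y})_n$ and $V_n\Leq V_{n+1}$. Such $V_n$ exist because \cref{lemma:path restriction preserves points} places each $x_n$ inside $(p|_{W_y})_n$, and because $x_n\chron x_{n+1}$ together with openness of $I^\pm$ (\cref{theorem:spacetimes have OC}) provides room to arrange the Egli-Milner relations. The path $q^y$ is then the sequence $V_0,\ldots, V_N$ with $V_z$ inserted between $V_k$ and $V_{k+1}$ when $z$ sits on the segment of $\gamma$ from $x_k$ to $x_{k+1}$, or prepended to the sequence when $z\in I^-(x_0)$. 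It inhabits $A$ through the step $V_z$ and refines $p|_{W_y}$ because each step $V_n$ lies in $(p|_{W_y})_n$.

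The hard part will be the coordinated shrinking of the neighbourhoods $V_n$, $V_z$, and $W_y$ so that simultaneously they lie inside the restricted path steps, they form the Egli-Milner chain $V_n\Leq V_{n+1}$, and---in the first case---the insertions $V_k\Leq V_z\Leq V_{k+1}$ hold, all while keeping $V_z\sqleq A$. This is a finite repeated-shrinking argument that leans on the push-up principle encoded by \cref{theorem:spacetimes have OC}, which is precisely what makes both the chain $x_n\chron x_{n+1}$ and the openness needed to fit the $V_n$'s together available at each stage.
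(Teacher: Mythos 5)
Your strategy is sound and its first half coincides with the paper's own proof: both arguments build a chain $x_0\chron\cdots\chron x_N$ with $x_n\in p_n$ (using $\Up p_n=I^+(p_n)$), join it by a timelike curve landing in $U$, and invoke $A\in\Covchron^-(U)$ to produce a point $z\in A$ with $z\chron y$ (up to the harmless edge case where the curve meets $A$ only at its endpoint, which openness of $A$ removes). The two proofs then diverge. The paper stays inside the localic machinery: it sets $W:=A\wedge\Up p_k\wedge\Down p_{k+1}$, splits $p$ at index $k$, applies \cref{construction:path restriction} to both halves, and glues $\check{p}|^{p_{k+1}\wedge\Up W}\cdot W\cdot\hat{p}|_{p_k\wedge\Down W}$, using parallel orderedness (\cref{lemma:wedge iff strong wedge}) and \cref{lemma:path restriction preserves refinement,lemma:properties of refinement relation} to see this is a past refinement with $x_N$ in its endpoint. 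You instead build the refining path by hand from pointwise neighbourhoods, which is more elementary and bypasses \cref{lemma:path restriction preserves refinement} entirely, at the cost of having to verify the Egli--Milner relations directly---precisely the step you defer as ``the hard part''.

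That deferred step does go through, and in fact needs no iterated shrinking: a single explicit choice works. With $x_{-1}$ understood as absent, set $W_y:=p_N\cap I^+(x_{N-1})$ and, for $0\leq n<N$,
\[
V_n:=\bigl(p|_{W_y}\bigr)_n\cap I^+(x_{n-1})\cap I^-(x_{n+1}),
\qquad V_N:=W_y,
\]
replacing the relevant neighbour by $z$ at the insertion index, with $V_z:=A\cap I^+(x_k)\cap I^-(x_{k+1})$ (respectively $V_z:=A\cap I^-(x_0)$ in the prepended case, after intersecting $V_0$ with $I^+(z)$). Each step is an open neighbourhood of its chain point by \cref{lemma:path restriction preserves points,theorem:spacetimes have OC}, hence non-empty, and every Egli--Milner relation $V\Leq V'$ holds for free: $V\subseteq I^-(x')\subseteq\down V'$ because the next chain point $x'$ lies in $V'$, and $V'\subseteq I^+(x)\subseteq\up V$ because the previous chain point $x$ lies in $V$. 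Since $V_n\sqleq (p|_{W_y})_n$ and $q^y_\top=W_y\ni y$, the family $(q^y)_{y\in p_N}$ is a local past refinement inhabiting $A$, completing your argument. So your route is correct once this choice is spelled out; what the paper's route buys is that no pointwise bookkeeping beyond the chain is needed and the same restriction machinery is reused elsewhere (e.g.\ in \cref{lemma:properties of coverage from locale}), while yours buys a self-contained, spacetime-level verification using only openness of the cones and push-up.
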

\begin{proof}
	Take a curve-wise covering region $A\in \Covchron^-(U)$, and take a localic path ${p=(p_n)_{n=0}^N}$ landing in~$U$. By definition, each step $p_n$ is non-empty. Starting with $x_N\in p_N\subseteq U$, this allows us to inductively construct a chain of chronologically related points:
	\[
	\begin{tikzcd}[cramped,row sep=tiny]
		{x_0} & {x_1} & \cdots & {x_{N-1}} & {x_N} \\
		{p_0} & {p_1} & \cdots & {p_{N-1}} & {p_N.}
		\arrow["\chron"{marking, allow upside down}, draw=none, from=1-1, to=1-2]
		\arrow["\chron"{marking, allow upside down}, draw=none, from=1-2, to=1-3]
		\arrow["\chron"{marking, allow upside down}, draw=none, from=1-3, to=1-4]
		\arrow["\chron"{marking, allow upside down}, draw=none, from=1-4, to=1-5]
		\arrow["\in"{marking, allow upside down}, draw=none, from=1-1, to=2-1]
		\arrow["\in"{marking, allow upside down}, draw=none, from=1-2, to=2-2]
		\arrow["\Leq"{marking, allow upside down}, draw=none, from=2-1, to=2-2]
		\arrow["\Leq"{marking, allow upside down}, draw=none, from=2-2, to=2-3]
		\arrow["\in"{marking, allow upside down}, draw=none, from=1-5, to=2-5]
		\arrow["\in"{marking, allow upside down}, draw=none, from=1-4, to=2-4]
		\arrow["\Leq"{marking, allow upside down}, draw=none, from=2-3, to=2-4]
		\arrow["\Leq"{marking, allow upside down}, draw=none, from=2-4, to=2-5]
	\end{tikzcd}
	\]
	By definition of the chronology relation, from this chain we obtain a timelike curve $\gamma \colon [a,b]\to M$ that starts at $x_0$, passes each $x_n$, and ends in $x_N$. By hypothesis, there either exists $t\in [a,b]$ such that $\gamma(t)\in A$, or $I^-(\gamma(a))\cap A \neq\varnothing$. The proof in the latter case is analogous and simpler, so we leave it to the reader. We can therefore assume there exists an index~$k$ such that $x_k\chron \gamma(t)\chron x_{k+1}$ with $\gamma(t)\in A$, and this gives a non-empty open
	\[
	\gamma(t)\in A\cap \Up p_k\cap \Down p_{k+1}=: W.
	\]
	See \cref{figure:causal coverage in spacetime}(a) for intuition. Since $p_k\Leq \Up p_k\cap \Down p_{k+1}\Leq p_{k+1}$, by parallel orderedness (via \cref{lemma:wedge iff strong wedge}) we obtain causal relations
	\[
	p_k\cap \Down W\Leq W\Leq p_{k+1}\cap \Up W.
	\]
	We will use this fact to refine $p$ into a path that inhabits $W\subseteq A$. This refinement is constructed as follows. First, split the original path in two by defining ${\hat{p}= (p_n)_{n=0}^k}$ and $\check{p}= (p_n)_{n=k+1}^N$, and take the refinements $\hat{p}|_{p_k\cap \Down W}$ and $\check{p}|^{p_{k+1}\cap \Up W}$. The desired path is defined via the concatenation of these refinements, interjected with the single-step path $W$ in the middle:
	\[
	q^{x_N}:= \check{p}|^{p_{k+1}\cap \Up W}\cdot W \cdot \hat{p}|_{p_k\cap \Down W}.
	\]
	This is well-defined by the previous equation. Using \cref{lemma:properties of refinement relation}(c) and \cref{lemma:path restriction preserves refinement} we see that $q^{x_N}\refines (\check{p}\cdot \hat{p})|_{q^{x_N}_\top}=p|_{q^{x_N}_\top}$,
	and by construction it inhabits $A$. Finally, since in particular we have $W\ni\gamma(t)\caus x_{k+1}\in p_{k+1}$ it follows by (the dual of) \cref{lemma:path restriction preserves points} that ${x_N\in q^{x_N}}$. Repeating this construction for arbitrary $x\in p_N$ gives a family $(q^x)_{x\in p_N}$ of refinements of $p$ that inhabit $A$, such that ${\bigvee_{x\in p_N}q^x_\top = p_N}$. In other words, $(q^x)_{x\in p_N}\refines p$ is a local past refinement, and so ${A\in\CovLeq^-(U)}$. 
\end{proof}

\begin{remarknumbered}\label{remark:difference causal coverage localic or curve}
	The converse inclusion of \cref{proposition: localic coverage contains chronological coverage} does not generally hold. \cref{figure:causal coverage in spacetime}(b) provides a simple counterexample. Here we have a past inextendible timelike curve $\gamma$ in Minkowski space terminating in~$U$. The covering region is constructed by starting with a typical covering region $A\in\CovLeq^-(U)$, and cutting out the image $\im(\gamma)$. Since no localic path $p$ can pass through this infinitesimal hole, in the localic setting we still get that~$A\setminus \im(\gamma)\in\CovLeq^-(U)$.	
	From a philosophical point of view it can be argued that the localic definition is more appropriate, since in the definition of $\Covcaus^-$ or $\Covchron^-$ the only paths that can dodge $A\setminus\im(\gamma)$ are those where the information it carries is compacted into an infinitesimal space, which could be deemed physically unreasonable.
\end{remarknumbered}

\section{Abstract causal coverages}
Before moving on to domains of dependence, in this intermediate section we sketch a tentative definition where instead of the causal ordering $\Leq$ on a locale, the coverage relations $\CovLeq^\pm$ are treated as fundamental. The following list of axioms are supposed to be abstract versions of the properties in \cref{lemma:properties of coverage from locale,lemma:coverage locale satisfies Cov-V strong}. The challenge here is to state axioms for abstract causal coverages $\Cov^\pm$ without referring to the pre-existing structure of $\Leq$ or the localic cones $\Up$ and $\Down$.

\begin{definition}\label{definition:causal coverage abstract}
	Let $X$ be a locale. A \emph{causal coverage} on $X$ consists of two functions
	\[
	\Cov^\pm\colon \Opens X\longrightarrow\Powerset(\Opens X)
	\]
	satisfying the following axioms:
	\begin{enumerate}[label = (C\arabic*)]
		\item\label{axiom:Cov-unit} $U\in\Cov^\pm(U)$;
		\item\label{axiom:Cov-joins} axiom~\eqref{axiom:Cov-V strong} holds;
		\item\label{axiom:Cov-transitivity} if $B\in \Cov^\pm(A)$ and $A\in\Cov^\pm(U)$ then $B\in\Cov^\pm(U)$;
		\item\label{axiom:Cov-ideal} if $A,B\in \Cov^\pm(U)$ and $A\sqleq C\sqleq B$, then $C\in\Cov^\pm(U)$;
		\item\label{axiom:Cov-flip} if $A\in\Cov^\pm(U)$ then there exists $W\in\Cov^\mp(A)$ such that $U\sqleq W$.
	\end{enumerate}
	A \emph{causal site}\footnote{Now not to be confused with the homonymous but distinct notion in \cite{christensen2005CausalSitesQuantum}.} $(X,\Cov^\pm)$ is a locale $X$ equipped with a causal coverage $\Cov^\pm$.
\end{definition}

These axioms should be directly compared to the properties in \cref{lemma:properties of coverage from locale,lemma:coverage locale satisfies Cov-V weak,lemma:coverage locale satisfies Cov-V strong}. Indeed, we have the following.

\begin{proposition}
	If $(X,\Leq)$ is a parallel ordered locale satisfying~\eqref{axiom:LV}, then $\CovLeq^\pm$ from \cref{definition:causal coverage from ordered locale} define a causal coverage on $X$.
\end{proposition}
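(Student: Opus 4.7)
The strategy is to match up the five axioms of \cref{definition:causal coverage abstract} with results already proved in the paper, noting that \ref{axiom:Cov-unit}, \ref{axiom:Cov-joins}, and \ref{axiom:Cov-transitivity} are immediate, so only \ref{axiom:Cov-ideal} and \ref{axiom:Cov-flip} require a genuine argument, both short. Specifically: \ref{axiom:Cov-unit} is \cref{lemma:properties of coverage from locale}(a), \ref{axiom:Cov-joins} is precisely \cref{lemma:coverage locale satisfies Cov-V strong}, and \ref{axiom:Cov-transitivity} is \cref{lemma:properties of coverage from locale}(c). As usual it suffices to address only the past versions, the future cases being dual.

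For \ref{axiom:Cov-ideal}, I would exploit the monotonicity of path inhabitance: if $A \sqleq C$ and a path $q$ inhabits $A$ via some step $q_n \sqleq A$, then $q_n \sqleq C$ as well, so $q$ inhabits $C$. Thus, given $A \in \CovLeq^-(U)$ with $A \sqleq C \sqleq B$, any local past refinement witnessing that $A$ covers $U$ from below automatically witnesses the same for $C$. The only additional ingredient is the containment $C \sqleq \Down U$, which follows from $C \sqleq B$ together with $B \in \CovLeq^-(U)$ (hence $B \sqleq \Down U$ by definition).

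For \ref{axiom:Cov-flip}, the natural candidate is $W := \Up A$. By \cref{lemma:properties of coverage from locale}(b) applied in the future direction, $W = \Up A \in \CovLeq^+(A)$. For the inclusion $U \sqleq W$, I would invoke \cref{lemma:properties of coverage from locale}(e), which yields $A \Leq U$ from $A \in \CovLeq^-(U)$. Since $(X,\Leq)$ is parallel ordered, \cref{lemma:wedge implies cones determine order} gives axiom~\eqref{axiom:cones give order}, and in particular $A \Leq U$ implies $U \sqleq \Up A = W$, as required.

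I do not expect any serious obstacle: the definition of abstract causal coverage was evidently tailored to the properties already established for $\CovLeq^\pm$, and the only subtlety is making sure to invoke parallel orderedness precisely where \ref{axiom:Cov-flip} needs the equivalence of $A \Leq U$ with $U \sqleq \Up A$. I would close with a brief remark that the `missing' property \cref{lemma:properties of coverage from locale}(f), which says $\Cov^\pm(\varnothing) = \{\varnothing\}$, is a consequence of \ref{axiom:Cov-ideal} and \ref{axiom:Cov-flip} together (since $\varnothing \in \Cov^-(\varnothing)$ by \ref{axiom:Cov-unit}, and any $A \in \Cov^-(\varnothing)$ is sandwiched between $\varnothing$ and itself in an appropriate sense), so its omission from \cref{definition:causal coverage abstract} is not a gap.
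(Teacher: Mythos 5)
Your verification is correct and follows essentially the same route as the paper: \ref{axiom:Cov-unit}--\ref{axiom:Cov-transitivity} are quoted from \cref{lemma:properties of coverage from locale} and \cref{lemma:coverage locale satisfies Cov-V strong}, \ref{axiom:Cov-ideal} uses $C\sqleq B\sqleq \Down U$ together with monotonicity of inhabitance, and \ref{axiom:Cov-flip} takes $W=\Up A$ with $A\Leq U$ from \cref{lemma:properties of coverage from locale}(e) (the paper obtains $U\sqleq \Up A$ directly from \cref{lemma:properties of localic cones}(a), so parallel orderedness is not actually needed at that step). Your closing aside claiming that the property $\Cov^\pm(\varnothing)=\{\varnothing\}$ is derivable from \ref{axiom:Cov-ideal} and \ref{axiom:Cov-flip} is not substantiated (the ``sandwiching'' argument does not force $A=\varnothing$) and is in any case irrelevant to the proposition, so it should be dropped; the main proof stands.
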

\begin{proof}
	Axioms~\ref{axiom:Cov-unit}--\ref{axiom:Cov-transitivity} follow directly from \cref{lemma:properties of coverage from locale,lemma:coverage locale satisfies Cov-V strong}. Axiom~\ref{axiom:Cov-ideal} holds since if $A,B\in\CovLeq^-(U)$ and $A\sqleq C\sqleq B$, then $C\sqleq B\sqleq \Down U$, and any path landing in $U$ can be refined to inhabit $A\sqleq C$. Thus $C\in \CovLeq^-(U)$. Lastly, for axiom~\ref{axiom:Cov-flip}, if $A\in\CovLeq^-(U)$ then by \cref{lemma:properties of coverage from locale}(e) we get $A\Leq U$, so we get ${U\sqleq \Up A\in\CovLeq^+(A)}$.
\end{proof}

In the following we show how we can recover the structure of an ordered locale from a causal site.
\begin{lemma}\label{lemma:causal coverage determines monads}
	If $\Cov^\pm\colon \Opens X\to\Powerset(\Opens X)$ satisfies axioms~\ref{axiom:Cov-unit}--\ref{axiom:Cov-transitivity} of a causal coverage, then the following functions define join-preserving monads on $\Opens X$:
	\[
	L^\pm\colon \Opens X\longrightarrow \Opens X;\qquad U\longmapsto \bigvee \Cov^\pm(U).
	\]
\end{lemma}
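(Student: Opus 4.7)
The plan is to verify in turn: (i) the unit $U \sqleq L^\pm(U)$, (ii) preservation of arbitrary joins, and (iii) idempotence $L^\pm(L^\pm(U)) = L^\pm(U)$. I treat $L^-$ throughout; $L^+$ is dual.

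The keystone, which I would establish first, is that $L^-(U) \in \Cov^-(U)$ for every $U$. Since $U \in \Cov^-(U)$ by~\ref{axiom:Cov-unit}, the set $\Cov^-(U)$ is non-empty, so I can apply axiom~\ref{axiom:Cov-joins} to the constant family $(U_A)_{A \in \Cov^-(U)}$ with $U_A := U$: its join is $U$, and taking the selection $B_A := A$ yields $\bigvee_A B_A = L^-(U) \in \Cov^-(U)$. Once this is in hand, the unit property $U \sqleq L^-(U)$ is immediate from~\ref{axiom:Cov-unit}, and idempotence reduces to $L^-(L^-(U)) \sqleq L^-(U)$: for any $B \in \Cov^-(L^-(U))$, transitivity~\ref{axiom:Cov-transitivity} against $L^-(U) \in \Cov^-(U)$ gives $B \in \Cov^-(U)$, hence $B \sqleq L^-(U)$, and taking the join over such $B$ concludes.

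For join-preservation, axiom~\ref{axiom:Cov-joins} expresses $L^-(\bigvee_i U_i)$ as the supremum of all $\bigvee_i A_i$ with $(A_i) \in \prod_i \Cov^-(U_i)$. The inequality $L^-(\bigvee_i U_i) \sqleq \bigvee_i L^-(U_i)$ is immediate from $A_i \sqleq L^-(U_i)$ for each $i$. For the reverse, given $j$ and $A \in \Cov^-(U_j)$, I would extend to a selection by setting $A_i := U_i$ for $i \neq j$ (permitted by~\ref{axiom:Cov-unit}) and $A_j := A$, so that $A \sqleq \bigvee_i A_i \in \Cov^-(\bigvee_i U_i)$, hence $A \sqleq L^-(\bigvee_i U_i)$; supping over $A$ and then over $j$ yields $\bigvee_j L^-(U_j) \sqleq L^-(\bigvee_i U_i)$. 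Monotonicity then follows automatically from join-preservation.

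I do not foresee any genuine obstacle. The only delicate point is the bookkeeping in the keystone step, where axiom~\ref{axiom:Cov-joins} is applied with the index set $\Cov^-(U)$ itself and the selection is the identity on this set; this relies crucially on the strong form of the join axiom, in which every element of $\Cov^-(\bigvee_i U_i)$ \emph{is} (not merely dominated by) a join of covers of the constituents. Axioms~\ref{axiom:Cov-ideal} and~\ref{axiom:Cov-flip} are not needed here.
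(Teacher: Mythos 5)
Your proof is correct and follows essentially the same route as the paper: the key fact $L^\pm(U)\in\Cov^\pm(U)$ extracted from the join axiom applied to a family indexed by $\Cov^\pm(U)$ itself, the unit from (C1), the multiplication from transitivity (C3), and the join axiom again for preservation of joins --- the only organisational difference being that the paper proves monotonicity directly (via a two-element family) and uses it for the reverse inequality, whereas you prove that inequality directly and obtain monotonicity as a corollary. One small quibble with your closing remark: the keystone step actually uses the containment ``every join of a selection of covers is a cover of the join'' (the direction already present in the weak form, \cref{lemma:coverage locale satisfies Cov-V weak}), not the converse decomposition property that distinguishes the strong form; since (C2) is stated as an equality this affects nothing, but the attribution is reversed.
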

\begin{proof}
	Axiom~\ref{axiom:Cov-unit} immediately gives the unit $U\sqleq L^\pm (U)$. Axiom~\eqref{axiom:Cov-V strong} gives that $L^\pm (U)\in \Cov^\pm (U)$. In turn, it follows that $L^\pm \circ L^\pm (U)\in \Cov^\pm(L^\pm (U))$, so by axiom~\ref{axiom:Cov-transitivity} we get that $L^\pm \circ L^\pm (U)\in \Cov^\pm (U)$. This provides the multiplication $L^\pm \circ L^\pm (U)\sqleq L^\pm (U)$ of the monad. Lastly, if $U\sqleq V$ and $A\in \Cov^\pm(U)$, using~\ref{axiom:Cov-unit} and~\eqref{axiom:Cov-V strong} we get $A\sqleq A\vee V\in \Cov^\pm(U\vee V) = \Cov^\pm(V)$, which proves that $L^\pm$ are monotone.
	
	Finally, to prove $L^\pm$ preserve all joins, take opens $(U_i)_{i\in I}$. Since $L^\pm$ are monotone, we only need to show $L^\pm\left(\bigvee_{i\in I} U_i\right)\sqleq \bigvee_{i\in I} L^\pm (U_i)$. For that, take $A\in \Cov^\pm (\bigvee_{i\in I} U_i)$. By~\eqref{axiom:Cov-V strong} we can find $A_i\in \Cov^\pm (U_i)$ for each $i\in I$ such that $A=\bigvee_{i\in I} A_i\sqleq \bigvee_{i\in I} L^\pm (U_i)$, so we are done.
\end{proof}

The open region $L^\pm (V)$ is also called the \emph{future/past region of influence} of $V$. Since they are join-preserving monads, we think of $L^\pm$ as a different incarnation of the localic cones of an ordered locale.

We now show that they also relate to $\Cov^\pm$ analogously to how $\Up$ and $\Down$ relate to $\CovLeq^\pm$. First, $U\in \Cov^\pm(U)$ is baked in as~\ref{axiom:Cov-unit}, and $L^\pm(U)\in\Cov^\pm(U)$ follows by~\eqref{axiom:Cov-V strong}.

The following is the abstract analogue of \cref{lemma:properties of coverage from locale}(d).

\begin{lemma}\label{lemma:Cov-pullbacks}
	Let $\Cov^\pm$ be a causal coverage on $X$. If $A\in \Cov^\pm(U)$ and $W\sqleq U$, then $A\wedge L^\pm (W)\in \Cov^\pm(W)$.
\end{lemma}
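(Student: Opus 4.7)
The plan is to exploit the decomposition property~\eqref{axiom:Cov-V strong} together with the sandwiching axiom~\ref{axiom:Cov-ideal}. The key observation is that since $W\sqleq U$, we can write $U=W\vee U$, so applying~\eqref{axiom:Cov-V strong} to the two-element family $(W,U)$ shows that any $A\in\Cov^-(U)$ admits a decomposition $A=A_1\vee A_2$ with $A_1\in\Cov^-(W)$ and $A_2\in\Cov^-(U)$. I will focus on the past version; the future version is dual.

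First I would record that $L^-(W)\in\Cov^-(W)$. This was already established in the course of proving~\cref{lemma:causal coverage determines monads}: it follows from~\eqref{axiom:Cov-V strong} applied to the family $\Cov^-(W)$ itself, whose join is $L^-(W)$ by definition. Next, given the decomposition $A=A_1\vee A_2$ produced above, I would observe that $A_1\sqleq A$ trivially, and that $A_1\sqleq L^-(W)$ since $L^-(W)=\bigvee\Cov^-(W)$ and $A_1\in\Cov^-(W)$. Combining these gives the chain
\[
A_1 \sqleq A\wedge L^-(W) \sqleq L^-(W),
\]
with both $A_1$ and $L^-(W)$ members of $\Cov^-(W)$.

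The conclusion $A\wedge L^-(W)\in\Cov^-(W)$ then follows immediately from axiom~\ref{axiom:Cov-ideal}, which lets us insert $A\wedge L^-(W)$ into the sandwich.

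I do not anticipate a real obstacle here; the only point requiring care is parsing~\eqref{axiom:Cov-V strong} as a genuine equality of sets, so that the existence of the decomposition $A=A_1\vee A_2$ (rather than merely its sufficiency) is guaranteed.
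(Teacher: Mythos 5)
Your proposal is correct and follows essentially the same route as the paper: decompose $U=W\vee U$, use~\eqref{axiom:Cov-V strong} to write $A=A_1\vee A_2$ with $A_1\in\Cov^-(W)$, note $A_1\sqleq A\wedge L^-(W)\sqleq L^-(W)\in\Cov^-(W)$, and conclude by axiom~\ref{axiom:Cov-ideal}. The only difference is that you spell out explicitly why $L^-(W)\in\Cov^-(W)$, which the paper treats as already established.
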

\begin{proof}
	Note that $U= W\vee U$, so $A\in \Cov^\pm(W\vee U)$, and hence~\eqref{axiom:Cov-V strong} gives regions $B\in\Cov^\pm(W)$ and $C\in \Cov^\pm(W)$ such that $A= B\vee C$. Now observe that $B\sqleq A\wedge L^\pm (W) \sqleq L^\pm (W)\in \Cov^\pm(W)$, so the result follows by axiom~\ref{axiom:Cov-ideal}.
\end{proof}

The following shows that if the causal coverage comes from an ordered locale, this is literally the case.

\begin{proposition}\label{proposition:ordered locale from canonical coverage}
	If $(X,\Leq)$ is a parallel ordered locale satisfying~\eqref{axiom:LV}, then $\Leq$ is recovered by the monads induced by the causal coverage $\CovLeq^\pm$. 
\end{proposition}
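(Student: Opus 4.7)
The plan is to show that $L^{-}(U) = \Down U$ and $L^{+}(U) = \Up U$ for every $U \in \Opens X$, since by \cref{lemma:wedge implies cones determine order} the causal order $\Leq$ on a parallel ordered locale is determined from its localic cones via~\eqref{axiom:cones give order}.

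First I would prove the inequality $L^{-}(U) \sqleq \Down U$. By definition of $\CovLeq^{-}$, every $A \in \CovLeq^{-}(U)$ satisfies $A \sqleq \Down U$, hence the join $L^{-}(U) = \bigvee \CovLeq^{-}(U)$ is bounded above by $\Down U$. The reverse inequality $\Down U \sqleq L^{-}(U)$ is immediate from \cref{lemma:properties of coverage from locale}(b), which states precisely that $\Down U \in \CovLeq^{-}(U)$. Combining the two gives $L^{-}(U) = \Down U$, and the dual argument (using the other halves of \cref{lemma:properties of coverage from locale}(b) and \cref{definition:causal coverage locale}) yields $L^{+}(U) = \Up U$.

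Finally, since $(X, \Leq)$ is parallel ordered, \cref{lemma:wedge implies cones determine order} gives~\eqref{axiom:cones give order}, so
\[
U \Leq V \iff U \sqleq \Down V \text{ and } V \sqleq \Up U \iff U \sqleq L^{-}(V) \text{ and } V \sqleq L^{+}(U),
\]
which exhibits the relation $\Leq$ purely in terms of the monads induced by $\CovLeq^{\pm}$ via \cref{lemma:causal coverage determines monads}.

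There is no real obstacle here: the proposition is essentially an assembly of earlier lemmas, with \cref{lemma:properties of coverage from locale}(b) doing the main work by ensuring the canonical cones themselves are past/future causal covers, and parallel orderedness entering only through the reconstruction clause~\eqref{axiom:cones give order}. The only subtlety worth flagging in the write-up is that axioms~\eqref{axiom:LV} and~\eqref{axiom:wedge} are each used exactly once — the former to guarantee $L^{\pm}$ is well-defined as a join-preserving monad (\cref{lemma:causal coverage determines monads}), and the latter to recover the order from its cones.
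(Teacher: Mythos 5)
Your proof is correct and follows essentially the same route as the paper: compute $L^{-}(U)=\bigvee\CovLeq^{-}(U)=\Down U$ (and dually $L^{+}(U)=\Up U$) using the definitional bound $A\sqleq\Down U$ together with \cref{lemma:properties of coverage from locale}(b), then recover $\Leq$ via~\eqref{axiom:cones give order}, which holds by \cref{lemma:wedge implies cones determine order}. The paper's proof is just a terser version of exactly this argument.
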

\begin{proof}
	By \cref{lemma:properties of coverage from locale}(b) it follows that $L^-(U)= \bigvee\CovLeq^-(U) = \Down U$, and similarly $L^+ (U) = \Up U$. Hence the preorder induced by $L^\pm$ on $\Opens X$ is equal to $\Leq$ by~\eqref{axiom:cones give order}.
\end{proof}

We currently do not know if the converse to \cref{proposition:ordered locale from canonical coverage} also holds: if we start with an abstract causal coverage $\Cov^\pm$ on $X$ and induce the ordered locale $(X,\Leq)$ from the localic cones $L^\pm$, does $\CovLeq^\pm$ have to equal $\Cov^\pm$? We conjecture that, with suitable definitions, there is a correspondence between categories of ordered locales and causal sites, analogous to the adjunction between frames and sites in~\cite{ball2014ExtendingSemilatticesFrames}. This is left to future work.

	%
	%

\section{Domains of dependence}
\label{section:domains of dependence}
In this section we study an abstract version of \emph{domains of dependence} \cite{geroch1970DomainofDependence} in the setting of causal coverages. The idea is that, given a region $A\in\Opens X$, the future domain of dependence $D^+(A)$ is the largest region that is covered from below by~$A$. Thinking of information flows, it is the largest region where all information is determined by $A$.

\begin{definition}\label{definition:domain of dependence localic}
	Let $\Cov^\pm$ be a causal coverage on a locale $X$. The \emph{future/past domain of dependence} of $A\in\Opens X$ is defined as
	\[
	D^\pm (A) :=\bigvee\left\{V\in\Opens X: A\in\Cov^\mp(V)\right\}.
	\]
	By~\eqref{axiom:Cov-V strong}, the future/past domain of dependence $D^\pm(A)$ is uniquely characterised as the largest open region that is covered from below/above by $A$. See \cref{figure:typical D+} for intuition. In particular we have
	\[
	A\in\Cov^\mp \left(D^\pm(A) \right).
	\]
\end{definition}
\begin{figure}[b]\centering
	\begin{subfigure}[b]{0.4\textwidth}\centering
		\definecolor{cffebdc}{RGB}{255,235,220}
\definecolor{ce0ffdc}{RGB}{224,255,220}
\definecolor{c0d6400}{RGB}{13,100,0}
\definecolor{c640000}{RGB}{100,0,0}
\definecolor{cefffff}{RGB}{239,255,255}
\definecolor{cffffdc}{RGB}{255,255,220}
\definecolor{c001764}{RGB}{0,23,100}
\definecolor{c644700}{RGB}{100,71,0}

\def \globalscale {1.000000}
\begin{tikzpicture}[y=.75pt, x=.75pt, yscale=\globalscale,xscale=\globalscale, every node/.append style={scale=\globalscale}, inner sep=0pt, outer sep=0pt]
	\path[draw=black,fill=cffebdc,line cap=butt,line join=miter,line 
	width=1.001pt,miter limit=4.0] (29.4, 33.178) -- (67.873, 99.23) -- (107.055, 
	32.272) -- (83.925, 28.666) -- cycle;

	\path[draw=black,fill=ce0ffdc,line cap=butt,line join=miter,line 
	width=1.0pt,miter limit=4.0] (57.38, 36.868).. controls (48.885, 36.483) and 
	(38.767, 41.84) .. (31.869, 36.868).. controls (29.445, 35.121) and (28.07, 
	31.192) .. (29.034, 28.364).. controls (30.581, 23.826) and (36.297, 21.406) 
	.. (40.972, 20.34).. controls (46.932, 18.981) and (53.012, 23.144) .. 
	(59.123, 22.96).. controls (64.746, 22.791) and (70.027, 19.442) .. (75.652, 
	19.549).. controls (81.255, 19.656) and (86.565, 22.117) .. (91.977, 23.572)..
	controls (97.485, 25.054) and (107.734, 22.7) .. (108.404, 28.364).. controls
	(109.298, 35.929) and (96.005, 38.083) .. (88.562, 39.703).. controls 
	(78.364, 41.921) and (67.806, 37.341) .. (57.38, 36.868) -- cycle;

	\node[text=c0d6400,line cap=butt,line join=miter,line width=1.0pt,miter 
	limit=4.0,anchor=south west] (text1) at (37.538, 25.529){$A$};

	\node[text=c640000,line cap=butt,line join=miter,line width=1.0pt,miter 
	limit=4.0,anchor=south west] (text2) at (49, 49){$D^+(A)$};

		%
		%
		%
		%
		%
		%
		%
		%
		%
		%
		%
		%
		%
		%
		%
		%
		%
		%
		%
		%
		%
		%
		%
		%
		%
		%
		%
	
\end{tikzpicture}
		\caption{Minkowski-type space.}
	\end{subfigure}
	\begin{subfigure}[b]{0.4\textwidth}\centering
		\definecolor{cffebdc}{RGB}{255,235,220}
\definecolor{ce0ffdc}{RGB}{224,255,220}
\definecolor{c0d6400}{RGB}{13,100,0}
\definecolor{c640000}{RGB}{100,0,0}
\definecolor{cefffff}{RGB}{239,255,255}
\definecolor{cffffdc}{RGB}{255,255,220}
\definecolor{c001764}{RGB}{0,23,100}
\definecolor{c644700}{RGB}{100,71,0}

\def \globalscale {1.000000}
\begin{tikzpicture}[y=.75pt, x=.75pt, yscale=\globalscale,xscale=\globalscale, every node/.append style={scale=\globalscale}, inner sep=0pt, outer sep=0pt]
	\path[draw=black,fill=cffebdc,line cap=butt,line join=miter,line 
	width=1.0pt,miter limit=4.0] (28.037, 90.726) -- (28.11, 25.529) -- (101.56, 
	31.199) -- (101.426, 90.841) -- (101.426, 90.841);

	\path[draw=black,fill=ce0ffdc,line cap=butt,line join=miter,line 
	width=1.0pt,miter limit=4.0] (77.223, 36.868).. controls (67.774, 36.868) and 
	(58.132, 38.768) .. (48.876, 36.868).. controls (41.827, 35.421) and (32.995, 
	34.371) .. (29.034, 28.364).. controls (27.716, 26.366) and (27.292, 22.825) 
	.. (29.034, 21.184).. controls (36.345, 14.292) and (49.079, 22.707) .. 
	(59.123, 22.96).. controls (69.259, 23.215) and (79.863, 19.665) .. (89.539, 
	22.695).. controls (94.223, 24.162) and (101.456, 26.291) .. (101.56, 
	31.199).. controls (101.626, 34.287) and (97.154, 35.869) .. (94.231, 
	36.868).. controls (88.866, 38.701) and (82.892, 36.868) .. (77.223, 36.868) 
	-- cycle;

	\node[text=c0d6400,line cap=butt,line join=miter,line width=1.0pt,miter 
	limit=4.0,anchor=south west] (text1) at (35.85, 23.842){$A$};

	\node[text=c640000,line cap=butt,line join=miter,line width=1.0pt,miter 
	limit=4.0,anchor=south west] (text2) at (49, 56){$D^+(A)$};

	\node[text=c640000,line cap=butt,line join=miter,line width=1.0pt,miter 
	limit=4.0,anchor=south west] (text3) at (62.906, 92.709){$\vdots$};

	\path[draw=black,line cap=butt,line join=miter,line width=1.0pt,miter 
	limit=4.0,dash pattern=on 2.0pt off 1.0pt] (28.037, 90.726) -- (28.031, 99.23);

	\path[draw=black,line cap=butt,line join=miter,line width=1.0pt,miter 
	limit=4.0,dash pattern=on 2.0pt off 1.0pt] (101.426, 90.841) -- (101.326, 
	99.23);

		%
		%
		%
		%
		%
		%
		%
		%
		%
		%
		%
		%
		%
		%
		%
		%
		%
		%
		%
		%
		%
		%
		%
		%
		%
		%
		%
	
\end{tikzpicture}
		\caption{Vertical-$\mathbb{R}^2$.}
	\end{subfigure}
	\caption{Illustration of typical future domains of dependence.}
	\label{figure:typical D+}
\end{figure}
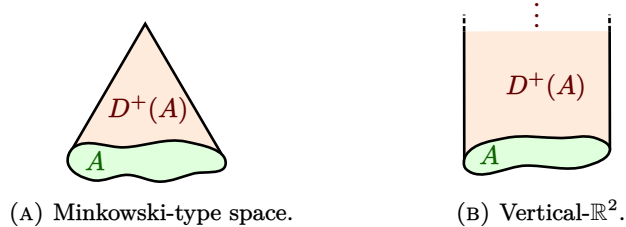
\begin{lemma}
	If $\Cov^\pm$ is a causal coverage on $X$, the domains of dependence define monads $D^\pm\colon\Opens X\to \Opens X$.
\end{lemma}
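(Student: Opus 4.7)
The plan is to verify the three closure-operator properties of Lemma~\ref{lemma:properties of cones}(a)–(c), applied to $D^\pm$: monotonicity, a unit $A \sqleq D^\pm(A)$, and a multiplication $D^\pm D^\pm(A) \sqleq D^\pm(A)$. I will do the future case, with the past case being dual. As always, the key observation is that $\bigvee \{V : A \in \Cov^-(V)\}$ is itself in the set it joins, because axiom~\eqref{axiom:Cov-V strong} gives $A \in \Cov^-(D^+(A))$ directly.

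For the \textbf{unit}, axiom~\ref{axiom:Cov-unit} gives $A \in \Cov^-(A)$, so $A$ appears in the defining family and thus $A \sqleq D^+(A)$.

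For the \textbf{multiplication}, suppose $V$ satisfies $D^+(A) \in \Cov^-(V)$. Since $A \in \Cov^-(D^+(A))$ (by the observation above), transitivity~\ref{axiom:Cov-transitivity} yields $A \in \Cov^-(V)$, hence $V \sqleq D^+(A)$. Taking the join over all such $V$ gives $D^+ D^+(A) \sqleq D^+(A)$.

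For \textbf{monotonicity}, the subtle point is that $A \sqleq B$ together with $A \in \Cov^-(V)$ does not a priori give $B \in \Cov^-(V)$, since the inclusion $B \sqleq \Down V$ need not hold. The trick is to enlarge $V$ instead: from $A \in \Cov^-(V)$ and $B \in \Cov^-(B)$ (by~\ref{axiom:Cov-unit}), axiom~\eqref{axiom:Cov-V strong} yields $A \vee B \in \Cov^-(V \vee B)$, which under $A \sqleq B$ reduces to $B \in \Cov^-(V \vee B)$. Thus $V \sqleq V \vee B \sqleq D^+(B)$, and taking the join over all such $V$ produces $D^+(A) \sqleq D^+(B)$.

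The only step requiring real thought is the monotonicity argument; the other two are essentially unpacking the definition and applying~\ref{axiom:Cov-transitivity}. The dual $D^-$ is handled by swapping $\Cov^-$ for $\Cov^+$ throughout.
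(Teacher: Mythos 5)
Your unit and multiplication arguments coincide with the paper's: both follow from~\ref{axiom:Cov-unit}, the observation $A\in\Cov^\mp(D^\pm(A))$ (itself a consequence of~\eqref{axiom:Cov-V strong}), and transitivity~\ref{axiom:Cov-transitivity}. Where you genuinely diverge is monotonicity, and your route is correct and arguably cleaner. The paper proves monotonicity by first establishing the auxiliary identity $D^\pm(A)=\bigvee\{V:A\wedge L^\mp(V)\in\Cov^\mp(V)\}$ and then invoking the sandwich axiom~\ref{axiom:Cov-ideal} on the chain $A\sqleq B\wedge L^\mp(V)\sqleq L^\mp(V)$; this costs an extra lemma-within-the-proof but yields a reformulation of $D^\pm$ in terms of the regions of influence $L^\mp$ that echoes \cref{lemma:causal coverage determined by L and D}. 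You instead enlarge the \emph{covered} region: from $A\in\Cov^\mp(V)$ and $B\in\Cov^\mp(B)$, axiom~\eqref{axiom:Cov-V strong} gives $A\vee B=B\in\Cov^\mp(V\vee B)$, hence $V\sqleq V\vee B\sqleq D^\pm(B)$, and the join over $V$ finishes it. This uses only~\ref{axiom:Cov-unit} and~\eqref{axiom:Cov-V strong}, avoiding~\ref{axiom:Cov-ideal} and the operator $\tilde{D}^\pm$ altogether, and is in fact the same join-with-a-trivial-cover trick the paper itself uses to prove monotonicity of $L^\pm$ in \cref{lemma:causal coverage determines monads}. What you lose relative to the paper is only the auxiliary characterisation of $D^\pm$, which is not needed for the statement; what you gain is a shorter proof that exhibits the monad structure of $D^\pm$ as depending on fewer of the coverage axioms.
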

\begin{proof}
	First, axiom~\ref{axiom:Cov-unit} of a causal coverage gives $A\in \Cov^\mp(A)$, from which it follows immediately that $A\sqleq D^\pm (A)$. Next, if $D^\pm (A)\in\Cov^\mp(V)$, using ${A\in\Cov^\mp(D^\pm A)}$, we get by~\ref{axiom:Cov-transitivity} that $A\in \Cov^\mp(V)$, providing the multiplication ${D^\pm \circ D^\pm (A)\sqleq D^\pm (A)}$. 
	
	We are left to show monotonicity, for which we need the following equation:
	\[
	D^\pm (A) = \bigvee\left\{V\in\Opens X: A\wedge L^\mp (V)\in\Cov^\mp(V)\right\}.
	\]
	To prove this, denote the expression on the right hand side by $\tilde{D}^\pm (A)$. It is clear that we get an inclusion $D^\pm (A)\sqleq \tilde{D}^\pm (A)$. For the converse, it suffices to show that $A\in\Cov^\mp (\tilde{D}^\pm (A))$. To obtain this, note that~\eqref{axiom:Cov-V strong} gives
	\[
	A\wedge \bigvee\left\{V\in\Opens X: A\wedge L^\mp (V)\in\Cov^\mp(V)\right\} \in \Cov^\mp(\tilde{D}^\pm (A)).
	\]
	Further note that this region fits into a chain of inclusions $A\sqleq D^\pm (A)\sqleq \tilde{D}^\pm (A)\sqleq L^\mp \circ \tilde{D}^\pm (A)\in \Cov^\mp(\tilde{D}^\pm (A))$. Therefore, by axiom~\ref{axiom:Cov-ideal} of a causal coverage, we get $A\in\Cov^\mp (\tilde{D}^\pm (A))$, proving the equation.
	
	Now take opens $A\sqleq B$, and pick $V\in\Opens X$ such that $A\in\Cov^\mp (V)$. We get a chain of inclusions $A\sqleq B\wedge L^\mp (V)\sqleq L^\mp (V)$, so through axiom~\ref{axiom:Cov-ideal} it follows $B\wedge L^\mp (V)\in \Cov^\mp(V)$. This shows $V\sqleq \tilde{D}^\pm (B)$, and using the equation above it follows that $D^\pm (A)\sqleq D^\pm (B)$.
\end{proof}

Note that $D^\pm$ are generally not join-preserving. Typically $D^\pm(A\vee B)$ is much larger than $D^\pm(A)\vee D^\pm(B)$, cf.~the examples in \cref{figure:differences domains of dependence}.

The following shows that a causal coverage $\Cov^\pm$ is fully determined by its regions of influence and domain of dependence operators.

\begin{lemma}\label{lemma:causal coverage determined by L and D}
	Let $\Cov^\pm$ be a causal coverage on a locale $X$. Then:
	\[
	A\in \Cov^\pm(U)
	\qquad\text{if and only if}\qquad
	A\sqleq L^\pm (U)\text{ and }U\sqleq D^\mp (A).
	\]
\end{lemma}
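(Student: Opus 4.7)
The plan is to prove the two directions separately, with the forward direction being immediate from the definitions and the backward direction relying on \cref{lemma:Cov-pullbacks}.

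For the forward direction, suppose $A \in \Cov^\pm(U)$. Then $A$ appears as one of the summands in the join $L^\pm(U) = \bigvee \Cov^\pm(U)$ (\cref{lemma:causal coverage determines monads}), giving immediately $A \sqleq L^\pm(U)$. Similarly, by the very definition $D^\mp(A) = \bigvee\{V \in \Opens X : A \in \Cov^\pm(V)\}$, the open $U$ belongs to the indexing set, so $U \sqleq D^\mp(A)$.

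For the converse, assume $A \sqleq L^\pm(U)$ and $U \sqleq D^\mp(A)$. The key observation is that $A$ itself causally covers its own domain of dependence; precisely, applying axiom~\ref{axiom:Cov-joins} (i.e.~\eqref{axiom:Cov-V strong}) to the family indexing the join $D^\mp(A) = \bigvee\{V : A \in \Cov^\pm(V)\}$ (taking $A_V := A$ for each such $V$) yields
\[
A = \bigvee\{A : A \in \Cov^\pm(V)\} \in \Cov^\pm\!\left(D^\mp(A)\right).
\]
Now invoke \cref{lemma:Cov-pullbacks} with this covering and the subregion $U \sqleq D^\mp(A)$: we conclude $A \wedge L^\pm(U) \in \Cov^\pm(U)$. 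The remaining hypothesis $A \sqleq L^\pm(U)$ gives $A \wedge L^\pm(U) = A$, so $A \in \Cov^\pm(U)$, as desired.

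There is no serious obstacle here; the only subtlety is keeping the $\pm$/$\mp$ dualities straight, since $D^\pm$ is defined in terms of $\Cov^\mp$, while the pullback statement involves $L^\pm$ on the same side as $\Cov^\pm$. The nontrivial content of the lemma is really just that $A$ always covers its own $D^\mp(A)$, which is an immediate corollary of \eqref{axiom:Cov-V strong}, together with the pullback property already established.
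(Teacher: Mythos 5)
Your proposal is correct and follows essentially the same route as the paper: the forward direction by construction, and the converse via the fact $A\in\Cov^\pm(D^\mp(A))$ combined with \cref{lemma:Cov-pullbacks} applied to $U\sqleq D^\mp(A)$. The only (harmless) difference is the final step: you observe directly that $A\wedge L^\pm(U)=A$ since $A\sqleq L^\pm(U)$, whereas the paper keeps the pullback as some $B\sqleq A$ and then sandwiches $B\sqleq A\sqleq L^\pm(U)$ using $L^\pm(U)\in\Cov^\pm(U)$ and axiom~\ref{axiom:Cov-ideal}; your shortcut is slightly cleaner and avoids that axiom.
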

\begin{proof}
	The `only if' direction follows by construction. Conversely, suppose that $A\sqleq L^\pm (U)$ and $U\sqleq D^\mp (A)$. Since $A\in \Cov^\pm(D^\mp (A))$ we get by \cref{lemma:Cov-pullbacks} some $B\in \Cov^\mp(U)$ such that $B\sqleq A$. From~\eqref{axiom:Cov-V strong} we know $L^\pm (U)\in \Cov^\pm (U)$, so now~\ref{axiom:Cov-ideal} gives $A\in\Cov^\pm(U)$, as desired. 
\end{proof}


\begin{remark}
	Recalling axiom~\eqref{axiom:cones give order} from \cref{section:parallel ordered locales}, this actually shows that $\Cov^\pm$ is the causal order of the ordered locale induced by the pair of monads $(L^\pm,D^\mp)$.
\end{remark}

\begin{lemma}
	We have $L^\pm\circ D^\pm = L^\pm$, and hence $D^\pm \sqleq L^\pm$. 
\end{lemma}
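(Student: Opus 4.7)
My plan is to prove $D^\pm \sqleq L^\pm$ first, and then deduce the equation $L^\pm \circ D^\pm = L^\pm$ as a fairly immediate consequence by combining this inequality with the monad structure of $L^\pm$ established in \cref{lemma:causal coverage determines monads}.

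For the inequality $D^\pm(A) \sqleq L^\pm(A)$, the key observation is that by definition we have $A \in \Cov^\mp(D^\pm(A))$, since by \eqref{axiom:Cov-V strong} the join of a family of covering regions covers the join of their respective covered regions (and $D^\pm(A)$ is precisely such a join). Now axiom~\ref{axiom:Cov-flip} applied to $A \in \Cov^\mp(D^\pm(A))$ produces some $W \in \Cov^\pm(A)$ with $D^\pm(A) \sqleq W$. Since $W \sqleq L^\pm(A)$ by definition of $L^\pm$, we obtain $D^\pm(A) \sqleq L^\pm(A)$.

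Given this, the forward inclusion $L^\pm(D^\pm(A)) \sqleq L^\pm(A)$ follows from monotonicity of $L^\pm$ together with the monad multiplication $L^\pm \circ L^\pm \sqleq L^\pm$: applying $L^\pm$ to $D^\pm(A) \sqleq L^\pm(A)$ yields $L^\pm(D^\pm(A)) \sqleq L^\pm(L^\pm(A)) \sqleq L^\pm(A)$. The reverse inclusion $L^\pm(A) \sqleq L^\pm(D^\pm(A))$ is even more immediate: by axiom~\ref{axiom:Cov-unit} we have $A \in \Cov^\mp(A)$, so $A$ appears in the family defining $D^\pm(A)$, giving $A \sqleq D^\pm(A)$, and monotonicity of $L^\pm$ finishes the job.

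I do not expect any real obstacles here; the argument is essentially a clean bookkeeping exercise. The only mildly subtle point is to recognise that axiom~\ref{axiom:Cov-flip} is the precise tool that bridges $D^\pm$ and $L^\pm$ --- it guarantees that whenever $A$ covers some region $V$ from one side, $V$ lies within a region of influence of $A$ from the opposite side, which is exactly what is needed to push $D^\pm(A)$ inside $L^\pm(A)$.
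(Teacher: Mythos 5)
Your proof is correct; every step checks out against the axioms of a causal coverage, and it uses the same essential tool as the paper, namely axiom~\ref{axiom:Cov-flip}, together with the monad structure of $L^\pm$ from \cref{lemma:causal coverage determines monads}. The organisation is reversed, though: the paper first establishes the equation $L^\pm\circ D^\pm = L^\pm$ by writing $L^\pm\circ D^\pm(A) = \bigvee\{L^\pm(V) : A\in\Cov^\mp(V)\}$ (join-preservation of $L^\pm$) and applying \ref{axiom:Cov-flip} to each $V$ in the defining family separately, and only then deduces $D^\pm\sqleq L^\pm$ from the unit of $L^\pm$; you instead prove the inequality directly by applying \ref{axiom:Cov-flip} once to the single ``master'' cover $A\in\Cov^\mp(D^\pm(A))$, and then obtain the equation from monotonicity, multiplication, and \ref{axiom:Cov-unit}. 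The amalgamation work that the paper does via join-preservation of $L^\pm$ is, in your version, already packaged into the fact $A\in\Cov^\mp(D^\pm(A))$, which rests on~\eqref{axiom:Cov-V strong}; so the underlying content is the same, but your route is slightly more economical in that it invokes \ref{axiom:Cov-flip} only once and avoids the explicit join decomposition.
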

\begin{proof}
	Since $D^\pm$ are monads, the inclusions $L^\pm (A)\sqleq L^\pm\circ D^\pm(A)$ are clear. For the converse, recall that by \cref{lemma:causal coverage determines monads} the maps $L^\pm$ preserve all joins. Hence $L^\pm \circ D^\pm (A) = \bigvee\{L^\pm (V) : A\in \Cov^\mp (V)\}$. Now, if $A\in \Cov^\mp(V)$ we get by axiom~\ref{axiom:Cov-flip} an open $W\in \Cov^\pm(A)$ such that $V\sqleq W$. Thus $L^\pm (V)\sqleq L^\pm (W)\sqleq L^\pm \circ L^\pm (A) = L^\pm (A)$, and we may conclude $L^\pm \circ D^\pm (A) = L^\pm (A)$. Using the units of $L^\pm$ it immediately follows that $D^\pm (A)\sqleq L^\pm \circ D^\pm (A) = L^\pm (A)$. 
\end{proof}


\section{Domains of dependence in spacetime}\label{section:domains of dependence in spactime}
As mentioned, in spacetime there is already a well-established notion of domain of dependence, making use of inextendible curves. In this section we compare these to the localic versions, and see that they are generally distinct. Here we adopt~\cite[Definition~3.1]{minguzzi2019LorentzianCausalityTheory}.

\begin{definition}\label{definition:domain of dependence in spacetime}
	For a subset $A\subseteq M$ of a smooth spacetime, the \emph{future causal domain of dependence} of $A$ is the set
	\[
	\Dcaus^+(A) := \left\{x\in M:~\parbox{.4\textwidth}{every past inextendible \emph{causal} curve through $x$ intersects $A$}~\right\}.
	\]
	Analogously, the \emph{future chronological domain of dependence} of $A$ is the set
	\[
	\Dchron^+(A) := \left\{x\in M:~\parbox{.4\textwidth}{every past inextendible \emph{timelike} curve through $x$ intersects $A$}~\right\}.
	\]
\end{definition}

\begin{remark}
	Note that domains of dependence are usually mainly considered on \emph{achronal} subsets: those $S\subseteq M$ for which $I^-(S)\cap I^+(S)=\varnothing$ \cite[p105]{landsman2021FoundationsGeneralRelativity}. These are spacelike hypersurfaces on which initial data of partial differential equations are typically defined. In an ordered locale the condition $\Down U\wedge \Up U = \varnothing$ implies $U=\varnothing$, so the notion becomes trivial. We think a more developed notion of causal order $\Leq$ on the lattice of \emph{sublocales}, instead of just on $\Opens X$, will be fruitful: it will allow us to consider non-trivial achronal `sets' in an ordered locale. The development of a more general notion of ordered locale is ongoing; see also~\cite[Appendix~D]{schaaf2024TowardsPointFreeSpacetimes}.
\end{remark}

\begin{definition}
	For a smooth spacetime $M$ we define the following notions of domains of dependence for $A\subseteq M$:
	\begin{align*}
		\DCaus^+(A) &:= \bigcup\left\{U\in\Opens M: A\in\Covcaus^-(U) \right\},\\
		\DChron^+(A) &:= \bigcup\left\{U\in\Opens M: A\in\Covchron^-(U) \right\},\\
		\DLeq^+(A) &:= \bigcup\left\{U\in\Opens M: A\in\CovLeq^-(U) \right\}.
	\end{align*}
\end{definition}

Note that $\DLeq^+$ is just the localic domain of dependence from \cref{definition:domain of dependence localic} determined by the canonical causal coverage $\CovLeq^-$, and is hence a monad. We will now show how these five definitions are related but distinct.

\begin{enumerate}
	\item \emph{Bounded curves vs.~localic paths.}
	By \cref{proposition: localic coverage contains chronological coverage} we get for every ${U\in\Opens M}$ that $\Covcaus^-(U)\subseteq \Covchron^-(U)\subseteq \CovLeq^-(U)$,
	from which it immediately follows that
	\[
	\DCaus^+\subseteq \DChron^+\subseteq\DLeq^+.
	\]
	Due to the counterexample in \cref{remark:difference causal coverage localic or curve} that $\Covchron^-(U)\not\subseteq \CovLeq^-(U)$, see \cref{figure:causal coverage in spacetime}(b), the second inclusion will be strict. An example of what this can look like is in \cref{figure:differences domains of dependence}(b).
	
	\item \emph{Inextendible vs.~bounded curves.} Similarly, using \cref{lemma:past inextendible intersects then bounded intersects} and its chronological analogue we obtain inclusions
	\[
	\Dcaus^+\subseteq \DCaus^+
	\qquad\text{and}\qquad
	\Dchron^+\subseteq \DChron^+.
	\]
	But, again, these inclusions are strict by virtue of the counterexample in \cref{remark:difference causal coverage bounded or inextendible} and \cref{figure:non-intersecting inextendible curve}. This is showcased in \cref{figure:differences domains of dependence}(a).
	
	\item \emph{Chronological vs.~causal.} First, since all timelike curves are causal curves, we get an inclusion $\Dcaus^+\subseteq \Dchron^+$ of the traditional domains of dependence. It is well-known that this inclusion is strict in general, cf.~\cite[Remark~3.11]{minguzzi2019LorentzianCausalityTheory}.
	
	The counterexample in \cref{remark:causal cover contained in chronological cover} that ${\Covchron^-(U)\not\subseteq \Covcaus^-(U)}$ (see \cref{figure:non-intersecting lightlike curve}) also implies that the inclusion $\DCaus^+\subseteq \DChron^+$ is strict. To see how, one can draw a situation analogous to \cref{figure:differences domains of dependence}(b) where the curve removed is lightlike. In that case $\DCaus^+(A)$ equals the two smaller triangles as drawn, but $\DChron^+(A)$ will equal $\DLeq^+(A)$.
\end{enumerate}
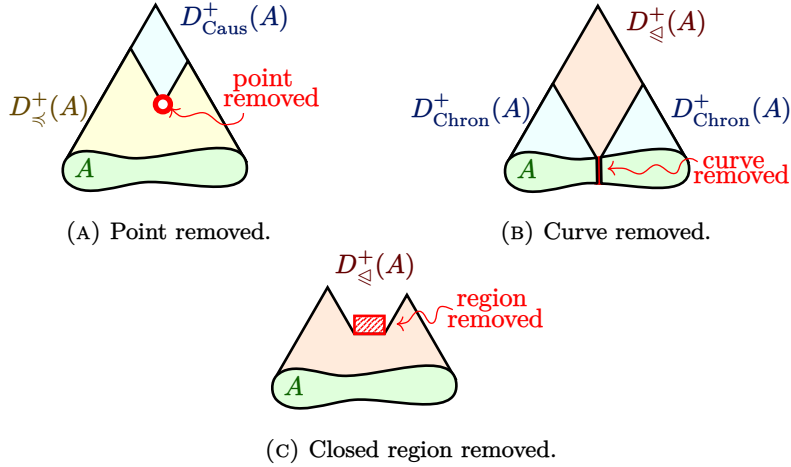
\begin{figure}[t]\centering
	\begin{subfigure}[b]{0.4\textwidth}\centering
		\definecolor{cefffff}{RGB}{239,255,255}
\definecolor{cffffdc}{RGB}{255,255,220}
\definecolor{cfe0000}{RGB}{254,0,0}
\definecolor{ce0ffdc}{RGB}{224,255,220}
\definecolor{c0d6400}{RGB}{13,100,0}
\definecolor{cfd0000}{RGB}{253,0,0}
\definecolor{c644700}{RGB}{100,71,0}
\definecolor{c001764}{RGB}{0,23,100}
\definecolor{cffebdc}{RGB}{255,235,220}
\definecolor{c640000}{RGB}{100,0,0}

\def \globalscale {1.000000}
\begin{tikzpicture}[y=.8pt, x=.8pt, yscale=\globalscale,xscale=\globalscale, every node/.append style={scale=\globalscale}, inner sep=0pt, outer sep=0pt]
	\path[draw=black,fill=cefffff,line cap=butt,line join=miter,line 
	width=1.0pt,miter limit=4.0] (20.53, 34.033) -- (58.259, 27.451) -- (105.106, 
	32.653) -- (62.42, 106.588) -- cycle;

	\path[draw=black,fill=cffffdc,line cap=butt,line join=miter,line 
	width=1.0pt,miter limit=4.0] (20.53, 34.033) -- (60.562, 27.451) -- (105.106, 
	32.653) -- (77.732, 80.066) -- (65.884, 59.545) -- (50.572, 86.067) -- (20.53,
	34.033);

	\node[text=cfe0000,line cap=butt,line join=miter,line width=1.0pt,miter 
	limit=4.0,anchor=south west] (text16) at (98.934, 65.934){$\text{point}$};

	\node[text=cfe0000,line cap=butt,line join=miter,line width=1.0pt,miter 
	limit=4.0,anchor=south west] (text17) at (94.508, 58.807){$\text{removed}$};

	\path[draw=black,fill=ce0ffdc,line cap=butt,line join=miter,line 
	width=1.0pt,miter limit=4.0] (20.53, 34.033).. controls (17.857, 31.361) and 
	(17.857, 25.367) .. (20.53, 22.695).. controls (29.216, 14.009) and (45.13, 
	21.799) .. (57.38, 22.695).. controls (73.573, 23.879) and (106.115, 13.555) 
	.. (105.569, 29.781).. controls (104.964, 47.78) and (69.706, 33.324) .. 
	(51.711, 34.033).. controls (41.325, 34.443) and (27.879, 41.383) .. (20.53, 
	34.033) -- cycle;

	\node[text=c0d6400,line cap=butt,line join=miter,line width=1.0pt,miter 
	limit=4.0,anchor=south west] (text1) at (24.069, 25.132){$A$};

	\path (20.53, 34.033) -- (62.42, 106.588);

	\path (105.106, 32.653) -- (62.42, 106.588);

	\path (65.884, 59.545) -- (50.572, 86.067);

	\path (65.884, 59.545) -- (77.732, 80.066);

	\path[draw=cfd0000,fill = white, line cap=butt,line join=miter,line width=2.0pt,miter 
	limit=4.0] (65.884, 59.545) ellipse (2.835pt and 2.835pt);

	\path[->,draw=red,line cap=butt,line join=miter,line width=0.5pt,miter limit=4.0]
	(103.926, 55.425).. controls (103.926, 55.425) and (104.913, 49.63) .. 
	(94.674, 49.943).. controls (91.534, 50.04) and (88.949, 54.786) .. (85.423, 
	55.492).. controls (81.778, 56.221) and (77.967, 52.871) .. (74.388, 53.876)..
	controls (72.401, 54.434) and (69.623, 57.832) .. (69.623, 57.832);

	\node[text=c644700,line cap=butt,line join=miter,line width=1.0pt,miter 
	limit=4.0,anchor=south west] (text11) at (-7, 48.457){$\Dcaus^+(A)$};

	\node[text=c001764,line cap=butt,line join=miter,line width=1.0pt,miter 
	limit=4.0,anchor=south west] (text12) at (73.914, 92.709){$\DCaus^+(A)$};

		%
		%
		%
		%
		%
		%
		%
		%
		%
		%
		%
		%
		%
		%
		%
		%
		%
		%
		%
		%
		%
		%
		%
		%
		%
		%
		%
	
\end{tikzpicture}
		\caption{Point removed.}
	\end{subfigure}
	\begin{subfigure}[b]{0.5\textwidth}\centering
		\definecolor{cffebdc}{RGB}{255,235,220}
\definecolor{cfe0000}{RGB}{254,0,0}
\definecolor{cefffff}{RGB}{239,255,255}
\definecolor{ce0ffdc}{RGB}{224,255,220}
\definecolor{c001764}{RGB}{0,23,100}
\definecolor{c640000}{RGB}{100,0,0}
\definecolor{c0d6400}{RGB}{13,100,0}
\definecolor{cffffdc}{RGB}{255,255,220}
\definecolor{c644700}{RGB}{100,71,0}

\def \globalscale {1.000000}
\begin{tikzpicture}[y=.8pt, x=.8pt, yscale=\globalscale,xscale=\globalscale, every node/.append style={scale=\globalscale}, inner sep=0pt, outer sep=0pt]
	\path[draw=black,fill=cffebdc,line cap=butt,line join=miter,line 
	width=1.0pt,miter limit=4.0] (20.53, 34.033) -- (62.331, 22.623) -- (63.78, 
	22.524) -- (105.106, 32.653) -- (62.42, 106.588) -- cycle;

	\path[draw=red,fill=cfe0000,line width=1.0pt] (61.527, 34.613) -- (64.77, 
	35.093) -- (64.922, 22.486) -- (61.279, 22.683) -- cycle;

	\node[text=cfe0000,line cap=butt,line join=miter,line width=1.0pt,miter 
	limit=4.0,anchor=south west] (text16) at (112.012, 29.968){$\text{curve}$};

	\node[text=cfe0000,line cap=butt,line join=miter,line width=1.0pt,miter 
	limit=4.0,anchor=south west] (text17) at (107.586, 22.84){$\text{removed}$};

	\path[draw=black,fill=cefffff,line cap=butt,line join=bevel,line 
	width=1.0pt,miter limit=4.0] (63.826, 34.751) -- (83.812, 69.536) -- (105.106,
	32.653) -- (84.722, 32.749) -- (68.719, 35.269) -- cycle;

	\path[draw=black,fill=cefffff,line cap=butt,line join=bevel,line 
	width=1.0pt,miter limit=4.0] (20.53, 34.033) -- (41.196, 69.828) -- (61.926, 
	34.431) -- (54.546, 34.033) -- cycle;

	\path[draw=black,fill=ce0ffdc,line join=bevel,line width=1.0pt] (93.007, 
	38.849).. controls (84.427, 39.007) and (73.888, 36.419) .. (63.867, 34.991) 
	-- (64.084, 22.524).. controls (81.163, 21.779) and (106.057, 15.281) .. 
	(105.57, 29.782).. controls (105.343, 36.531) and (100.243, 38.716) .. 
	(93.007, 38.849) -- cycle;

	\path[draw=black,fill=ce0ffdc,line join=bevel,line width=1.0pt] (62.096, 22.6)
	-- (61.879, 34.736).. controls (57.817, 34.157) and (54.955, 33.905) .. 
	(51.71, 34.033).. controls (41.325, 34.442) and (27.879, 41.382) .. (20.53, 
	34.033).. controls (17.857, 31.36) and (17.857, 25.367) .. (20.53, 22.695).. 
	controls (29.215, 14.009) and (45.13, 21.799) .. (57.381, 22.695).. controls 
	(59.103, 22.821) and (62.096, 22.6) .. (62.096, 22.6) -- cycle;

	\path (20.53, 34.033) -- (62.42, 106.588);

	\path (105.106, 32.653) -- (62.42, 106.588);

	\path (65.884, 59.545) -- (50.572, 86.067);

	\path (65.884, 59.545) -- (77.732, 80.066);

	\path[->,draw=red,line cap=butt,line join=miter,line width=0.5pt,miter limit=4.0]
	(109.583, 32.752).. controls (109.583, 32.752) and (107.335, 35.095) .. 
	(104.293, 34.081).. controls (99.984, 32.644) and (101.673, 27.973) .. 
	(98.855, 27.302).. controls (95.689, 26.547) and (93.218, 31.485) .. (89.966, 
	31.347).. controls (87.068, 31.223) and (85.268, 27.351) .. (82.379, 27.083)..
	controls (79.732, 26.837) and (77.453, 29.239) .. (74.825, 29.644).. controls
	(72.235, 30.043) and (66.966, 29.478) .. (66.966, 29.478);

	\node[text=c001764,line cap=butt,line join=miter,line width=1.0pt,miter 
	limit=4.0,anchor=south west] (text11) at (-26, 49.946){$\DChron^+(A)$};

	\node[text=c001764,line cap=butt,line join=miter,line width=1.0pt,miter 
	limit=4.0,anchor=south west] (text11-6) at (97, 49.615){$\DChron^+(A)$};

	\node[text=c640000,line cap=butt,line join=miter,line width=1.0pt,miter 
	limit=4.0,anchor=south west] (text12) at (75.823, 89.221){$\DLeq^+(A)$};

	\node[text=c0d6400,line cap=butt,line join=miter,line width=1.0pt,miter 
	limit=4.0,anchor=south west] (text1-0) at (24.069, 25.132){$A$};

		%
		%
		%
		%
		%
		%
		%
		%
		%
		%
		%
		%
		%
		%
		%
		%
		%
		%
		%
		%
		%
		%
		%
		%
		%
		%
		%
	
\end{tikzpicture}
		\caption{Curve removed.}
	\end{subfigure}
	\begin{subfigure}[b]{0.7\textwidth}\centering
		\definecolor{cffebdc}{RGB}{255,235,220}
\definecolor{cfe0000}{RGB}{254,0,0}
\definecolor{ce0ffdc}{RGB}{224,255,220}
\definecolor{c0d6400}{RGB}{13,100,0}
\definecolor{c640000}{RGB}{100,0,0}
\definecolor{cefffff}{RGB}{239,255,255}
\definecolor{cffffdc}{RGB}{255,255,220}
\definecolor{c001764}{RGB}{0,23,100}
\definecolor{c644700}{RGB}{100,71,0}

\def \globalscale {1.000000}
\begin{tikzpicture}[y=.8pt, x=.8pt, yscale=\globalscale,xscale=\globalscale, every node/.append style={scale=\globalscale}, inner sep=0pt, outer sep=0pt]
	\path[draw=black,fill=cffebdc,line cap=butt,line join=miter,line 
	width=1.0pt,miter limit=4.0] (20.53, 34.033) -- (63.78, 27.545) -- (105.106, 
	32.653) -- (82.204, 72.322) -- (71.554, 53.876) -- (57.38, 53.876) -- (44.683,
	75.868) -- (20.53, 34.033);

	\path[draw=black,line cap=butt,line join=miter,line width=0.5pt,miter 
	limit=4.0,dash pattern=on 0.5pt off 2.0pt] (20.53, 34.033) -- (44.683, 75.868)
	-- (57.38, 53.876) -- (71.554, 53.876) -- (82.204, 72.322) -- (105.106, 
	32.653);

	\node[text=cfe0000,line cap=butt,line join=miter,line width=1.0pt,miter 
	limit=4.0,anchor=south west] (text16) at (105.751, 64.44){$\text{region}$};

	\node[text=cfe0000,line cap=butt,line join=miter,line width=1.0pt,miter 
	limit=4.0,anchor=south west] (text17) at (101.325, 57.313){$\text{removed}$};

	\path[draw=black,fill=ce0ffdc,line cap=butt,line join=miter,line 
	width=1.0pt,miter limit=4.0] (20.53, 34.033).. controls (17.857, 31.361) and 
	(17.857, 25.367) .. (20.53, 22.695).. controls (29.216, 14.009) and (45.13, 
	21.799) .. (57.38, 22.695).. controls (73.573, 23.879) and (106.115, 13.555) 
	.. (105.569, 29.781).. controls (104.964, 47.78) and (69.706, 33.324) .. 
	(51.711, 34.033).. controls (41.325, 34.443) and (27.879, 41.383) .. (20.53, 
	34.033) -- cycle;

	\node[text=c0d6400,line cap=butt,line join=miter,line width=1.0pt,miter 
	limit=4.0,anchor=south west] (text1) at (24.069, 25.132){$A$};

	\path[->,draw=red,line cap=butt,line join=miter,line width=0.5pt,miter limit=4.0]
	(103.889, 65.296).. controls (103.889, 65.296) and (99.785, 67.535) .. 
	(97.954, 66.66).. controls (95.371, 65.424) and (96.97, 59.644) .. (94.195, 
	58.933).. controls (91.23, 58.174) and (90.942, 66.283) .. (86.704, 64.245).. 
	controls (84.744, 63.303) and (84.168, 60.513) .. (82.263, 59.464).. controls 
	(80.786, 58.651) and (77.291, 58.536) .. (77.291, 58.536);

	\node[text=c640000,line cap=butt,line join=miter,line width=1.0pt,miter 
	limit=4.0,anchor=south west] (text11) at (48.057, 76.844){$\DLeq^+(A)$};

	\path[pattern={Lines[angle = 45,distance = 1.5pt,line width = .5pt]}, pattern color=red,draw=red,line cap=butt,line join=miter,line width=1.0pt,miter limit=4.0]
	(57.38, 62.38) -- (71.554, 62.38) -- (71.554, 53.876) -- (57.38, 53.876) -- 
	cycle;

		%
		%
		%
		%
		%
		%
		%
		%
		%
		%
		%
		%
		%
		%
		%
		%
		%
		%
		%
		%
		%
		%
		%
		%
		%
		%
		%
	
\end{tikzpicture}
		\caption{Closed region removed.}
	\end{subfigure}
	\caption{Illustration of differences between localic and curve-wise domains of dependence in Minkowski spaces with either a point or a larger closed region removed. In (b) a curve is removed from $A$ (not the spacetime).}
	\label{figure:differences domains of dependence}
\end{figure}

In conclusion, the domains $\Dcaus^+$ and $\Dchron^+$ defined using inextendible curves differ from the domains $\DCaus^+$ and $\DChron^+$ defined in terms of bounded curves when we remove a point from the spacetime. The domains $\DCaus^+$ and $\DChron^+$ differ from the localic domain $\DLeq^+$ as soon as we remove a curve from $A$. In summary, we have:
\[
\begin{tikzcd}[cramped,sep=small]
	{\Dcaus^+} & {\Dchron^+} \\
	{\DCaus^+} & {\DChron^+} & {\DLeq^+.}
	\arrow["\subsetneq"{marking, allow upside down}, draw=none, from=1-1, to=1-2]
	\arrow["\subsetneq"{marking, allow upside down}, draw=none, from=1-1, to=2-1]
	\arrow["\subsetneq"{marking, allow upside down}, draw=none, from=2-1, to=2-2]
	\arrow["\subsetneq"{marking, allow upside down}, draw=none, from=1-2, to=2-2]
	\arrow["\subsetneq"{marking, allow upside down}, draw=none, from=2-2, to=2-3]
\end{tikzcd}
\]
\section{Conclusions}\label{section:conclusions}
We conclude the paper by discussing some avenues for future work.
\subsection{Deterministic sheaves}
The incarnation of causal coverages as a type of Grothendieck topology (\cref{theorem:causal grothendieck topology}) raises the question of \emph{sheaves} \cite{maclane1994SheavesGeometryLogic}. Recall that a \emph{presheaf} on a locale~$X$ is a functor $F\colon \Opens X^\op\to \Set$. That is, every region $U\in\Opens X$ is assigned a set $F(U)$, thought of `local data' living on $U$. If $U\sqleq V$ then we get a \emph{restriction function} $(-)|_U\colon F(V)\to F(U)$, interpreted as forgetting all data in $V$ that is not part of $U$. An important example is the presheaf of real-valued functions on a topological space~$S$:
\[
C(-,\mathbb{R})\colon \Opens S^\op\longrightarrow \Set.
\]
A region $U\in\Opens S$ gets sent to the set of continuous real-valued functions $C(U,\mathbb{R})$, and a real-valued function $f\colon V\to \mathbb{R}$ literally restricts to $f|_U$.

\emph{Sheaves} are presheaves where compatible `local data' amalgamates uniquely into `global data'. The presheaf $C(-,\mathbb{R})$ is a sheaf: if functions $f_i\colon U_i\to\mathbb{R}$ are defined on a family of opens $(U_i)_{i\in I}$ that cover some region $U\in\Opens S$, and if this family is compatible in the sense that they agree on all the intersections, then there exists a unique function $f\in C(U,\mathbb{R})$ that recovers the original family by restricting to the subregions: $f|_{U_i}=f_i$.
More generally, a presheaf $F$ is a sheaf if any compatible family defined on a covering sieve has a unique amalgamation on the covered object. The main task of a Grothendieck topology to is determine which presheaves are sheaves. See \cite[\S III.4]{maclane1994SheavesGeometryLogic} for details.

Extrapolating this behaviour, an appropriate sheaf condition with respect to the causal coverage relation on an ordered locale can instead be interpreted as:
\begin{quotation}
	\emph{`past data' evolves deterministically into `future data'.}
\end{quotation}
Since the future domain of dependence $D^+(A)$ is covered by $A$ from the past, $A\in\Cov^-(D^+(A))$, the sheaf condition in particular implies that any data living on $A$ has to propagate deterministically to $D^+(A)$. Domains of dependence in relativity theory are originally already used in this way to study how solutions to the Einstein equations evolve on Cauchy surfaces, see \cite[\S 7]{hawking1973LargeScaleStructure} for an overview. This raises the question if results and conditions involving domains of dependence such as \emph{global hyperbolicity} may be restated sheaf-theoretically. 

An archetypal example of a sheaf with respect to the causal coverage relation would be the following. 

\begin{example}
	Consider Minkowski space $\mathbb{R}^{n}$, together with the presheaf of solutions to the wave equation $F\colon (\Opens \mathbb{R}^n)^\op \to \Set$, defined by
	\[
	F(U):= \{\text{solutions to the wave equation on $U$}\},
	\]
	and where $F(U\subseteq V)$ is just the restriction map of functions. This should define a deterministic sheaf, since we know from the theory of partial differential equations that initial data on an open region $U$ uniquely determines a solution to the wave equation on $D^+(U)$~\cite[\S 2.4]{evans2022PartialDifferentialEquations}.
\end{example}

For technical reasons it is not clear how to formulate a precise sheaf condition for the modified definition of Grothendieck topologies in \cref{definition:modified grothendieck topology}. 
A good definition of causal sheaf will also transfer results from causality and relativity theory to concurrency in computer science~\cite{lamport1978TimeClocksOrdering,enriquemoliner2020TensorTopology,soaresbarbosa2023SheafRepresentationMonoidal}.
Some remarks are in \cite[\S 9.6]{schaaf2024TowardsPointFreeSpacetimes}, but we leave this development to future work.

\subsection{Holes in spacetime}\label{section:holes in spacetime}
As briefly mentioned in the \nameref{section:introduction}, there is a philosophical problem of \emph{hole-freeness} of spacetimes \cite{krasnikov2009EvenMinkowskiSpace,manchak2009SpacetimeHolefree}. An accessible overview of this problem is in \cite{robertsNotesOnHoles}.
Adopting the definition on \cite[p3]{krasnikov2009EvenMinkowskiSpace}, a spacetime $M$ is called \emph{hole-free} if for any achronal hypersurface $S\subseteq M$ and any isometric embedding $\pi$ of an open neighbourhood $U$ of the future domain of dependence $\Dchron^+(S)$ into another spacetime $M'$, we have ${\pi(\Dchron^+(S)) = \Dchron^+(\pi(S))}$. It is easy to see that the examples in \cref{figure:differences domains of dependence}(a) and (c) provide spacetimes that are not hole-free, since they can be embedded into Minkowski space, wherein the domains of dependence suddenly~`grow'.
However, the paper \cite{krasnikov2009EvenMinkowskiSpace} shows that, paradoxically, even Minkowski space is not hole-free in this sense. Does using the \emph{localic} domain of dependence $\DLeq^+$ solve this problem?

\printbibliography
\end{document}